\def\X{{\mathcal{X}}}
\def\Y{{\mathcal{Y}}}
\def\U{{\mathcal{U}}} 
\def\Z{{\mathcal{Z}}} 
\def\Simplex{{\mathcal{P}}}
\def\Gauss{{\mathcal{N}}}
\def\E{{\mathbb{E}}}
\def\VAR{{\mathbb{V}\mathbb{A}\mathbb{R}}}
\def\indep{\perp\!\!\!\perp}
\def\R{{\mathbb{R}}}
\def\C{{\mathbb{C}}}
\def\P{{\mathbb{P}}}
\def\N{{\mathbb{N}}}
\def\etaKL{{\eta_{\textsf{\tiny KL}}}}
\def\etaKLp{{\eta_{\textsf{\tiny KL}}^{(p)}}}
\def\etaChi{{\eta_{\chi^2}}}
\def\etaTV{{\eta_{\textsf{\tiny TV}}}}
\def\1{{\textbf{1}}}
\def\0{{\textbf{0}}}
\def\I{{\mathbbm{1}}}
\DeclareMathOperator*{\argmin}{arg\,min}
\DeclareMathOperator*{\esssupp}{ess\,supp}
\newtheorem{theorem}{Theorem}
\newtheorem{lemma}[theorem]{Lemma}
\newtheorem{proposition}[theorem]{Proposition}
\newtheorem{corollary}[theorem]{Corollary}
\theoremstyle{definition}
\newtheorem{definition}{Definition}
\theoremstyle{remark}
\renewcommand{\proofname}{\bfseries \emph{Proof}}
\begin{document}

\title{Linear Bounds between Contraction\\Coefficients for $f$-Divergences}

\author{Anuran Makur and Lizhong Zheng$^{\ast}$
\thanks{$^{\ast}$A. Makur and L. Zheng are with the Department of Electrical Engineering and Computer Science, Massachusetts Institute of Technology, Cambridge, MA 02139, USA (e-mail: a\_makur@mit.edu; lizhong@mit.edu).

This research was supported in part by the National Science Foundation under Award 1216476 and in part by the Hewlett-Packard Fellowship. 

This work was presented in part at the 2015 53rd Annual Allerton Conference on Communication, Control, and Computing \cite{BoundsbetweenContractionCoefficients}.}}%

\maketitle

\thispagestyle{plain}
\pagestyle{plain}

\begin{abstract}
Data processing inequalities for $f$-divergences can be sharpened using constants called ``contraction coefficients'' to produce strong data processing inequalities. For any discrete source-channel pair, the contraction coefficients for $f$-divergences are lower bounded by the contraction coefficient for $\chi^2$-divergence. In this paper, we elucidate that this lower bound can be achieved by driving the input $f$-divergences of the contraction coefficients to zero. Then, we establish a linear upper bound on the contraction coefficients for a certain class of $f$-divergences using the contraction coefficient for $\chi^2$-divergence, and refine this upper bound for the salient special case of Kullback-Leibler (KL) divergence. Furthermore, we present an alternative proof of the fact that the contraction coefficients for KL and $\chi^2$-divergences are equal for a Gaussian source with an additive Gaussian noise channel (where the former coefficient can be power constrained). Finally, we generalize the well-known result that contraction coefficients of channels (after extremizing over all possible sources) for all $f$-divergences with non-linear operator convex $f$ are equal. In particular, we prove that the so called ``less noisy'' preorder over channels can be equivalently characterized by any non-linear operator convex $f$-divergence.
\end{abstract}

\tableofcontents
\thispagestyle{plain} 
\hypersetup{linkcolor = red}

\section{Introduction}
\label{Introduction}

Strong data processing inequalities for Kullback-Leibler divergence (KL divergence or relative entropy) and mutual information \cite{AhlswedeGacsHypercontraction, ErkipCover, RenyiCorrHypercontractivity, HypercontractivityBooleanFunc, NoninteractiveSimulations}, and more generally (Csisz\'{a}r) $f$-divergences \cite{ContractionCoefficients,GraphSDPI,SDPIandSobolevInequalities}, have been studied extensively in various contexts in information theory. They are obtained by tightening traditional data processing inequalities using distribution dependent constants known as ``contraction coefficients.'' Contraction coefficients for $f$-divergences come in two flavors: those pertaining to source-channel pairs, and those pertaining only to channels. The broad goal of this paper is to study various inequalities and equalities between contraction coefficients in both settings. In the source-channel pair setting, we will primarily establish general bounds on contraction coefficients for certain classes of $f$-divergences, as well as specific bounds on the contraction coefficient for KL divergence, in terms of the contraction coefficient for $\chi^2$-divergence (or squared Hirschfeld-Gebelein-R\'{e}nyi maximal correlation). On the other hand, in the channel only setting, we will prove an appropriate generalization of the well-known result that the contraction coefficient for KL divergence is equal to the contraction coefficient for any $f$-divergence with non-linear operator convex $f$ \cite{ChoiRuskaiSeneta1994}.

\subsection{Outline}

We briefly delineate the remainder of our discussion. We will first provide an overview of the burgeoning literature on contraction coefficients in section \ref{Overview of Contraction Coefficients}. This section will compile formal definitions and key properties of both the aforementioned variants of contraction coefficients, and briefly outline their genesis in the study of ergodicity. Then, we will state and explain our main results, and discuss related literature in section \ref{Main Results}. In section \ref{Proofs of Linear Bounds between Contraction Coefficients}, we will present some useful bounds between $f$-divergences and $\chi^2$-divergence, and use them to prove linear upper bounds on contraction coefficients of source-channel pairs for a certain class of $f$-divergences and KL divergence. Following this, we will prove the equivalence between certain contraction coefficients of Gaussian sources with additive Gaussian noise channels in section \ref{Proof of Gaussian Contraction Coefficients}. In section \ref{Proof of Equivalent Characterizations of the Less Noisy Preorder}, we will prove equivalent characterizations of the less noisy preorder over channels using non-linear operator convex $f$-divergences by generalizing the main result of \cite{ChoiRuskaiSeneta1994}. Finally, we will conclude our discussion and propose future research directions in section \ref{Conclusion}.

\section{Overview of Contraction Coefficients}
\label{Overview of Contraction Coefficients}

In this section, we will define contraction coefficients for $f$-divergences and present some well-known facts about them. We begin by introducing some preliminary definitions and notation pertaining to $f$-divergences in subsection \ref{f-Divergence}, and then give a brief prelude on contraction coefficients and strong data processing inequalities in the ensuing subsections.

\subsection{$f$-Divergence}
\label{f-Divergence}

Consider a discrete sample space $\X \triangleq \{1,\dots,|\X|\}$ with $2 \leq |\X| < +\infty$, where we let singletons in $\X$ be natural numbers without loss of generality. Let $\Simplex_{\X} \subseteq \left(\R^{|\X|}\right)^{\! *}$ denote the probability simplex in $\left(\R^{|\X|}\right)^{\! *}$ of all probability mass functions (pmfs) on $\X$, where $\left(\R^{|\X|}\right)^{\! *}$ is the dual vector space of $\R^{|\X|}$ consisting of all row vectors of length $|\X|$. We perceive $\Simplex_{\X}$ as the set of all possible probability distributions of a random variable $X$ with range $\X$, and construe each pmf $P_X \in \Simplex_{\X}$ as a row vector $P_X = (P_X(1),\dots,P_X(|\X|)) \in \left(\R^{|\X|}\right)^{\! *}$. We also let $\Simplex_{\X}^{\circ} \triangleq \left\{P_X \in \Simplex_{\X} : \forall x \in \X, \, P_X(x) > 0\right\}$ denote the relative interior of $\Simplex_{\X}$. A popular notion of ``distance'' between pmfs in information theory is the $f$-divergence, which was independently introduced by Csisz\'{a}r in \cite{OriginalfDivergenceGerman,OriginalfDivergence} and by Ali and Silvey in \cite{AliSilveyDistance}. 

\begin{definition}[$f$-Divergence \cite{OriginalfDivergenceGerman,OriginalfDivergence,AliSilveyDistance}]
\label{Def: f-Divergence}
Given a convex function $f:(0,\infty) \rightarrow \R$ that satisfies $f(1) = 0$, we define the \textit{$f$-divergence} of a pmf $P_X \in \Simplex_{\X}$ from a pmf $R_X \in \Simplex_{\X}$ as:
\begin{align}
\label{Eq: f-Divergence}
D_{f}\!\left(R_X || P_X\right) & \triangleq \sum_{x \in \X}{P_X(x) f\!\left(\frac{R_X(x)}{P_X(x)}\right)} \\
& = \E_{P_X}\!\left[f\!\left(\frac{R_X(X)}{P_X(X)}\right)\right]
\end{align}
where $\E_{P_X}\!\left[\cdot\right]$ denotes the expectation with respect to $P_X$, and we assume that $f(0) = \lim_{t \rightarrow 0^{+}}{f(t)}$, $0f(0/0) = 0$, and for all $r > 0$, $0 f(r/0) = \lim_{p \rightarrow 0^{+}}{pf(r/p)} = r \lim_{p \rightarrow 0^{+}}{p f(1/p)}$, based on continuity and other considerations (see \cite[Section 3]{LieseVajda2006} for details). 
\end{definition}

The $f$-divergences generalize several well-known divergence measures that are used in information theory, statistics, and probability theory. We present some examples below:

\begin{enumerate}
\item \textit{Total variation (TV) distance}: When $f(t) = \frac{1}{2}|t-1|$, the corresponding $f$-divergence is the TV distance:
\begin{align}
\left\|R_X - P_X\right\|_{\textsf{\tiny TV}} & \triangleq \max_{A \subseteq \X}{\left|R_X(A) - P_X(A)\right|} \\
& = \frac{1}{2}\left\|R_X - P_X\right\|_{1}
\label{Eq: Total Variation Distance}
\end{align}
where $P_X(A) = \sum_{x \in A}{P_X(x)}$ for any $A \subseteq \X$, $\left\|\cdot\right\|_{p}$ denotes the $\ell^p$-norm for $p \in [1,\infty]$, and the second equality as well as several other characterizations of TV distance are proved in \cite[Chapter 4]{MarkovMixing}.
\item \textit{KL divergence}: When $f(t) = t \log(t)$ (where $\log(\cdot)$ is the natural logarithm with base $e$ throughout this paper), the corresponding $f$-divergence is the KL divergence:
\begin{equation}
\label{Eq:KL Divergence}
D\!\left(R_X||P_X\right) \triangleq \sum_{x \in \X}{R_X(x) \log\!\left(\frac{R_X(x)}{P_X(x)}\right)} .
\end{equation}
\item \textit{$\chi^2$-divergence}: When $f(t) = (t - 1)^2$ or $f(t) = t^2 - 1$, the corresponding $f$-divergence is the $\chi^2$-divergence:
\begin{equation}
\label{Eq:Chi-Squared Divergence}
\chi^2(R_X||P_X) \triangleq \sum_{x \in \X}{\frac{\left(R_X(x) - P_X(x)\right)^2}{P_X(x)}} .
\end{equation}
\item \textit{Hellinger divergence of order $\alpha \in (0,\infty)\backslash\!\{1\}$} \cite[Definition 2.10]{LieseVajda1987}: When $f(t) = \frac{t^{\alpha} - 1}{\alpha-1}$, the corresponding $f$-divergence is the Hellinger divergence (or \textit{Tsallis divergence}) of order $\alpha$:
\begin{equation}
\label{Eq: Tsallis Divergence}
\mathcal{H}_{\alpha}(R_X||P_X) \triangleq \frac{1}{\alpha - 1}\left(\sum_{x \in \X}{R_X(x)^{\alpha} P_X(x)^{1-\alpha}} - 1\right)
\end{equation}
where $\frac{1}{2}\mathcal{H}_{\frac{1}{2}}(R_X||P_X)$ is the \textit{squared Hellinger distance}, $\mathcal{H}_{2}(R_X||P_X) = \chi^2(R_X||P_X)$ is the $\chi^2$-divergence, and $\alpha = 1$ corresponds to KL divergence, $\mathcal{H}_{1}(R_X||P_X) = D(R_X||P_X)$, by analytic extension, cf. \cite[Section II]{fDivergenceBounds}.
\item \textit{Vincze-Le Cam divergence of order $\lambda \in (0,1)$ \cite{LeCam1986,Vincze1981,GyorfiVajda2001}}: When $f(t) = \lambda \bar{\lambda} \frac{(t-1)^2}{\lambda t + (1-\lambda)}$, where $\bar{\lambda} = 1 - \lambda$, the corresponding $f$-divergence is the Vincze-Le Cam divergence of order $\lambda$:
\begin{align}
\label{Eq: Le Cam Divergence}
\textsf{\small LC}_{\lambda}(R_X||P_X) & \triangleq \lambda \bar{\lambda} \sum_{x \in \X}{\frac{(R_X(x) - P_X(x))^2}{\lambda R_X(x) + \bar{\lambda} P_X(x)}} \\
& = \frac{\lambda}{\bar{\lambda}} \chi^2(R_X||\lambda R_X + \bar{\lambda} P_X)
\label{Eq: Chi-Squared Characterization of Le Cam Divergence}
\end{align}
where the special case of $\lambda = \frac{1}{2}$ is known as the \textit{Vincze-Le Cam distance} or \textit{triangular discrimination}.
\end{enumerate}

Although $f$-divergences are not valid metrics in general,\footnote{We often distinguish $f$-divergences that are metrics by dubbing them as ``distances'' (e.g. TV distance, Hellinger distance), while the term ``divergence'' is reserved for $f$-divergences that are not metrics (e.g. KL divergence, $\chi^2$-divergence).} they satisfy several useful properties. To present some of these properties, we let $\Y \triangleq \{1,\dots,|\Y|\}$ denote another discrete alphabet with $2 \leq |\Y| < +\infty$, and corresponding probability simplex $\Simplex_{\Y}$ of possible pmfs of a random variable $Y$ with range $\Y$. Furthermore, we let $\Simplex_{\Y|\X}$ denote the set of $|\X| \times |\Y|$ row stochastic matrices in $\R^{|\X| \times |\Y|}$. Throughout our discussion, the discrete channel of conditional pmfs $\left\{P_{Y|X = x} \in \Simplex_{\Y} : x \in \X\right\}$ will correspond to a transition probability matrix $W \in \Simplex_{\Y|\X}$ (where the $x$th row of $W$ is $P_{Y|X = x}$). We interpret $W:\Simplex_{\X} \rightarrow \Simplex_{\Y}$ as a map that takes input pmfs $P_X \in \Simplex_{\X}$ to output pmfs $P_Y = P_X W \in \Simplex_{\Y}$. Some well-known properties of $f$-divergences are presented next, cf. \cite{OriginalfDivergenceGerman,OriginalfDivergence}:

\begin{enumerate}
\item \textit{Non-negativity and reflexivity}: For every $R_X,P_X \in \Simplex_{\X}$, $D_{f}(R_X||P_X) \geq 0$ (by Jensen's inequality) with equality if $R_X = P_X$. Furthermore, if $f$ is strictly convex at unity,\footnote{Strict convexity of $f:(0,\infty) \rightarrow \R$ at unity implies that for every $x,y \in (0,\infty)$ and $\lambda \in (0,1)$ such that $\lambda x + (1-\lambda) y = 1$, $\lambda f(x) + (1-\lambda) f(y) > f(1)$. The aforementioned examples of $f$-divergences have this property.} then equality holds if and only if $R_X = P_X$.
\item \textit{Affine invariance}: Consider any affine function $\alpha(t) = a(t-1)$ with $a \in \R$. Clearly, $D_{\alpha}(R_X||P_X) = 0$ for every $R_X,P_X \in \Simplex_{\X}$. Hence, $f$ and $f+\alpha$ define the same $f$-divergence, i.e. $D_{f + \alpha}(R_X||P_X)\! =\! D_{f}(R_X||P_X)$ for every $R_X,P_X \in \Simplex_{\X}$.\footnote{Note that $f+\alpha:(0,\infty) \rightarrow \R$ is the function $(f+\alpha)(t) = f(t) + a(t-1)$.} 
\item \textit{Csisz\'{a}r duality}: Let the \textit{Csisz\'{a}r conjugate} function of $f$ be $f^{*} : (0,\infty) \rightarrow \R$, $f^{*}(t) = t f\!\left(\frac{1}{t}\right)$, which is also convex and satisfies $f^{**} = f$. Then, $D_{f^{*}}(P_X||R_X) = D_{f}(R_X||P_X)$ for every $R_X,P_X \in \Simplex_{\X}$.\footnote{Note also that $f$ is strictly convex at unity if and only if $f^*$ is strictly convex at unity.}  
\item \textit{Joint convexity}: The map $(R_X,P_X) \mapsto D_{f}(R_X||P_X)$ is convex in the pair of input pmfs.
\item \textit{Data processing inequality (DPI)}: For every $W \in \Simplex_{\Y|\X}$, and every $R_X,P_X \in \Simplex_{\X}$, we have (by the convexity of perspective functions):
\begin{equation}
\label{Eq:DPI for f-Divergence}
D_{f}\!\left(R_X W||P_X W\right) \leq D_{f}\!\left(R_X||P_X\right) 
\end{equation}
where equality holds if and only if $Y$ is a sufficient statistic of $X$ for performing inference about the pair $(R_X,P_X)$ (see e.g. \cite[Section 3.1]{InfoTheoryNotes}).  
\end{enumerate}

While \cite{OriginalfDivergence} and \cite[Section 2]{MutualfInformation} contain the original presentation of these properties, we also refer readers to \cite[Section 6]{InfoTheoryNotes} for a more didactic presentation. Note that due to property $2$, we only consider $f$-divergences with non-linear $f$.

We next define a notion of ``information'' between random variables corresponding to any $f$-divergence that also exhibits a DPI. For random variables $X$ and $Y$ with joint pmf $P_{X,Y}$ (consisting of $(P_X,W)$), the \textit{mutual $f$-information} between $X$ and $Y$ is defined as \cite{StochasticMatricesAndContractionCoefficients}:
\begin{align}
I_{f}\!\left(X;Y\right) & \triangleq D_{f}\!\left(P_{X,Y} || P_X P_Y\right) \\
& = \sum_{x \in \X}{P_X(x) D_{f}\!\left(P_{Y|X=x}||P_Y\right)}
\end{align}
where $P_X P_Y$ denotes the product distribution specified by the marginal pmfs $P_X$ and $P_Y$ (also see \cite[Equation (V.8)]{SDPIandSobolevInequalities}, \cite[Equation (11)]{Calmonetal2017}). When $f(t) = t\log(t)$, mutual $f$-information corresponds to standard mutual information. Moreover, mutual $f$-information possesses certain natural properties of information measures. For example, if $X$ and $Y$ are independent, then $I_{f}(X;Y) = 0$, and the converse holds when $f$ is strictly convex at unity. 

Now suppose $U$ is another random variable with discrete alphabet $\U \triangleq \{1,\dots,|\U|\}$ such that $2 \leq |\U| < +\infty$. If $(U,X,Y)$ are jointly distributed and form a Markov chain $U \rightarrow X \rightarrow Y$, then they satisfy the DPI \cite{MutualfInformation}:\footnote{Although Csisz\'{a}r studies a different notion known as \textit{$f$-informativity} in \cite{MutualfInformation}, \eqref{Eq:DPI for Mutual f-Information} can be distilled from the proof of \cite[Proposition 2.1]{MutualfInformation}.} 
\begin{equation}
\label{Eq:DPI for Mutual f-Information}
I_{f}\!\left(U;Y\right) \leq I_{f}\!\left(U;X\right)
\end{equation}
where equality holds if and only if $Y$ is a sufficient statistic of $X$ for performing inference about $U$ (i.e. $U \rightarrow Y \rightarrow X$ also forms a Markov chain). 
Needless to say, the DPIs \eqref{Eq:DPI for f-Divergence} and \eqref{Eq:DPI for Mutual f-Information} are generalizations of the better known DPIs for KL divergence and mutual information that can be found in standard texts on information theory, e.g. \cite{CoverThomas}. Finally, note that although we cite \cite{OriginalfDivergenceGerman,OriginalfDivergence} and \cite{MutualfInformation} for the DPIs \eqref{Eq:DPI for f-Divergence} and \eqref{Eq:DPI for Mutual f-Information} respectively, both DPIs were also independently proved in \cite{ZivZakai1973,ZakaiZiv1975}.\footnote{In particular, Ziv and Zakai studied \textit{generalized information functionals} in \cite{ZakaiZiv1975}, and a specialization of \cite[Theorem 5.1]{ZakaiZiv1975} yields $D_{f}(P_{U}P_{Y}||P_{U,Y}) \leq D_{f}(P_{U}P_{X}||P_{U,X})$ for any Markov chain $U \rightarrow X \rightarrow Y$. By the Csisz\'{a}r duality property of $f$-divergences, this implies \eqref{Eq:DPI for Mutual f-Information}.}  

We end this subsection with a brief exposition of the ``local quadratic behavior'' of $f$-divergences. Local approximations of $f$-divergences are geometrically appealing because they transform neighborhoods of stochastic manifolds, with certain $f$-divergences as the distance measures, into inner product spaces with the Fisher information metric \cite{EuclideanInfoTheory,InfoCouplingConf,InformationGeometry}. Consider any reference pmf $P_X \in \Simplex_{\X}^{\circ}$ (which forms the ``center of the local neighborhood'' of pmfs that we will be concerned with), and any other pmf $R_X \in \Simplex_{\X}$. Let us define the \textit{spherical perturbation} vector of $R_X$ from $P_X$ as:
\begin{equation}
\label{Eq:Spherical Perturbation}
K_X \triangleq (R_X - P_X) \, \textsf{\small diag}\!\left(\sqrt{P_X}\right)^{\!-1} 
\end{equation}
where $\sqrt{\cdot}$ denotes the entry-wise square root when applied to a vector, and $\textsf{\small diag}(\cdot)$ denotes a diagonal matrix with its input vector on the principal diagonal. Using $K_X$, we can construct a trajectory of spherically perturbed pmfs:
\begin{align}
\label{Eq: Spherically perturbed pmf}
R_X^{(\epsilon)} & = P_X + \epsilon \, K_X \, \textsf{\small diag}\!\left(\sqrt{P_X}\right) \\
& = (1 - \epsilon) P_X + \epsilon R_X
\end{align}
which is parametrized by $\epsilon \in (0,1)$, and corresponds to the convex combinations of $R_X$ and $P_X$. Note that $K_X$ provides the direction of the trajectory \eqref{Eq: Spherically perturbed pmf}, and $\epsilon$ controls how close $R_X^{(\epsilon)}$ and $P_X$ are. Furthermore, \eqref{Eq: Spherically perturbed pmf} clarifies why $K_X$ is called a ``spherical perturbation'' vector; $K_X$ is proportional to the first order perturbation term as $\epsilon \rightarrow 0$ of $\sqrt{R_X^{(\epsilon)}}$ from $\sqrt{P_X}$, which are embeddings of the pmfs $R_X^{(\epsilon)}$ and $P_X$ as vectors on the unit sphere in $\left(\R^{|\X|}\right)^{\! *}$.

Now suppose the function $f:(0,\infty) \rightarrow \R$ that defines our $f$-divergence is twice differentiable at unity with $f^{\prime \prime}(1) > 0$. Then, Taylor's theorem can be used to show that this $f$-divergence is locally proportional to $\chi^2$-divergence, cf. \cite[Section 4]{InfoTheoryStats} (or \cite{CoverThomas} for the KL divergence case):
\begin{align}
\label{Def:Local f-Divergence}
D_{f}(R_X^{(\epsilon)}||P_X) & = \frac{f^{\prime\prime}(1)}{2}\epsilon^2 \chi^2(R_X||P_X) + o\!\left(\epsilon^2\right) \\
& =  \frac{f^{\prime\prime}(1)}{2}\epsilon^2 \left\|K_X\right\|_2^2 + o\!\left(\epsilon^2\right)
\label{Def:Spherical Local f-Divergence}
\end{align}
where we use the Bachmann-Landau asymptotic little-$o$ notation.\footnote{Given two functions $f(\epsilon)$ and $g(\epsilon)$ such that $g(\epsilon)$ is non-zero, we write $f(\epsilon) = o\!\left(g(\epsilon)\right)$ if and only if $\lim_{\epsilon \rightarrow 0}{f(\epsilon)/g(\epsilon)} = 0$.} The local approximation in \eqref{Def:Spherical Local f-Divergence} is somewhat more flexible than the version in \eqref{Def:Local f-Divergence}. Indeed, we can construct a trajectory \eqref{Eq: Spherically perturbed pmf} using a spherical perturbation vector $K_X \in \left(\R^{|\X|}\right)^{\! *}$ that satisfies the orthogonality constraint $\sqrt{P_X} K_X^T = 0$, but is not of the form \eqref{Eq:Spherical Perturbation}. For sufficiently small $\epsilon \neq 0$ (depending on $P_X$ and $K_X$), the vectors $R_X^{(\epsilon)}$ defined by \eqref{Eq: Spherically perturbed pmf} are in fact valid pmfs in $\Simplex_{\X}$.\footnote{For larger values of $\epsilon$ (in magnitude), although $R_X^{(\epsilon)}$ always sums to $1$ since $\sqrt{P_X} K_X^T = 0$, it may not be entry-wise non-negative.} So, the approximation in \eqref{Def:Spherical Local f-Divergence} continues to hold because it is concerned with the regime where $\epsilon \rightarrow 0$.

It is also straightforward to verify that $f$-divergences with $f^{\prime \prime}(1) > 0$ are locally symmetric, i.e. $D_{f}(R_X^{(\epsilon)}||P_X) = D_{f}(P_X||R_X^{(\epsilon)}) + o(\epsilon^2)$. Hence, they resemble the standard Euclidean metric within a ``neighborhood'' of pmfs around a reference pmf in $\Simplex_{\X}^{\circ}$. Note that the advantage of using spherical perturbations $\{K_X \in \left(\R^{|\X|}\right)^{\! *} : \sqrt{P_X} K_X^T = 0\}$ over additive perturbations (e.g. $R_X - P_X$) is that they form an inner product space equipped with the standard Euclidean inner product. This allows us to recast \eqref{Def:Local f-Divergence} using the $\ell^2$-norm of $K_X$ instead of a weighted $\ell^2$-norm of the additive perturbation $K_X \textsf{\small diag}(\sqrt{P_X})$. Consequently, we benefit from more polished notation and simpler algebra later on\textemdash{}see our proof of Theorem \ref{Thm:Local Approximation of Contraction Coefficients}. Finally, we remark that perturbation ideas like \eqref{Eq: Spherically perturbed pmf} have also been exploited in various other contexts in information theory, and we refer readers to \cite{EuclideanInfoTheory,InfoCouplingConf,GohariAnantharam2012}, and \cite{CoordGaussian} for a few examples.

\subsection{Contraction Coefficients of Source-Channel Pairs}
\label{Contraction Coefficients of Source-Channel Pairs}

The DPIs, \eqref{Eq:DPI for f-Divergence} and \eqref{Eq:DPI for Mutual f-Information}, can be maximally tightened into so called \textit{strong data processing inequalities} (SDPIs) by inserting in pertinent constants known as contraction coefficients. There are two variants of contraction coefficients: the first depends on a source-channel pair, and the second depends solely on a channel. We introduce the former kind of coefficient in this subsection, and defer a discussion of the latter kind to the next subsection. 

\begin{definition}[Contraction Coefficient of Source-Channel Pair] 
\label{Def:Contraction Coefficient}
For any input pmf $P_X \in \Simplex_{\X}$ and any discrete channel $W \in \Simplex_{\Y|\X}$ corresponding to a conditional distribution $P_{Y|X}$, the \textit{contraction coefficient} for a particular $f$-divergence is:
$$ \eta_{f}\!\left(P_{X},P_{Y|X}\right) \triangleq \sup_{\substack{R_X \in \Simplex_{\X}:\\0 < D_{f}(R_X || P_X) < +\infty}} \frac{D_{f}(R_X W||P_X W)}{D_{f}(R_X||P_X)} $$
where the supremum is taken over all pmfs $R_X$ such that $0 < D_{f}(R_X || P_X) < +\infty$. Furthermore, if $X$ or $Y$ is a constant almost surely ($a.s.$), we define $\eta_{f}\!\left(P_{X},P_{Y|X}\right) = 0$. 
\end{definition}

Using Definition \ref{Def:Contraction Coefficient}, we may write the following SDPI from the DPI for $f$-divergences in \eqref{Eq:DPI for f-Divergence}:
\begin{equation}
\label{Eq:SDPI}
D_{f}\!\left(R_X W||P_X W\right) \leq \eta_{f}\!\left(P_{X},P_{Y|X}\right) D_{f}\!\left(R_X||P_X\right)
\end{equation}
which holds for every $R_X \in \Simplex_{\X}$, with fixed $P_X \in \Simplex_{\X}$ and $W \in \Simplex_{\Y|\X}$. The next proposition illustrates that the DPI for mutual $f$-information can be improved in a similar fashion. 

\begin{proposition}[Mutual $f$-Information Contraction Coefficient {\cite[Theorem V.2]{SDPIandSobolevInequalities}}]
\label{Prop:Contraction Coefficient of Mutual f-Information}
For any input pmf $P_X \in \Simplex_{\X}$, any discrete channel $P_{Y|X} \in \Simplex_{\Y|\X}$, and any convex function $f:(0,\infty) \rightarrow \R$ that is differentiable, has uniformly bounded derivative in some neighborhood of unity, and satisfies $f(1) = 0$, we have:
$$ \eta_{f}\!\left(P_{X},P_{Y|X}\right) = \sup_{\substack{P_{U|X}:\,U \rightarrow X \rightarrow Y \\ 0 < I_{f}(U;X) < +\infty}}{\frac{I_{f}(U;Y)}{I_{f}(U;X)}} $$
where the supremum is taken over all conditional distributions $P_{U|X} \in \Simplex_{\U|\X}$ and finite alphabets $\U$ of $U$ such that $U \rightarrow X \rightarrow Y$ form a Markov chain.\footnote{It suffices to let $|\U| = 2$ in the extremization \cite[Theorem V.2]{SDPIandSobolevInequalities}.}
\end{proposition}

Proposition \ref{Prop:Contraction Coefficient of Mutual f-Information} is proved in \cite[Theorem V.2]{SDPIandSobolevInequalities}. The special case of this result for KL divergence was proved in \cite{RenyiCorrHypercontractivity} (which tackled the finite alphabet case) and \cite{ContractionCoefficients} (which derived the general alphabet case). Intuitively, the variational problem in Proposition \ref{Prop:Contraction Coefficient of Mutual f-Information} determines the probability model that makes $Y$ as close to a sufficient statistic of $X$ for $U$ as possible (see the comment after \eqref{Eq:DPI for Mutual f-Information}). Furthermore, the result illustrates that under regularity conditions, the contraction coefficient for any $f$-divergence gracefully unifies the DPIs for the $f$-divergence and the corresponding mutual $f$-information as the tightest factor that can be inserted into either one of them. Indeed, when the random variables $U \rightarrow X \rightarrow Y$ form a Markov chain, we can write the SDPI version of \eqref{Eq:DPI for Mutual f-Information}:
\begin{equation}
\label{Eq: SDPI for Mutual f-Information}
I_{f}\!\left(U;Y\right) \leq \eta_{f}\!\left(P_{X},P_{Y|X}\right) I_{f}\!\left(U;X\right)
\end{equation}
which holds for every conditional distribution $P_{U|X}$, with fixed $P_X \in \Simplex_{\X}$ and $P_{Y|X} \in \Simplex_{\Y|\X}$. Note that even if the conditions of Proposition
\ref{Prop:Contraction Coefficient of Mutual f-Information} do not hold, \eqref{Eq: SDPI for Mutual f-Information} is still true (but $\eta_{f}\!\left(P_{X},P_{Y|X}\right)$ may not be the tightest possible constant that can be inserted into \eqref{Eq:DPI for Mutual f-Information}).  

There are two contraction coefficients that will be particularly important to our study. The first is the contraction coefficient for KL divergence:
\begin{equation}
\label{Eq:Contraction Coefficient for KL Divergence}
\etaKL\!\left(P_{X},P_{Y|X}\right) = \sup_{\substack{R_X \in \Simplex_{\X}:\\0 < D(R_X || P_X) < +\infty}} \frac{D(R_X W||P_X W)}{D(R_X||P_X)} .
\end{equation}
This quantity is related to the fundamental notion of \textit{hypercontractivity} in statistics \cite{AhlswedeGacsHypercontraction}.\footnote{Hypercontractivity refers to the phenomenon that some conditional expectation operators are contractive even when their input functional space has a (probabilistic) $\mathcal{L}^q$-norm while their output functional space has a (probabilistic) $\mathcal{L}^p$-norm with $1 \leq q < p$ (see e.g. \cite{HypercontractivityBooleanFunc}). This notion has found applications in information theory because hypercontractive quantities are often imparted with tensorization properties which permit single letterization.} In fact, the authors of \cite{AhlswedeGacsHypercontraction} and \cite{HypercontractivityBooleanFunc} illustrate how $\etaKL\!\left(P_{X},P_{Y|X}\right)$ can be defined as the chordal slope of the lower boundary of the hypercontractivity ribbon at infinity in the discrete and finite setting. 

The contraction coefficient for KL divergence elucidates a striking dichotomy between the extremizations in Definition \ref{Def:Contraction Coefficient} and Proposition \ref{Prop:Contraction Coefficient of Mutual f-Information}. To delineate this contrast, we first specialize Proposition \ref{Prop:Contraction Coefficient of Mutual f-Information} for KL divergence \cite{RenyiCorrHypercontractivity,ContractionCoefficients}:
\begin{equation}
\label{Eq: MI Contraction}
\etaKL\!\left(P_X,P_{Y|X}\right) = \sup_{\substack{P_{U},P_{X|U}: \, U \rightarrow X \rightarrow Y\\I(U;X) > 0}}{\frac{I(U;Y)}{I(U;X)}}
\end{equation}
where the optimization is (equivalently) over all $P_{U} \in \Simplex_{\U}$ with $\U = \{0,1\}$ (without loss of generality, cf. \cite[Appendix B]{ContractionCoefficients}) and all $P_{X|U} \in \Simplex_{\X|\U}$ such that marginalizing yields $P_{X}$. We next recall an example from \cite{RenyiCorrHypercontractivity} where $\U = \left\{0,1\right\}$, $X \sim \textsf{\small Bernoulli}\!\left(\frac{1}{2}\right)$, and $P_{Y|X}$ is an asymmetric erasure channel. In this numerical example, the supremum in \eqref{Eq: MI Contraction} is achieved by the sequences of pmfs $\{P^{(k)}_{X|U = 0} \in \Simplex_{\X}:k \in \N\}$, $\{P^{(k)}_{X|U = 1} \in \Simplex_{\X}:k \in \N\}$, and $\{P^{(k)}_U \in \Simplex_{\U}:k \in \N\}$ (where $\N \triangleq \{0,1,2,\dots\}$) satisfying the following conditions:
\begin{align}
\lim_{k \rightarrow \infty}{P^{(k)}_U(1)} & = 0 \, , \\
\lim_{k \rightarrow \infty}{D(P^{(k)}_{X|U=0}||P_X)} & = 0 \, , \\
\liminf_{k \rightarrow \infty}{D(P^{(k)}_{X|U=1}||P_X)} & > 0 \, , \\
\limsup_{k \rightarrow \infty}{\frac{D(P^{(k)}_{Y|U=0}||P_Y)}{D(P^{(k)}_{X|U=0}||P_X)}} & < \etaKL\!\left(P_{X},P_{Y|X}\right) , \\
\lim_{k \rightarrow \infty}{\frac{D(P^{(k)}_{Y|U=1}||P_Y)}{D(P^{(k)}_{X|U=1}||P_X)}} & = \etaKL\!\left(P_{X},P_{Y|X}\right) .
\end{align}
This example conveys that in general, although \eqref{Eq: MI Contraction} is maximized when $I(U;X) \rightarrow 0$ \cite{ErkipCover}, \eqref{Eq:Contraction Coefficient for KL Divergence} is often achieved by a sequence of pmfs $\big\{R_X^{(k)} \in \Simplex_{\X}\backslash\{P_X\}:k \in \N\big\}$ that does not tend to $P_X$ (due to the non-concave nature of this extremal problem). At first glance, this is counter-intuitive because the DPI \eqref{Eq:DPI for f-Divergence} is tight when $R_X = P_X$. However, Theorem \ref{Thm:Local Approximation of Contraction Coefficients} (presented in subsection \ref{Local Approximation of Contraction Coefficients}) will portray that maximizing the ratio of KL divergences with the constraint that $D(R_X||P_X) \rightarrow 0$ actually achieves $\etaChi\!\left(P_{X},P_{Y|X}\right)$, which is often strictly less than $\etaKL\!\left(P_{X},P_{Y|X}\right)$ \cite{RenyiCorrHypercontractivity}. Therefore, there is a stark contrast between the behaviors of the optimization problems in \eqref{Eq:Contraction Coefficient for KL Divergence} and \eqref{Eq: MI Contraction}.  

The second important contraction coefficient is the contraction coefficient for $\chi^2$-divergence:
\begin{equation}
\label{Eq:Contraction Coefficient for Chi Squared Divergence}
\etaChi\!\left(P_{X},P_{Y|X}\right) = \sup_{\substack{R_X \in \Simplex_{\X}:\\0 < \chi^2(R_X||P_X) < +\infty}} \frac{\chi^2(R_X W||P_X W)}{\chi^2(R_X||P_X)}
\end{equation}
which is closely related to a generalization of the Pearson correlation coefficient between $X$ and $Y$ known as the \textit{Hirschfeld-Gebelein-R\'{e}nyi maximal correlation}, or simply maximal correlation \cite{Hirschfeld1935,Gebelein1941,RenyiCorrelation}. We next define maximal correlation, which was proven to be a measure of statistical dependence satisfying seven natural axioms (some of which will be given in Proposition \ref{Prop:Contraction Coefficient Properties} later) that such measures should exhibit \cite{RenyiCorrelation}.

\begin{definition}[Maximal Correlation {\cite{Hirschfeld1935,Gebelein1941,RenyiCorrelation}}] 
\label{Def:Maximal Correlation}
For two jointly distributed random variables $X \in \X$ and $Y \in \Y$, the maximal correlation between $X$ and $Y$ is given by:
$$ \rho(X;Y) \triangleq \sup_{\substack{{f: \X \rightarrow \R, \, g: \Y \rightarrow \R \,:} \\ {\E[f(X)] = \E[g(Y)] = 0} \\ {\E\left[f^2(X)\right] = \E\left[g^2(Y)\right] = 1}}} \E\left[f(X)g(Y)\right] $$
where the supremum is taken over all Borel measurable functions $f$ and $g$ satisfying the constraints. Furthermore, if $X$ or $Y$ is a constant $a.s.$, there exist no functions $f$ and $g$ that satisfy the constraints, and we define $\rho(X;Y) = 0$.
\end{definition}

It can be shown that the contraction coefficient for $\chi^2$-divergence is precisely the squared maximal correlation \cite{ChiSquaredContractionMaximalCorrelation}:
\begin{equation}
\label{Eq: Maximal Correlation as Contraction Coefficient}
\etaChi\!\left(P_{X},P_{Y|X}\right) = \rho^2(X;Y) \, .
\end{equation}
Furthermore, the next proposition portrays that maximal correlation can be represented as a singular value; this was originally observed in \cite{RenyiCorrelation} in a slightly different form (also see \cite{TensorizationRenyiCorr,RenyiCorrHypercontractivity,Calmonetal2017} and \cite[Theorem 3.2.4]{MastersThesis}).

\begin{proposition}[Singular Value Characterization of Maximal Correlation {\cite{RenyiCorrelation}}] 
\label{Prop:Singular Value Characterization of Maximal Correlation}
Given the random variables $X \in \X$ and $Y \in \Y$ with joint pmf $P_{X,Y}$ (consisting of $(P_X,W)$), we may define a \emph{divergence transition matrix} (DTM):\footnote{Note that for every $x \in \X$ and $y \in \Y$ with $P_X(x) > 0$ and $P_Y(y) > 0$, $\left[B\right]_{x,y} = P_{X,Y}(x,y)/\!\sqrt{P_X(x)P_Y(y)}$, and $\left[B\right]_{x,y} = 0$ otherwise.}
\begin{equation}
\label{Eq: DTM}
B \triangleq \textsf{\small diag}\!\left(\sqrt{P_X}\right) W \textsf{\small diag}\!\left(\sqrt{P_Y}\right)^{\dagger} 
\end{equation}
where $\dagger$ denotes the Moore-Penrose pseudoinverse. Then, the maximal correlation $\rho(X;Y)$ is the second largest singular value of $B$.
\end{proposition}

\begin{proof}
See Appendix \ref{App: Proof of Singular Value Characterization of Maximal Correlation}.
\end{proof}

From Proposition \ref{Prop:Singular Value Characterization of Maximal Correlation} and \eqref{Eq: Maximal Correlation as Contraction Coefficient}, we see that the contraction coefficient for $\chi^2$-divergence is in fact the squared second largest singular value of the DTM $B$. We can write this using the Courant-Fischer variational characterization of eigenvalues or singular values (cf. \cite[Theorems 4.2.6 and 7.3.8]{BasicMatrixAnalysis}) as:
\begin{equation}
\label{Eq: Contraction Coefficient as Singular Value}
\etaChi\!\left(P_{X},P_{Y|X}\right) = \max_{\substack{x \in \R^{|\X|}\backslash\!\{\0\} : \\ \sqrt{P_X} x = 0}}{\frac{\left\|B^T x\right\|_2^2}{\left\|x\right\|_2^2}}
\end{equation}
where $\textbf{0}$ denotes the zero vector of appropriate dimension, and $\sqrt{P_X}^T$ is the right singular vector of $B^T$ corresponding to its maximum singular value of unity. 

Singular value decompositions (SVDs) of DTMs and their relation to $\chi^2$-divergence have been well-studied in statistics. For instance, one direction of work concerns the analysis and identification of so called Lancaster distributions \cite{StructBivarDistLancaster,Lancaster}. In particular, given a joint distribution $\P_{X,Y}$ over a product measurable space $\X \times \Y$ such that $\chi^2(\P_{X,Y}||\P_X \times \P_Y) < \infty$,\footnote{Note that $\P_{X}$ and $\P_{Y}$ are marginal distributions of $\P_{X,Y}$, and $\P_{X} \times \P_{Y}$ denotes their product distribution. Furthermore, $\chi^2(\P_{X,Y}||\P_X \times \P_Y) < \infty$ implies that $\P_{X,Y}$ is absolutely continuous with respect to $\P_{X} \times \P_{Y}$.} Lancaster proved in \cite{StructBivarDistLancaster} that there exist orthonormal bases, $\{f_j \in \mathcal{L}^2 \! \left(\X,\P_X\right) : 0 \leq j < |\X|\}$ and $\{g_k \in \mathcal{L}^2 \! \left(\Y,\P_Y\right) : 0 \leq k < |\Y|\}$,\footnote{Here, $\mathcal{L}^2 \! \left(\X,\P_X\right)$ (respectively $\mathcal{L}^2 \! \left(\Y,\P_Y\right)$) is the Hilbert space of square integrable functions from $\X$ (respectively $\Y$) to $\R$ with inner product defined by $\P_X$ (respectively $\P_Y$).} and some sequence $\{\sigma_k \geq 0:0 \leq k < \min\{|\X|,|\Y|\} \}$ of non-negative correlations, such that $\P_{X,Y}$ is a \textit{Lancaster distribution} exhibiting the decomposition:
\begin{equation}
\label{Eq:Expansion}
\frac{d\,\P_{X,Y}}{d \, (\P_X \times \P_Y)}(x,y) = \sum_{k = 0}^{\min\{|\X|,|\Y|\}}{\sigma_k f_k(x) g_k(y)} 
\end{equation}
where $d \, \P_{X,Y}/d \, (\P_X \times \P_Y)$ is the Radon-Nikodym derivative of $\P_{X,Y}$ with respect to $\P_{X} \times \P_{Y}$. When $\X$ and $\Y$ are finite, the decomposition in \eqref{Eq:Expansion} precisely captures the SVD structure of $B$ corresponding to $P_{X,Y}$. We refer readers to \cite[Section II-D]{PolynomialSVDofDTM} for further references.

While Lancaster decompositions are usually studied in infinite-dimensional settings, the related area of applied statistics known as \textit{correspondence analysis} deals with understanding the dependence between categorical variables. In particular, \textit{simple} correspondence analysis views a bivariate pmf $P_{X,Y}$ as a contingency table, and decomposes the dependence between $X$ and $Y$ into so called \textit{principal inertia components} using the SVD of $B$, cf. \cite{Greenacre1984}, \cite[Section 2]{GreenacreHastie1987}, and the references therein. More recently, the authors of \cite{Calmonetal2017} have studied principal inertia components (which are eigenvalues of the Gramian matrix $B^T B$) in the context of information and estimation theory. They generalize the first principal inertia component (i.e. squared maximal correlation) into a quantity known as $k$-correlation for $k \in \{1,\dots,\min\{|\X|,|\Y|\} - 1\}$ (which is the Ky Fan $(k+1)$-norm of $B^T B$ minus $1$), prove some properties of $k$-correlation such as convexity and DPI \cite[Section II]{Calmonetal2017}, and demonstrate several applications.

Yet another line of research has focused on the computational aspects of decomposing DTMs. A well-known method of computing SVDs of DTMs is the \textit{alternating conditional expectations} (ACE) algorithm\textemdash{}see \cite{ACEalgorithm} for the original algorithm in the context of non-linear regression, and \cite{ParallelACEAlgorithm} for a variant of the algorithm in the context of feature selection. At its heart, the ACE algorithm employs a power iteration method to estimate singular vectors of the DTM. It turns out that such singular vectors corresponding to larger singular values can be identified as ``more informative'' score functions. This insight has been exploited to perform inference on hidden Markov models in an image processing setting in \cite{EfficientStats}, and has been framed as a means of performing \textit{universal feature selection} in \cite{UniversalFeatureSelectionConf}. 

Having introduced the pertinent contraction coefficients, we now present some well-known properties of contraction coefficients for $f$-divergences.

\begin{proposition}[Properties of Contraction Coefficients of Source-Channel Pairs]
\label{Prop:Contraction Coefficient Properties}
The contraction coefficient for an $f$-divergence satisfies the following properties:
\begin{enumerate}
\item \emph{(Normalization)} For any joint pmf $P_{X,Y}$, we have that $0 \leq \eta_{f}\!\left(P_{X},P_{Y|X}\right) \leq 1$.
\item \emph{(Independence)} Given random variables $X$ and $Y$ with joint pmf $P_{X,Y}$, if $X$ and $Y$ are independent, then $\eta_{f}\!\left(P_{X},P_{Y|X}\right) = 0$. Conversely, if $f$ is strictly convex at unity and $\eta_{f}\!\left(P_{X},P_{Y|X}\right) = 0$, then $X$ and $Y$ are independent.
\item \emph{(Convexity \cite[Proposition III.3]{SDPIandSobolevInequalities})} For fixed $P_X \in \Simplex_{\X}^{\circ}$, the function $\Simplex_{\Y|\X} \ni P_{Y|X} \mapsto \eta_{f}\!\left(P_{X},P_{Y|X}\right)$ is convex in the channel $P_{Y|X}$.
\item \emph{(Tensorization \cite[Theorem III.9]{SDPIandSobolevInequalities})} If the convex function $f:(0,\infty) \rightarrow \R$ that defines our $f$-divergence also induces a sub-additive and homogeneous $f$-entropy,\footnote{For a convex function $f:(0,\infty) \rightarrow \R$, the $f$-entropy of a non-negative random variable $Z$ is defined as $\textsf{\footnotesize Ent}_{f}(Z) \triangleq \E[f(Z)] - f(\E[Z])$, where it is assumed that $\E[f(Z)] < \infty$ (see \cite[Section II]{SDPIandSobolevInequalities} and the references therein).} and $\left\{P_{X_i,Y_i} \! : P_{X_i} \in \Simplex_{\X_i}^{\circ} \text{ and } P_{Y_i} \in \Simplex_{\Y_i}^{\circ}\text{ for } i \! \in \! \{1,\dots,n\}\!\right\}$ are independent joint pmfs, then we have:
$$ \eta_{f}\!\left(P_{X_1^n},P_{Y_1^n|X_1^n}\right) = \max_{1 \leq i \leq n}{\eta_{f}\!\left(P_{X_i},P_{Y_i|X_i}\right)} $$
where $X_1^n = \left(X_1,\dots,X_n\right)$ and $Y_1^n = \left(Y_1,\dots,Y_n\right)$.
\item \emph{(Sub-multiplicativity)} If $U \rightarrow X \rightarrow Y$ are discrete random variables with finite ranges that form a Markov chain, then we have:
$$ \eta_{f}\!\left(P_U,P_{Y|U}\right) \leq \eta_{f}\!\left(P_U,P_{X|U}\right) \eta_{f}\!\left(P_X,P_{Y|X}\right) . $$
Furthermore, for any fixed joint pmf $P_{X,Y}$ such that $X$ is not a constant $a.s.$, we have:
$$ \eta_{f}\!\left(P_X,P_{Y|X}\right) = \sup_{\substack{P_{U|X} : \, U \rightarrow X \rightarrow Y\\\eta_{f}\left(P_U,P_{X|U}\right) > 0}}\,{\frac{\eta_{f}\!\left(P_U,P_{Y|U}\right)}{\eta_{f}\!\left(P_U,P_{X|U}\right)}} $$
where the supremum is over all arbitrary finite ranges $\U$ of $U$, and over all conditional distributions $P_{U|X} \in \Simplex_{\U|\X}$ such that $U \rightarrow X \rightarrow Y$ form a Markov chain. 
\item \emph{(Maximal Correlation Lower Bound \cite[Theorem III.3]{SDPIandSobolevInequalities}, \cite[Theorem 2]{GraphSDPI})} Suppose we have a joint pmf $P_{X,Y}$ such that the marginal pmfs satisfy $P_X \in \Simplex_{\X}^{\circ}$ and $P_Y \in \Simplex_{\Y}^{\circ}$. If the function $f:(0,\infty) \rightarrow \R$ that defines our $f$-divergence is twice differentiable at unity with $f^{\prime \prime}(1) > 0$, then we have:
$$ \etaChi\!\left(P_{X},P_{Y|X}\right) = \rho^2(X;Y) \leq \eta_{f}\!\left(P_{X},P_{Y|X}\right) . $$
\end{enumerate}
\end{proposition}

\begin{proof}
See Appendix \ref{App: Proof of Properties of Contraction Coefficients} for certain proofs, as well as relevant references for specializations of the results.
\end{proof}

We now make some relevant remarks. Firstly, since parts 1 and 2 of Proposition \ref{Prop:Contraction Coefficient Properties} illustrate that contraction coefficients are normalized measures of statistical dependence between random variables, we can perceive the sub-multiplicativity property in part 5 as an SDPI for contraction coefficients in analogy with \eqref{Eq: SDPI for Mutual f-Information}. In fact, part 5 also portrays that the contraction coefficient of the SDPI for $\eta_{f}$ is given by $\eta_{f}$ itself. 

Secondly, the version of the DPI for $\etaKL$ presented in \cite{AhlswedeGacsHypercontraction} (also see \cite[Section II-A]{HypercontractivityBooleanFunc}) holds for general $\eta_{f}$. Indeed, if $U \rightarrow X \rightarrow Y \rightarrow V$ are discrete random variables with finite ranges that form a Markov chain, then a straightforward consequence of parts 1 and 5 of Proposition \ref{Prop:Contraction Coefficient Properties} is the following monotonicity property: 
\begin{equation}
\eta_{f}\!\left(P_{U},P_{V|U}\right) \leq \eta_{f}\!\left(P_{X},P_{Y|X}\right) .
\end{equation}

Thirdly, the maximal correlation lower bound in part 6 of Proposition \ref{Prop:Contraction Coefficient Properties} can be achieved with equality. For instance, let $f(t) = t\log(t)$ and consider a doubly symmetric binary source (DSBS) with parameter $\alpha \in [0,1]$. A DSBS describes a joint distribution of two uniform Bernoulli random variables $(X,Y)$, where $X$ is passed through a binary symmetric channel (BSC) with crossover probability $\alpha$ to produce $Y$. It is proven in \cite{AhlswedeGacsHypercontraction} that for $(X,Y) \sim \textsf{\small DSBS}(\alpha)$, the maximal correlation lower bound holds with equality:
\begin{equation}
\label{Eq: Equality of Contraction Coefficients}
\etaKL\!\left(P_{X},P_{Y|X}\right) = \etaChi\!\left(P_{X},P_{Y|X}\right) = (1-2\alpha)^2
\end{equation}
where $\etaChi\!\left(P_{X},P_{Y|X}\right) = (1-2\alpha)^2$ can be readily computed using the singular value characterization of maximal correlation presented in Proposition \ref{Prop:Singular Value Characterization of Maximal Correlation}. As another example, consider $P_{Y|X}$ defined by an $|\X|$-ary erasure channel $E_{\beta} \in \Simplex_{\Y|\X}$ with erasure probability $\beta \in [0,1]$, which has input alphabet $\X$ and output alphabet $\Y = \X \cup \{\textsf{\small e}\}$, where $\textsf{\small e}$ is the erasure symbol. Recall that given an input $x \in \X$, $E_{\beta}$ erases $x$ and outputs $\textsf{\small e}$ with probability $\beta$, and copies its input $x$ with probability $1 - \beta$. It is straightforward to verify that $D_{f}(R_X E_{\beta}||P_X E_{\beta}) = (1-\beta) D_{f}(R_X||P_X)$ for every $R_X,P_X \in \Simplex_{\X}$. Therefore, for every input pmf $P_X \in \Simplex_{\X}$ and every $f$-divergence, $\eta_{f}\!\left(P_{X},P_{Y|X}\right) = 1-\beta$.

Finally, we note that although we independently proved part 6 of Proposition \ref{Prop:Contraction Coefficient Properties} using the local approximation of $f$-divergence idea from our conference paper \cite[Theorem 5]{BoundsbetweenContractionCoefficients}, the same idea is used by \cite[Theorem III.3]{SDPIandSobolevInequalities} and \cite[Theorem 2]{GraphSDPI} to prove this result. In fact, this idea turns out to stem from the proof of \cite[Theorem 5.4]{CIR93} (which is presented later in part 5 of Proposition \ref{Prop:Contraction Coefficient Properties 2}). 

\subsection{Coefficients of Ergodicity}
\label{Coefficients of Ergodicity}

Before discussing contraction coefficients that depend solely on channels, we briefly introduce the broader notion of coefficients of ergodicity. Coefficients of ergodicity were first studied in the context of understanding ergodicity and convergence rates of finite state-space (time) inhomogeneous Markov chains, cf. \cite[Section 1]{CoefficientsofErgodicity}. We present their definition below.

\begin{definition}[Coefficient of Ergodicity {\cite[Definition 4.6]{NonnegativeMatricesMarkovChains}}]
\label{Def: Coefficient of Ergodicity}
A \textit{coefficient of ergodicity} is a continuous scalar function $\eta:\Simplex_{\Y|\X} \rightarrow [0,1]$ from $\Simplex_{\Y|\X}$ (with fixed dimension) to $[0,1]$.\footnote{The set $\Simplex_{\Y|\X}$ is endowed with the standard topology induced by the Frobenius norm. Furthermore, $\X$ is typically the same as $\Y$ in Markov chain settings.} Such a coefficient is \textit{proper} if for any $W \in \Simplex_{\Y|\X}$, $\eta(W) = 0$ if and only if $W = \1 P_Y$ for some pmf $P_Y \in \Simplex_{\Y}$ (i.e. $W$ is unit rank), where $\1 \in \R^{|\X|}$ denotes a column vector with all entries equal to $1$.
\end{definition}

One useful property of proper coefficients of ergodicity is their connection to weak ergodicity. Consider a sequence of row stochastic matrices $\{W_k \in \Simplex_{\X|\X} : k \in \N\}$ that define an inhomogeneous Markov chain on the state space $\X$. Let the forward product of $r \geq 1$ of these consecutive matrices starting at index $p \in \N$ be:
\begin{equation}
T_{(p,r)} \triangleq \prod_{i = 0}^{r-1}{W_{p+i}} \, .
\end{equation}
The Markov chain $\{W_k \in \Simplex_{\X|\X} : k \in \N\}$ is said to be \textit{weakly ergodic} (in the Kolmogorov sense) if for all $x_1,x_2,x_3 \in \X$ and all $p \in \N$ \cite[Definition 4.4]{NonnegativeMatricesMarkovChains}:
\begin{equation}
\lim_{r \rightarrow \infty}{\left[T_{(p,r)}\right]_{x_1,x_3} - \left[T_{(p,r)}\right]_{x_2,x_3}} = 0 \, .
\end{equation}
This definition captures the intuition that the rows of a forward product should equalize when $r \rightarrow \infty$ for an ergodic Markov chain.\footnote{Note that if the limiting row stochastic matrix $\lim_{r \rightarrow \infty}{T_{(p,r)}}$ exists for all $p \in \N$, then the Markov chain is \textit{strongly ergodic} \cite[Definition 4.5]{NonnegativeMatricesMarkovChains}.} The next proposition conveys that weak ergodicity can be equivalently defined using proper coefficients of ergodicity.

\begin{proposition}[Weak Ergodicity {\cite[Lemma 4.1]{NonnegativeMatricesMarkovChains}}]
\label{Prop: Weak Ergodicity}
Let $\eta:\Simplex_{\X|\X} \rightarrow [0,1]$ be a proper coefficient of ergodicity. Then, the inhomogeneous Markov chain $\{W_k \in \Simplex_{\X|\X} : k \in \N\}$ is weakly ergodic if and only if:
$$ \forall p \in \N, \enspace \lim_{r \rightarrow \infty}{\eta\!\left(T_{(p,r)}\right)} = 0 \, . $$ 
\end{proposition}

To intuitively understand this result, notice that $T_{(p,r)}$ becomes (approximately) unit rank as $r \rightarrow \infty$ for a weakly ergodic Markov chain. So, we also expect $\lim_{r \rightarrow \infty}{\eta(T_{(p,r)})} = 0$, since a proper coefficient of ergodicity is continuous, and equals zero when its input is unit rank. We refer readers to \cite[Lemma 4.1]{NonnegativeMatricesMarkovChains} for a formal proof of Proposition \ref{Prop: Weak Ergodicity}. We also suggest \cite{CoefficientsofErgodicity}, \cite[Chapters 3 and 4]{NonnegativeMatricesMarkovChains}, \cite{IpsenSelee2011}, \cite[Chapter 3]{ErgodicityCoefficientsThesis}, and the references therein for further expositions of such ideas.

One of the earliest and most notable examples of proper coefficients of ergodicity is the Dobrushin contraction coefficient. Given a row stochastic matrix $W \in \Simplex_{\Y|\X}$ corresponding to a channel $P_{Y|X}$, its \textit{Dobrushin contraction coefficient} is defined as the Lipschitz constant of the map $\Simplex_{\X} \ni P_X \mapsto P_X W$ with respect to the $\ell^1$-norm (or TV distance) \cite{DobrushinContraction}:\footnote{Based on the bibliographic discussion in \cite[pp.144-147]{NonnegativeMatricesMarkovChains}, the Dobrushin contraction coefficient (or equivalently, the Dobrushin ergodicity coefficient) may also be attributed (at least partly) to both Doeblin and Markov. In fact, the coefficient has been called the \textit{Doeblin contraction coefficient} or presented as the \textit{Markov contraction lemma} in the literature (see e.g. \cite[p.619]{Kontorovich2012}).}
\begin{align}
\label{Eq: Dobrushin Contraction Definition}
\etaTV\!\left(W\right) & \triangleq \sup_{\substack{R_X,P_X \in \Simplex_{\X} : \\R_X \neq P_X}}{\frac{\left\|R_X W - P_X W\right\|_{\textsf{\tiny TV}}}{\left\|R_X - P_X\right\|_{\textsf{\tiny TV}}}} \\
\label{Eq: Simplification}
& = \max_{\substack{v \in \left(\R^{|\X|}\right)^{\! *} : \\ \left\|v\right\|_{1} = 1, \, v\1 = 0}}{\left\|v W\right\|_{1}} \\
\label{Eq: Peres Book Characterization}
& = \max_{R_X,P_X \in \Simplex_{\X}}{\left\|R_X W - P_X W\right\|_{\textsf{\tiny TV}}} \\
\label{Eq: Two Point Characterization}
& = \max_{x,x^{\prime} \in \X}{\left\|P_{Y|X = x} - P_{Y|X = x^{\prime}}\right\|_{\textsf{\tiny TV}}} \\
\label{Eq: Scrambling Characterization}
& = 1 - \! \min_{x,x^{\prime} \in \X}{\! \sum_{y \in \X}{\! \min\!\left\{\!P_{Y|X}(y|x),P_{Y|X}(y|x^{\prime})\!\right\}}}
\end{align}
where the various equivalent characterizations of \eqref{Eq: Dobrushin Contraction Definition} in \eqref{Eq: Simplification}, \eqref{Eq: Peres Book Characterization}, \eqref{Eq: Two Point Characterization} (Dobrushin's two-point characterization \cite{DobrushinContraction}), and \eqref{Eq: Scrambling Characterization} can be found in (or easily deduced from) \cite[Chapter 4.3]{NonnegativeMatricesMarkovChains}. The characterization in \eqref{Eq: Scrambling Characterization} illustrates that $\etaTV(W) < 1$ if and only if $W$ is a \textit{scrambling matrix} (which means that no two rows of $W$ are orthogonal) \cite[p.82]{NonnegativeMatricesMarkovChains}.\footnote{Thus, $\etaTV(W) < 1$ if and only if the \textit{zero error capacity} of $W$ is $0$ \cite{ZeroErrorCapacity}.} 

In addition to the properties of proper coefficients of ergodicity, $\etaTV$ also exhibits the following properties:
\begin{enumerate}
\item \textit{Lipschitz continuity} \cite[Theorem 3.4, Remark 3.5]{IpsenSelee2011}: For every $V,W \in \Simplex_{\Y|\X}$, $|\etaTV(V) - \etaTV(W)| \leq \left\|V - W\right\|_{\infty}$, where $\left\|\cdot\right\|_{\infty}$ denotes the induced $\ell^{\infty}$-norm, or maximum absolute row sum, when applied to a matrix.
\item \textit{Sub-multiplicativity} \cite[Lemma 4.3]{NonnegativeMatricesMarkovChains}: For every $V \in \Simplex_{\X|\U}$ and $W \in \Simplex_{\Y|\X}$, $\etaTV(V W) \leq \etaTV(V)  \etaTV(W)$. 
\item \textit{Sub-dominant eigenvalue bound} \cite[p.584, (9)]{CoefficientsofErgodicity}: For every $W \in \Simplex_{\X|\X}$, $\etaTV(W) \geq |\lambda|$ for every sub-dominant eigenvalue $\lambda \neq 1$ of $W$.
\end{enumerate}
The last two properties make $\etaTV$ a convenient tool for analyzing inhomogeneous Markov chains. As explained in \cite[Section 1]{IpsenSelee2011}, for a homogeneous Markov chain $W \in \Simplex_{\X|\X}$ with stationary pmf $\pi \in \Simplex_{\X}$, it is well-known that the \textit{second largest eigenvalue modulus} (SLEM) of $W$, denoted $\mu(W)$, controls the rate of convergence to stationarity. Indeed, if $\mu(W) < 1$, then $\mu(W^n) = \mu(W)^n$, and $\lim_{n \rightarrow \infty}{W^n} = \1 \pi$ with rate determined by $\mu(W)$. However, for an inhomogeneous Markov chain $\{W_k \in \Simplex_{\X|\X} : k \in \N\}$, $\mu(T_{(0,n)}) \neq \prod_{i = 0}^{n-1}{\mu(W_i)}$ in general because SLEMs are not multiplicative. The last two properties of $\etaTV$ illustrate that it is a viable replacement for SLEMs in the study of inhomogeneous Markov chains. 

\subsection{Contraction Coefficients of Channels}
\label{Contraction Coefficients of Channels}

Contraction coefficients of channels form a broad class of coefficients of ergodicity. They are defined similarly to \eqref{Eq: Dobrushin Contraction Definition}, but using $f$-divergences in place of TV distance. 

\begin{definition}[Contraction Coefficient of Channel] 
\label{Def:Contraction Coefficient 2}
For any discrete channel $W \in \Simplex_{\Y|\X}$ corresponding to a conditional distribution $P_{Y|X}$, the \textit{contraction coefficient} for a particular $f$-divergence is:
\begin{align*}
\eta_{f}\!\left(P_{Y|X}\right) & \triangleq \sup_{P_X \in \Simplex_{\X}}{\eta_{f}\!\left(P_X,P_{Y|X}\right)} \\
& = \sup_{\substack{R_X,P_X \in \Simplex_{\X}:\\0 < D_{f}(R_X || P_X) < +\infty}} \frac{D_{f}(R_X W||P_X W)}{D_{f}(R_X||P_X)}
\end{align*}
where the supremum is taken over all pmfs $R_X$ and $P_X$ such that $0 < D_{f}(R_X || P_X) < +\infty$. Furthermore, if $Y$ is a constant $a.s.$, we define $\eta_{f}\!\left(P_{Y|X}\right) = 0$. 
\end{definition}

This definition transparently yields SDPIs analogous to \eqref{Eq:SDPI} and \eqref{Eq: SDPI for Mutual f-Information} for contraction coefficients of channels. Furthermore, a version of Proposition \ref{Prop:Contraction Coefficient of Mutual f-Information} also holds for contraction coefficients of channels. Indeed, using Definition \ref{Def:Contraction Coefficient 2} and Proposition \ref{Prop:Contraction Coefficient of Mutual f-Information}, we observe that for any discrete channel $P_{Y|X} \in \Simplex_{\Y|\X}$, and any convex function $f:(0,\infty) \rightarrow \R$ that is differentiable, has uniformly bounded derivative in some neighborhood of unity, and satisfies $f(1) = 0$, we have: 
\begin{equation}
\eta_{f}\!\left(P_{Y|X}\right) = \sup_{\substack{P_{U,X}:\,U \rightarrow X \rightarrow Y \\ 0 < I_{f}(U;X) < +\infty}}{\frac{I_{f}(U;Y)}{I_{f}(U;X)}}
\end{equation}
where the supremum is taken over all joint pmfs $P_{U,X}$ and finite alphabets $\U$ of $U$ such that $U \rightarrow X \rightarrow Y$ form a Markov chain. The specialization of this result for KL divergence can be found in \cite[p.345, Problem 15.12]{CsiszarKorner} (finite alphabet case) and \cite{GraphSDPI} (general alphabet case).

There are two important examples of contraction coefficients of channels: the Dobrushin contraction coefficient for TV distance (defined in \eqref{Eq: Dobrushin Contraction Definition}), and the contraction coefficient for KL divergence. As seen earlier, given a channel $P_{Y|X}$, we use the notation $\etaTV(P_{Y|X})$, $\etaKL(P_{Y|X})$, and $\etaChi(P_{Y|X})$ to represent the contraction coefficient of $P_{Y|X}$ for TV distance, KL divergence, and $\chi^2$-divergence, respectively. It is proved in \cite{AhlswedeGacsHypercontraction} that for any channel $P_{Y|X}$, we have:
\begin{equation}
\label{Eq: KL Contraction and Chi-Squared Contraction}
\etaKL\!\left(P_{Y|X}\right) = \etaChi\!\left(P_{Y|X}\right) .
\end{equation}
Therefore, we do not need to consider $\etaKL$ and $\etaChi$ separately when studying contraction coefficients of channels. We remark that an alternative proof of \eqref{Eq: KL Contraction and Chi-Squared Contraction} (which holds for general measurable spaces) is given in \cite[Theorem 3]{GraphSDPI}. Furthermore, a perhaps lesser known observation is that the proof technique of \cite[Lemma 1, Theorem 1]{EvansSchulman99} (which analytically computes $\etaKL\!\left(P_{Y|X}\right)$ for any binary channel $P_{Y|X}$ with $|\X| = |\Y| = 2$), when appropriately generalized for arbitrary finite alphabet sizes, also yields a proof of \eqref{Eq: KL Contraction and Chi-Squared Contraction}. It is worth mentioning that the main contribution of Evans and Schulman in \cite{EvansSchulman99} is an inductive approach to upper bound $\etaKL$ in directed acyclic graphs. We refer readers to \cite{GraphSDPI} for an insightful distillation of this approach, as well as for proofs of its generalization to TV distance and its connection to site percolation. 

We next present some well-known properties of contraction coefficients of channels.

\begin{proposition}[Properties of Contraction Coefficients of Channels]
\label{Prop:Contraction Coefficient Properties 2}
The contraction coefficient for an $f$-divergence satisfies the following properties:
\begin{enumerate}
\item \emph{(Normalization)} For any discrete channel $P_{Y|X} \in \Simplex_{\Y|\X}$, we have that $0 \leq \eta_{f}\!\left(P_{Y|X}\right) \leq 1$.
\item \emph{(Independence \cite[Section 4]{CIR93})} Given a channel $P_{Y|X} \in \Simplex_{\Y|\X}$, if $X$ and $Y$ are independent, then $\eta_{f}\!\left(P_{Y|X}\right) = 0$. Conversely, if $f$ is strictly convex at unity and $\eta_{f}\!\left(P_{Y|X}\right) = 0$, then $X$ and $Y$ are independent.
\item \emph{(Convexity \cite[Section 4]{CIR93}, \cite[Proposition III.3]{SDPIandSobolevInequalities})} The function $\Simplex_{\Y|\X} \ni P_{Y|X} \mapsto \eta_{f}\!\left(P_{Y|X}\right)$ is convex.
\item \emph{(Sub-multiplicativity \cite[Section 4]{CIR93})} If $U \rightarrow X \rightarrow Y$ are discrete random variables with finite ranges that form a Markov chain, then we have:
$$ \eta_{f}\!\left(P_{Y|U}\right) \leq \eta_{f}\!\left(P_{X|U}\right) \eta_{f}\!\left(P_{Y|X}\right) . $$
\item \emph{($\chi^2$-Divergence Contraction Lower Bound \cite[Theorem 5.4]{CIR93}, \cite[Proposition II.6.15]{StochasticMatricesAndContractionCoefficients})} Given a channel $P_{Y|X} \in \Simplex_{\Y|\X}$, if the function $f:(0,\infty) \rightarrow \R$ that defines our $f$-divergence is twice differentiable at unity with $f^{\prime \prime}(1) > 0$, then we have:
$$ \etaChi\!\left(P_{Y|X}\right) \leq \eta_{f}\!\left(P_{Y|X}\right) . $$
\item \emph{(TV Distance Contraction Upper Bound \cite[Theorem 4.1]{CIR93}, \cite[Proposition II.4.10]{StochasticMatricesAndContractionCoefficients})} For any channel $P_{Y|X} \in \Simplex_{\Y|\X}$, we have:
$$ \eta_{f}\!\left(P_{Y|X}\right) \leq \etaTV\!\left(P_{Y|X}\right) . $$ 
\end{enumerate}
\end{proposition}

We omit proofs of these results, because the proofs are either analogous to the corresponding proofs in Proposition \ref{Prop:Contraction Coefficient Properties}, or are given in the associated references. Parts 1, 2, and 3 of Proposition \ref{Prop:Contraction Coefficient Properties 2} portray that contraction coefficients of channels are often valid proper coefficients of ergodicity.\footnote{The convexity of $P_{Y|X} \mapsto \eta_{f}(P_{Y|X})$ in part 3 of Proposition \ref{Prop:Contraction Coefficient Properties 2} implies that this map is continuous on the interior of $\Simplex_{\Y|\X}$. So, only $\eta_{f}$ that are also continuous on the boundary of $\Simplex_{\Y|\X}$ are valid coefficients of ergodicity.} We also remark that an extremization result analogous to part 5 of Proposition \ref{Prop:Contraction Coefficient Properties}, albeit less meaningful, can be derived in part 4 of Proposition \ref{Prop:Contraction Coefficient Properties 2}.

While \eqref{Eq: KL Contraction and Chi-Squared Contraction} shows that part 5 of Proposition \ref{Prop:Contraction Coefficient Properties 2} can be easily achieved with equality, the inequality in part 6 is often strict. For example, when $P_{Y|X}$ is a binary channel with parameters $a,b \in [0,1]$ and row stochastic transition probability matrix:
\begin{equation}
W = 
\begin{bmatrix}
1-a & a \\
b & 1-b
\end{bmatrix}
\end{equation}
it is straightforward to verify that $\etaKL\!\left(P_{Y|X}\right) \leq \etaTV\!\left(P_{Y|X}\right)$, with the inequality usually strict, since we have:
\begin{align}
\label{Eq: Binary KL Contraction Coefficient}
\etaKL\!\left(P_{Y|X}\right) & = 1 - \left(\sqrt{a(1-b)} + \sqrt{b(1-a)}\right)^2 \\
\etaTV\!\left(P_{Y|X}\right) & = |1-a-b|
\label{Eq: Binary TV Contraction Coefficient}
\end{align}
where \eqref{Eq: Binary KL Contraction Coefficient} is proved in \cite[Theorem 1]{EvansSchulman99}, and \eqref{Eq: Binary TV Contraction Coefficient} is easily computed via \eqref{Eq: Two Point Characterization}. Moreover, in the special case where $P_{Y|X}$ is a BSC with crossover probability $\alpha \in [0,1]$, we get \cite{AhlswedeGacsHypercontraction}:
\begin{equation}
\label{Eq: DSBS TV Bound}
\etaKL\!\left(P_{Y|X}\right) = (1-2\alpha)^2 \leq |1-2\alpha| = \etaTV\!\left(P_{Y|X}\right) .
\end{equation}
On the other hand, as shown towards the end of subsection \ref{Contraction Coefficients of Source-Channel Pairs}, $\eta_{f}\!\left(P_{Y|X}\right) = 1-\beta$ for every $f$-divergence when $P_{Y|X}$ is an $|\X|$-ary erasure channel with erasure probability $\beta \in [0,1]$.

In view of part 5 and \eqref{Eq: KL Contraction and Chi-Squared Contraction}, it is natural to wonder whether there are other $f$-divergences whose contraction coefficients (for channels) also collapse to $\etaChi$. The following result from \cite[Theorem 1]{ChoiRuskaiSeneta1994} generalizes \eqref{Eq: KL Contraction and Chi-Squared Contraction} and addresses this question.

\begin{proposition}[Contraction Coefficients for Operator Convex $f$-Divergences {\cite[Theorem 1]{ChoiRuskaiSeneta1994}, \cite{StochasticMatricesAndContractionCoefficients}}]
\label{Prop: Operator Convex f-Divergence Contraction}
For every non-linear operator convex function $f:(0,\infty) \rightarrow \R$ such that $f(1) = 0$, and every channel $P_{Y|X}$, we have:
$$ \eta_{f}\!\left(P_{Y|X}\right) = \etaChi\!\left(P_{Y|X}\right) . $$
\end{proposition}

The proof of \cite[Theorem 1]{ChoiRuskaiSeneta1994} relies on an elegant integral representation of operator convex functions. Such representations are powerful tools for proving inequalities between contraction coefficients, and we will use them to generalize Proposition \ref{Prop: Operator Convex f-Divergence Contraction} in subsection \ref{Less Noisy Preorder and Operator Convexity}. In fact, part 6 of Proposition \ref{Prop:Contraction Coefficient Properties 2} can also be proved using an integral representation argument, cf. \cite[Theorem III.1]{SDPIandSobolevInequalities}. 

In closing this overview, we also refer readers to \cite[Section 2]{GraphSDPI} for a complementary and comprehensive survey of contraction coefficients, and for references to various applications of these ideas in the literature.

\section{Main Results and Discussion}
\label{Main Results}

We will primarily derive bounds between various contraction coefficients in this paper. In particular, we will address the following leading questions:
\begin{enumerate}
\item \textit{Can we achieve the maximal correlation lower bound in Proposition \ref{Prop:Contraction Coefficient Properties} by adding constraints to the extremal problem that defines contraction coefficients of source-channel pairs?} \\
Yes, we can constrain the input $f$-divergence to be small as shown in Theorem \ref{Thm:Local Approximation of Contraction Coefficients} in subsection \ref{Local Approximation of Contraction Coefficients}. 
\item While we typically lower bound $\etaKL\!\left(P_X,P_{Y|X}\right)$ using $\etaChi\!\left(P_X,P_{Y|X}\right)$ (Proposition \ref{Prop:Contraction Coefficient Properties} part 6), we typically upper bound it using $\etaTV\!\left(P_{Y|X}\right)$ (Proposition \ref{Prop:Contraction Coefficient Properties 2} part 6). \textit{Is there a simple upper bound on $\etaKL\!\left(P_X,P_{Y|X}\right)$ in terms of $\etaChi\!\left(P_X,P_{Y|X}\right)$?} \\
Yes, two such bounds are given in Corollary \ref{Cor:Contraction Coefficient Bound} and Theorem \ref{Thm:Refined Contraction Coefficient Bound} in subsection \ref{Linear Bounds between Contraction Coefficients}.
\item \textit{Can we extend this upper bound for KL divergence to other $f$-divergences?} \\
Yes, a more general bound is presented in Theorem \ref{Thm:General Contraction Coefficient Bound} in subsection \ref{Linear Bounds between Contraction Coefficients}.
\item When $X$ and $Y$ are jointly Gaussian, the mutual information characterization in \eqref{Eq: MI Contraction} can be used to establish that $\etaKL\!\left(P_X,P_{Y|X}\right) = \etaChi\!\left(P_X,P_{Y|X}\right)$ \cite[Theorem 7]{ErkipCover}. \textit{Is there a simple proof of this result that directly uses the definition of $\etaKL$? Does this equality hold when we add a power constraint to the extremization in $\etaKL$?} \\
Yes, we discuss the Gaussian case in subsection \ref{Contraction Coefficients of Gaussian Random Variables}, and prove this equality for $\etaKL$ with a power constraint in Theorem \ref{Thm: Gaussian Contraction Coefficients}. Our proof also establishes the known equality using the KL divergence definition of $\etaKL$.
\item Contraction coefficients of channels are closely related to the less noisy preorder over channels \cite[Section 6]{GraphSDPI}. \textit{Can we generalize the result in Proposition \ref{Prop: Operator Convex f-Divergence Contraction} to say something more about the less noisy preorder?} \\
Yes, we introduce the less noisy preorder in subsection \ref{Less Noisy Preorder and Operator Convexity}, and derive a class of equivalent characterizations for it in Theorem \ref{Thm: Equivalent Characterizations of Less Noisy Preorder}.
\end{enumerate}
The bounds we will derive in response to questions 2, 3, and 4 have the form of the upper bound in:
\begin{equation}
\label{Eq: Linear Bounds}
\etaChi\!\left(P_{X},P_{Y|X}\right) \leq \eta_{f}\!\left(P_{X},P_{Y|X}\right) \leq C \etaChi\!\left(P_{X},P_{Y|X}\right)
\end{equation}
where the first inequality is simply the maximal correlation lower bound from Proposition \ref{Prop:Contraction Coefficient Properties}, and the constant $C$ depends on $P_{X,Y}$ and $f$; note that $C = 1$ in the setting of question 4. We refer to such bounds as \textit{linear bounds} between contraction coefficients of source-channel pairs. We state our main results in the next few subsections.

\subsection{Local Approximation of Contraction Coefficients}
\label{Local Approximation of Contraction Coefficients}

We assume in this subsection and in subsection \ref{Linear Bounds  between Contraction Coefficients} that we are given the random variables $X \in \X$ and $Y \in \Y$ with joint pmf $P_{X,Y}$ such that the marginal pmfs satisfy $P_X \in \Simplex_{\X}^{\circ}$ and $P_Y \in \Simplex_{\Y}^{\circ}$. Our first result portrays that forcing the input $f$-divergence to be small translates general contraction coefficients into the contraction coefficient for $\chi^2$-divergence.

\begin{theorem}[Local Approximation of Contraction Coefficients]
\label{Thm:Local Approximation of Contraction Coefficients} 
Suppose we are given a convex function $f:(0,\infty) \rightarrow \R$ that is strictly convex and twice differentiable at unity with $f(1) = 0$ and $f^{\prime\prime}(1) > 0$. Then, we have:
$$ \etaChi\!\left(P_{X},P_{Y|X}\right) = \lim_{\delta \rightarrow 0^{+}}{\sup_{\substack{{R_X \in \Simplex_{\X}:}\\{0 < D_{f}(R_X||P_X) \leq \delta}}}{\frac{D_{f}(R_X W||P_X W)}{D_{f}(R_X||P_X)}}} $$
where $W \in \Simplex_{\Y|\X}$ is the row stochastic transition probability matrix representing the channel $P_{Y|X}$.
\end{theorem}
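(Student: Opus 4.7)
The plan is to exploit the well-known local quadratic behavior of any smooth $f$-divergence near its vanishing point: when $R_X$ is close to $P_X$, one has $D_f(R_X\|P_X) \approx \tfrac{f''(1)}{2}\chi^2(R_X\|P_X)$, and likewise for the output divergence. This should reduce the constrained supremum on the right-hand side to the unconstrained $\chi^2$ contraction ratio, which by Proposition \ref{Prop:Singular Value Characterization of Maximal Correlation} equals $\etaChi(P_X,P_{Y|X})$.

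Concretely, I would parametrize any perturbation of $P_X$ as $R_X = P_X + t J$, where $J \in \R^{|\X|}$ satisfies $\1^T J = 0$ and $\|J\|_2 = 1$, and $t > 0$ is small. Since $P_X \in \Simplex_{\X}^{\circ}$, every sufficiently small $t$ keeps $R_X$ a valid pmf and keeps $R_X(x)/P_X(x)$ inside a neighborhood of $1$ where $f$ is twice differentiable. A second-order Taylor expansion of $f$ around $1$ then yields
$$D_f(R_X\|P_X) \,=\, \frac{f''(1)}{2}\, t^2 \sum_{x \in \X}\frac{J(x)^2}{P_X(x)} \,+\, o(t^2),$$
with the $o(t^2)$ term uniform over the compact set $\mathcal{S} = \{J : \1^T J = 0,\ \|J\|_2 = 1\}$; an analogous expansion, with $J$ replaced by $W J$ and $P_X$ replaced by $W P_X$, applies to $D_f(W R_X \| W P_X)$. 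Dividing, the ratio becomes
$$\frac{D_f(W R_X\|W P_X)}{D_f(R_X\|P_X)} \,=\, \frac{\sum_y (W J)(y)^2/(W P_X)(y)}{\sum_x J(x)^2/P_X(x)} \,+\, o(1) \quad\text{as } t \to 0^+.$$

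For the $\geq$ direction, I would take $J^\star \in \mathcal{S}$ that attains the maximum in \eqref{Eq: Contraction Coefficient as Singular Value} (obtained by rescaling the second right singular vector of $B$), and let $R_X^{(t)} = P_X + t J^\star$. Then $D_f(R_X^{(t)}\|P_X) \to 0$ as $t \to 0^+$, so $R_X^{(t)}$ is eventually admissible in the supremum at every fixed $\delta > 0$, and the ratio tends to $\etaChi(P_X,P_{Y|X})$. For the $\leq$ direction, take any sequence $R_X^{(n)}$ with $D_f(R_X^{(n)}\|P_X) \to 0$; strict convexity of $f$ at $1$ forces $R_X^{(n)} \to P_X$, so write $R_X^{(n)} = P_X + t_n J_n$ with $J_n \in \mathcal{S}$ and $t_n \downarrow 0$. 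The uniform Taylor expansion above makes the ratio equal to $\|WJ_n\|_{WP_X}^2/\|J_n\|_{P_X}^2 + o(1)$, which is bounded above by $\etaChi(P_X,P_{Y|X})$ for every $n$ by \eqref{Eq: Contraction Coefficient as Singular Value}. Taking $\limsup$ and then $\delta \to 0^+$ gives the matching upper bound.

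The main obstacle is technical rather than conceptual: ensuring the Taylor remainder is $o(t^2)$ \emph{uniformly} over $\mathcal{S}$, which is what allows the ratio of the two expansions to collapse cleanly to a ratio of quadratic forms. This uniformity follows from the compactness of $\mathcal{S}$ together with the boundedness of $P_X$ and $W P_X$ away from zero on their supports (we used $P_X \in \Simplex_{\X}^{\circ}$, and the assumption $P_Y \in \Simplex_{\Y}^{\circ}$ mentioned in the subsection preamble guarantees $WP_X > 0$ componentwise), plus continuity of $f''$ at $1$. Everything else is then either a standard Taylor estimate or a direct appeal to Proposition \ref{Prop:Singular Value Characterization of Maximal Correlation} and equation \eqref{Eq: Contraction Coefficient as Singular Value}.
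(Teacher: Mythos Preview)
Your proposal is correct and follows essentially the same approach as the paper: both arguments use the second-order Taylor expansion of $f$ at $1$ to reduce the constrained $f$-divergence ratio to a $\chi^2$ ratio, establish the $\geq$ direction by plugging in the optimal perturbation direction from \eqref{Eq: Contraction Coefficient as Singular Value}, and establish the $\leq$ direction by showing that any sequence with vanishing input divergence must converge to $P_X$ and then controlling the resulting ratio. The only cosmetic differences are that the paper works with spherical perturbations $K_X$ rather than your additive $J$, and handles the remainder by extracting a convergent subsequence of directions (via compactness of $\mathcal{S}$) rather than invoking uniformity of the $o(t^2)$ term; note also that your appeal to ``continuity of $f''$ at $1$'' is slightly stronger than the stated hypothesis of twice differentiability at unity, though the Peano-form remainder already suffices for the uniformity you need.
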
 

We refer readers to Appendix \ref{App: Proof of Local Approximation of Contraction Coefficients} for the proof, and note that the specialization of Theorem \ref{Thm:Local Approximation of Contraction Coefficients} for KL divergence was presented along with a proof sketch in the conference version of this paper \cite[Theorem 3]{BoundsbetweenContractionCoefficients}. We now make several pertinent remarks. Firstly, notice that the proof of part 6 of Proposition \ref{Prop:Contraction Coefficient Properties} in Appendix \ref{App: Proof of Properties of Contraction Coefficients} (or the independent proofs in \cite[Theorem III.2]{SDPIandSobolevInequalities} and \cite[Theorem 2]{GraphSDPI}) already captures the intuition that performing the optimization of $\eta_{f}\!\left(P_X,P_{Y|X}\right)$ over local perturbations of $P_X$ yields $\etaChi\!\left(P_X,P_{Y|X}\right)$ due to \eqref{Def:Spherical Local f-Divergence} and \eqref{Eq: Contraction Coefficient as Singular Value}. However, this proof (with minor modifications) only demonstrates that $\etaChi\!\left(P_X,P_{Y|X}\right)$ is upper bounded by the right hand side of Theorem \ref{Thm:Local Approximation of Contraction Coefficients}. Although it is intuitively clear that this upper bound is met with equality, the formal proof contains a few technical details as shown in Appendix \ref{App: Proof of Local Approximation of Contraction Coefficients}. 

Secondly, Theorem \ref{Thm:Local Approximation of Contraction Coefficients} transparently portrays that the maximal correlation lower bound in part 6 of Proposition \ref{Prop:Contraction Coefficient Properties} can be achieved when the optimization of $\eta_{f}\!\left(P_X,P_{Y|X}\right)$ imposes an additional constraint that the input $f$-divergence is small. (Hence, Theorem \ref{Thm:Local Approximation of Contraction Coefficients} implies the maximal correlation bound.) This insight has proved useful in comparing $\etaChi\!\left(P_X,P_{Y|X}\right)$ and $\etaKL\!\left(P_X,P_{Y|X}\right)$ in statistical contexts \cite[p.5]{Kimetal2017}.

Thirdly, Theorem \ref{Thm:Local Approximation of Contraction Coefficients} can be construed as a minimax characterization of $\etaChi\!\left(P_{X},P_{Y|X}\right)$ since the supremum of the ratio of $f$-divergences is a non-increasing function of $\delta$ and the limit (as $\delta \rightarrow 0^+$) can therefore be replaced by an infimum (over all $\delta > 0$). 

Fourthly, when the conditions of Proposition \ref{Prop:Contraction Coefficient of Mutual f-Information} and Theorem \ref{Thm:Local Approximation of Contraction Coefficients} hold, it is straightforward to verify that:
\begin{equation}
\label{Eq: Local Mutual f-Information Contraction}
\eta_{f}\!\left(P_X,P_{Y|X}\right) = \lim_{\delta \rightarrow 0^+}{\sup_{\substack{P_{U|X} :\, U \rightarrow X \rightarrow Y\\ 0 < I_{f}(U;X) \leq \delta}}{\frac{I_{f}(U;Y)}{I_{f}(U;X)}}}
\end{equation}
where the supremum is taken over all conditional distributions $P_{U|X} \in \Simplex_{\U|\X}$ such that $\U = \{0,1\}$, $U \sim \textsf{\small Bernoulli}\!\left(\frac{1}{2}\right)$, and $U \rightarrow X \rightarrow Y$ form a Markov chain. Thus, the small input $f$-divergence constraint in the $f$-divergence formulation of $\eta_{f}\!\left(P_{X},P_{Y|X}\right)$ corresponds to the small $I_{f}(U;X)$ and $U \sim \textsf{\small Bernoulli}\!\left(\frac{1}{2}\right)$ constraints in \eqref{Eq: Local Mutual f-Information Contraction}. 

Lastly, consider the trajectory of input pmfs $R_{X}^{(\epsilon)} = P_X + \epsilon \, K_X^{*} \, \textsf{\small diag}\!\left(\sqrt{P_X}\right)$, where $\epsilon > 0$ is sufficiently small, and $K_{X}^{*} \in \left(\R^{|\X|}\right)^{\! *}$ is the left singular vector corresponding to the second largest singular value of the DTM $B$ (see \eqref{Eq: Contraction Coefficient as Singular Value}). As the proof in Appendix \ref{App: Proof of Local Approximation of Contraction Coefficients} illustrates, this trajectory satisfies $\lim_{\epsilon \rightarrow 0}{D_{f}(R_X^{(\epsilon)}||P_X)} = 0$ and achieves $\etaChi\!\left(P_{X},P_{Y|X}\right)$ in Theorem \ref{Thm:Local Approximation of Contraction Coefficients}:
\begin{equation}
\lim_{\epsilon \rightarrow 0}{\frac{D_{f}(R_X^{(\epsilon)} W||P_X W)}{D_{f}(R_X^{(\epsilon)}||P_X)}} = \etaChi\!\left(P_{X},P_{Y|X}\right) .
\end{equation}
The corresponding trajectory of conditional distributions for \eqref{Eq: Local Mutual f-Information Contraction} is $\{P_{X|U = u}^{(\epsilon)} = P_X + (2u - 1) \, \epsilon \, K_X^{*} \,\textsf{\small diag}\!\left(\sqrt{P_X}\right):u \in \{0,1\}\}$, where $\epsilon > 0$ is sufficient small. This trajectory satisfies $\lim_{\epsilon \rightarrow 0}{I_{f}(P_U,P_{X|U}^{(\epsilon)})} = 0$, produces $P_X$ after $(P_U,P_{X|U}^{(\epsilon)})$ is marginalized, and achieves $\etaChi\!\left(P_{X},P_{Y|X}\right)$ in \eqref{Eq: Local Mutual f-Information Contraction}:
\begin{equation}
\lim_{\epsilon \rightarrow 0}{\frac{I_{f}(P_U,P_{Y|U}^{(\epsilon)})}{I_{f}(P_U,P_{X|U}^{(\epsilon)})}} = \etaChi\!\left(P_{X},P_{Y|X}\right)
\end{equation}
where $U \sim \textsf{\small Bernoulli}\!\left(\frac{1}{2}\right)$, and $P_{Y|U}^{(\epsilon)} = P_{X|U}^{(\epsilon)} P_{Y|X}$ as row stochastic matrices.

\subsection{Linear Bounds between Contraction Coefficients}
\label{Linear Bounds between Contraction Coefficients}

For any joint pmf $P_{X,Y}$ with $P_X \in \Simplex_{\X}^{\circ}$ and $P_Y \in \Simplex_{\Y}^{\circ}$, our next result provides a linear upper bound on $\eta_{f}\!\left(P_{X},P_{Y|X}\right)$ using $\etaChi\!\left(P_{X},P_{Y|X}\right)$ for a certain class of $f$-divergences.  

\begin{theorem}[Contraction Coefficient Bound] 
\label{Thm:General Contraction Coefficient Bound}
Suppose we are given a continuous convex function $f:[0,\infty) \rightarrow \R$ that is thrice differentiable at unity with $f(1) = 0$ and $f^{\prime \prime}(1) > 0$, and satisfies \eqref{Eq:General Pinsker Condition} for every $t \in (0,\infty)$ (see subsection \ref{Bounds on f-Divergences using Chi-Squared Divergence}). Suppose further that the difference quotient $g:(0,\infty) \rightarrow \R$, defined as $g(x) = \frac{f(x) - f(0)}{x}$, is concave. Then, we have:
$$ \eta_{f}\!\left(P_{X},P_{Y|X}\right) \leq \frac{f^{\prime}(1) + f(0)}{\displaystyle{f^{\prime \prime}(1) \min_{x \in \X}{P_X(x)}}} \, \etaChi\!\left(P_{X},P_{Y|X}\right) . $$
\end{theorem}

Theorem \ref{Thm:General Contraction Coefficient Bound} is proved in subsection \ref{Upper Bounds on Contraction Coefficients}. The conditions on $f$ ensure that the resulting $f$-divergence exhibits the properties of KL divergence required by the proof of Theorem \ref{Thm:Refined Contraction Coefficient Bound} (see below). So, a similar proof technique also works for Theorem \ref{Thm:General Contraction Coefficient Bound}. A straightforward specialization of this theorem for KL divergence (which we first proved in the conference version of this paper \cite[Theorem 10]{BoundsbetweenContractionCoefficients}) is presented next.

\begin{corollary}[KL Contraction Coefficient Bound] 
\label{Cor:Contraction Coefficient Bound}
$$ \etaKL\!\left(P_{X},P_{Y|X}\right) \leq \frac{\etaChi\!\left(P_{X},P_{Y|X}\right)}{\displaystyle{\min_{x \in \X}{P_X(x)}}} \, . $$
\end{corollary}

\begin{proof}
This can be recovered from Theorem \ref{Thm:General Contraction Coefficient Bound} by verifying that $f(t) = t \log(t)$ satisfies the conditions of Theorem \ref{Thm:General Contraction Coefficient Bound}, cf. \cite{fDivergencePinsker}. See Appendix \ref{App: Proof of Contraction Coefficient Bound from General Contraction Coefficient Bound} for details.
\end{proof}

The constant in this upper bound on $\etaKL\!\left(P_{X},P_{Y|X}\right)$ can be improved, and the ensuing theorem presents this improvement.

\begin{theorem}[Refined KL Contraction Coefficient Bound] 
\label{Thm:Refined Contraction Coefficient Bound}
$$ \etaKL\!\left(P_{X},P_{Y|X}\right) \leq \frac{2 \, \etaChi\!\left(P_{X},P_{Y|X}\right)}{\displaystyle{\phi\!\left(\max_{A \subseteq \X}\,{\pi(A)}\right) \min_{x \in \X}{P_X(x)}}} $$
where $\pi(A) \triangleq \min\!\left\{P_{X}(A),1-P_{X}(A)\right\}$ for any $A \subseteq \X$, and the function $\phi:\left[0,\frac{1}{2}\right] \rightarrow \R$ is defined in \eqref{Eq: phi Function} (see subsection \ref{Bounds on f-Divergences using Chi-Squared Divergence}). 
\end{theorem}

Theorem \ref{Thm:Refined Contraction Coefficient Bound} is also proved in subsection \ref{Upper Bounds on Contraction Coefficients}, and it is tighter than the bound in Corollary \ref{Cor:Contraction Coefficient Bound} due to \eqref{Eq: phi Bound} in subsection \ref{Bounds on f-Divergences using Chi-Squared Divergence}. We now make some pertinent remarks about Corollary \ref{Cor:Contraction Coefficient Bound} and Theorems \ref{Thm:General Contraction Coefficient Bound} and \ref{Thm:Refined Contraction Coefficient Bound}. 

\begin{figure}[!t]
\centering

\subfloat[Plots of $\etaChi(P_{X},P_{Y|X})$ (blue mesh), $\etaKL(P_{X},P_{Y|X})$ (red mesh), and linear upper bounds on $\etaKL(P_{X},P_{Y|X})$. The green mesh denotes the upper bound from Corollary \ref{Cor:Contraction Coefficient Bound}, and the yellow mesh denotes the tighter upper bound from Theorem \ref{Thm:Refined Contraction Coefficient Bound}.]{%
\includegraphics[trim = 0mm 0mm 0mm 0mm, clip, width=\linewidth]{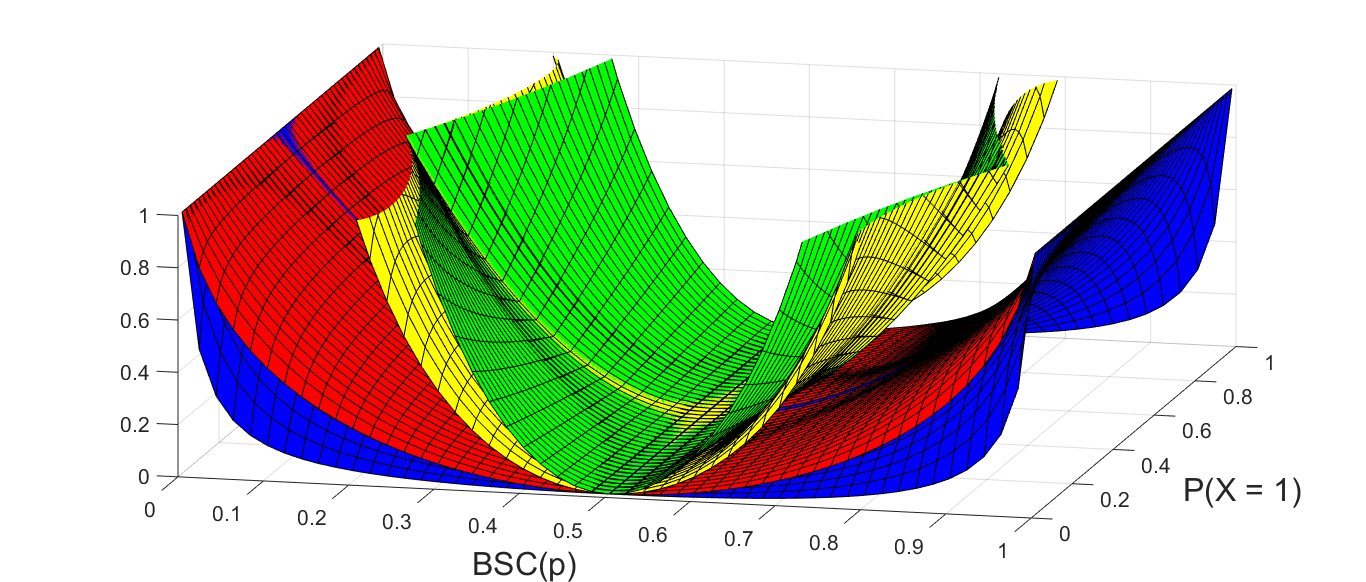}}

\subfloat[Plots of upper bounds on the ratio $\etaKL(P_{X},P_{Y|X})/\etaChi(P_{X},P_{Y|X})$, denoted by the red mesh. The bound $1/\min_{x \in \X}{P_X(x)}$ from Corollary \ref{Cor:Contraction Coefficient Bound} is the green mesh, and the bound $2/(\phi(\max_{A \subseteq \X}{\pi(A)})\min_{x \in \X}\!{P_X(x)})$ from Theorem \ref{Thm:Refined Contraction Coefficient Bound} is the blue mesh.]{%
\includegraphics[trim = 0mm 0mm 0mm 0mm, clip, width=\linewidth]{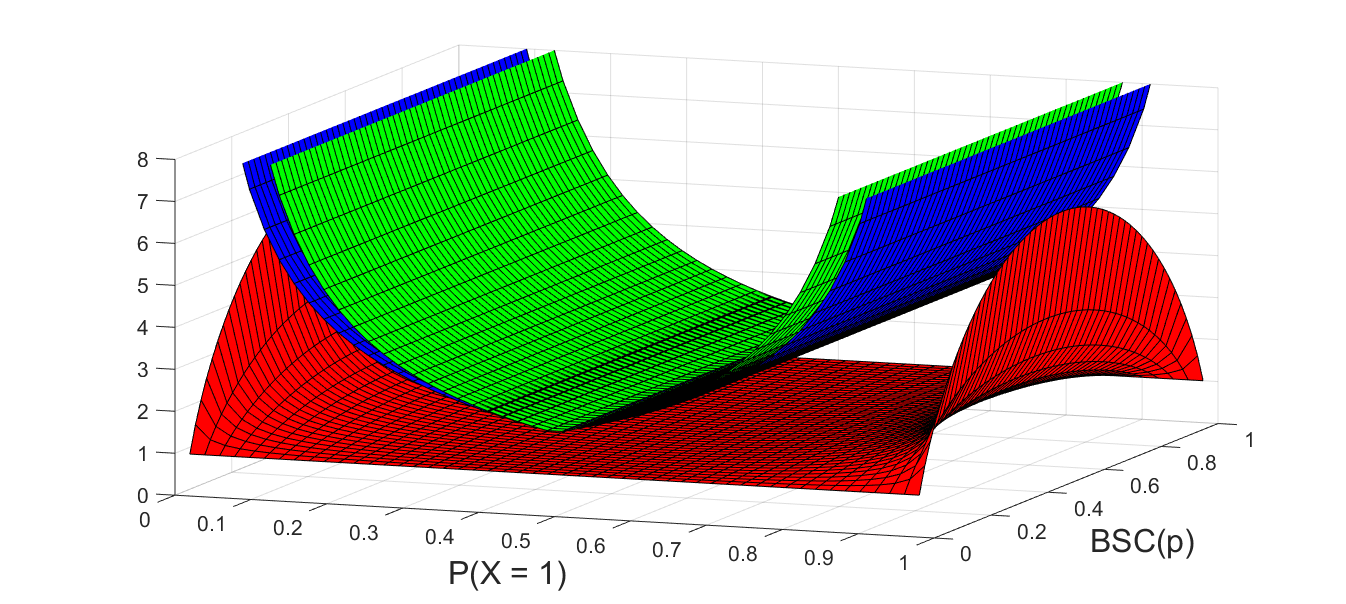}}

\caption{Plots of the contraction coefficient bounds in Corollary \ref{Cor:Contraction Coefficient Bound} and Theorem \ref{Thm:Refined Contraction Coefficient Bound} for a BSC, $P_{Y|X}$, with crossover probability $p \in [0,1]$, and input random variable $X \sim \textsf{{\scriptsize Bernoulli}}(\P(X = 1))$.}
\label{Fig:Contraction Coefficient Bounds}
\end{figure}

Firstly, as shown in Figure \ref{Fig:Contraction Coefficient Bounds}(a), the upper bounds in these results can be strictly less than the trivial bound of unity. For example, when $(X,Y) \sim \textsf{\small DSBS}(p)$ for some $p \in [0,1]$ (which is a slice along $\P(X = 1) = \frac{1}{2}$ in Figure \ref{Fig:Contraction Coefficient Bounds}(a)), the upper bounds in Corollary \ref{Cor:Contraction Coefficient Bound} and Theorem \ref{Thm:Refined Contraction Coefficient Bound} are both equal to:
\begin{equation}
\label{Eq: Binary Upper Bound}
\frac{2 \, \etaChi\!\left(P_{X},P_{Y|X}\right)}{\displaystyle{\phi\!\left(\max_{A \subseteq \X}\,{\pi(A)}\right) \min_{x \in \X}{P_X(x)}}} = \frac{\etaChi\!\left(P_X,P_{Y|X}\right)}{\displaystyle{\min_{x \in \X}{P_X(x)}}} = 2(1-2p)^2
\end{equation}
using \eqref{Eq: Equality of Contraction Coefficients} and the fact that $\max_{A \subseteq \X}{\pi(A)} = \frac{1}{2}$. This upper bound is tighter than the trivial bound of unity when:
\begin{equation}
\label{Eq: Greater than 1}
2(1-2p)^2 < 1 \quad \Leftrightarrow \quad \frac{2-\sqrt{2}}{4} < p < \frac{2+\sqrt{2}}{4} \, .
\end{equation}
We also note that this upper bound is not achieved with equality in this scenario since $\etaKL\!\left(P_X,P_{Y|X}\right) = \etaChi\!\left(P_X,P_{Y|X}\right) = (1-2p)^2$, as shown in \eqref{Eq: Equality of Contraction Coefficients}.  

Secondly, our proofs of Theorems \ref{Thm:General Contraction Coefficient Bound} and \ref{Thm:Refined Contraction Coefficient Bound} will rely on extensions of the well-known \textit{Pinsker's inequality} (or Csisz\'{a}r-Kemperman-Kullback-Pinsker inequality, cf. \cite[Section V]{Verdu2014}) which upper bound TV distance using KL and other $f$-divergences. So, it is natural to ask: Are these bounds tighter than the TV distance contraction bound in part 6 of Proposition \ref{Prop:Contraction Coefficient Properties 2}? As the ensuing example illustrates, our bounds are tighter in certain regimes. Let $(X,Y) \sim \textsf{\small DSBS}(p)$ for some $p \in [0,1]$. Then, \eqref{Eq: Binary Upper Bound} presents the upper bounds in Corollary \ref{Cor:Contraction Coefficient Bound} and Theorem \ref{Thm:Refined Contraction Coefficient Bound}, and the TV distance contraction bound is:
\begin{equation}
\etaKL\!\left(P_{X},P_{Y|X}\right) \leq \etaKL\!\left(P_{Y|X}\right) \leq \etaTV\!\left(P_{Y|X}\right) = |1-2p| 
\end{equation}
using Definition \ref{Def:Contraction Coefficient 2}, part 6 of Proposition \ref{Prop:Contraction Coefficient Properties 2}, and \eqref{Eq: DSBS TV Bound}. Hence, our bound in \eqref{Eq: Binary Upper Bound} is tighter than the $\etaTV$ bound when:
\begin{equation}
2 (1-2p)^2 < |1-2p| \enspace \Leftrightarrow \enspace \frac{1}{4} < p < \frac{1}{2} \text{ or } \frac{1}{2} < p < \frac{3}{4} \, .
\end{equation}
Since our upper bounds can be greater than $1$ (see \eqref{Eq: Greater than 1}), we cannot hope to beat the $\etaTV$ ($\leq 1$) bound in all regimes. On the other hand, one advantage of our upper bounds is that they ``match'' the $\etaChi$ lower bound in part 6 of Proposition \ref{Prop:Contraction Coefficient Properties}; we will illustrate a useful application of this in subsection \ref{Markov Convergence}.

Thirdly, we intuitively expect a bound between contraction coefficients to depend on the cardinalities $|\X|$ or $|\Y|$. Since the minimum probability term in all our upper bounds satisfies $1/\!\min_{x \in \X}{P_X(x)} \geq |\X|$, we can superficially construe it as ``modeling'' $|\X|$. Unfortunately, this intuition is quite misleading. Simulations for the binary case, depicted in Figure \ref{Fig:Contraction Coefficient Bounds}(b), illustrate that the ratio $\etaKL\!\left(P_{X},P_{Y|X}\right)\!/\etaChi\!\left(P_{X},P_{Y|X}\right)$ increases significantly near the boundary of $\Simplex_{\X}$ when any of the probability masses of $P_X$ is close to $0$. This effect, while unsurprising given the skewed nature of probability simplices at their boundaries with respect to KL divergence as the distance measure, is correctly captured by the upper bounds in Corollary \ref{Cor:Contraction Coefficient Bound} and Theorem \ref{Thm:Refined Contraction Coefficient Bound} because $1/\!\min_{x \in \X}{P_X(x)}$ increases when any of the input probability masses tends to $0$ (see Figure \ref{Fig:Contraction Coefficient Bounds}(b)). Clearly, linear upper bounds on $\eta_{f}\!\left(P_{X},P_{Y|X}\right)$ that are purely in terms of $|\X|$ or $|\Y|$ cannot capture this effect. This gives credence to the existence of the minimum probability term in our linear bounds. 

Finally, we note that the inequality $1/\!\min_{x \in \X}{P_X(x)} \geq |\X|$ does not preclude the possibility of $1/\!\min_{x \in \X}{P_X(x)}$ being much larger than $|\X|$. So, our bounds can become loose when $|\X|$ is large (see the example in subsection \ref{Tensorization of Bounds between Contraction Coefficients}). As a result, the bounds in Theorem \ref{Thm:General Contraction Coefficient Bound}, Corollary \ref{Cor:Contraction Coefficient Bound}, and Theorem \ref{Thm:Refined Contraction Coefficient Bound} are usually of interest in the following settings:
\begin{enumerate}
\item $|\X|$ and $|\Y|$ are small: Figure \ref{Fig:Contraction Coefficient Bounds} portrays that our bounds can be quite tight when $|\X| = |\Y| = 2$.
\item Weak dependence i.e. $I(X;Y)$ is small: This situation naturally arises in the analysis of ergodicity of Markov chains\textemdash{}see subsection \ref{Markov Convergence}.
\item Product Distributions: If the underlying joint pmf is a product pmf, we can exploit tensorization of contraction coefficients (Proposition \ref{Prop:Contraction Coefficient Properties} part 4)\textemdash{}see subsection \ref{Tensorization of Bounds between Contraction Coefficients}.
\end{enumerate}

\subsection{Contraction Coefficients of Gaussian Random Variables}
\label{Contraction Coefficients of Gaussian Random Variables}

In this subsection, we consider contraction coefficients for KL and $\chi^2$-divergences corresponding to Gaussian source-channel pairs. Suppose $X$ and $Y$ are jointly Gaussian random variables. Their joint distribution has three possible forms:
\begin{enumerate}
\item $X$ or $Y$ are constants $a.s.$, and we define the contraction coefficients to be $\etaKL\!\!\left(P_X,P_{Y|X}\right) = \etaChi\!\!\left(P_X,P_{Y|X}\right) = 0$.
\item $a X + b Y = c \enspace a.s.$ for some constants $a,b,c \in \R$ such that $a \neq 0$ and $b \neq 0$. Here, it is straightforward to verify that $\rho(X;Y)$ = 1, which implies that $\etaKL\!\left(P_X,P_{Y|X}\right) = \etaChi\!\left(P_X,P_{Y|X}\right) = 1$.\footnote{Note that Definition \ref{Def:Maximal Correlation} holds for general random variables, and \eqref{Eq: Maximal Correlation as Contraction Coefficient} and part 6 of Proposition \ref{Prop:Contraction Coefficient Properties} (which also hold generally\textemdash{}see \cite[Equations (9) and (13)]{GraphSDPI}) can be used to conclude $\etaKL\!\left(P_X,P_{Y|X}\right) = \etaChi\!\left(P_X,P_{Y|X}\right) = 1$.}
\item The joint probability density function (pdf) $P_{X,Y}$ exists with respect to the Lebesgue measure.
\end{enumerate}
The final non-degenerate case is our regime of interest. For simplicity, we will assume that $X$ and $Y$ are zero-mean, and analyze the classical \textit{additive white Gaussian noise} (AWGN) channel model \cite[Chapter 9]{CoverThomas}:
\begin{equation}
\label{Eq: AWGN}
Y = X + W, \quad X \indep W
\end{equation}
where the input is $X \sim \Gauss(0,\sigma_X^2)$ with $\sigma_X^2 > 0$ (i.e. $X$ has a Gaussian pdf with mean $0$ and variance $\sigma_X^2$), the Gaussian noise is $W \sim \Gauss(0,\sigma_W^2)$ with $\sigma_W^2 > 0$, and $X$ is independent of $W$. This relation also defines the channel conditional pdfs $\{P_{Y|X = x} = \Gauss(x,\sigma_W^2) : x \in \R\}$. 

For the jointly Gaussian pdf $P_{X,Y}$ define above, the contraction coefficients for KL and $\chi^2$-divergences are given by (cf. \eqref{Eq:Contraction Coefficient for KL Divergence} and \eqref{Eq:Contraction Coefficient for Chi Squared Divergence}): 
\begin{align}
\label{Eq: Gaussian KL Contraction}
\etaKL\!\left(P_{X},P_{Y|X}\right) & = \sup_{\substack{R_X:\\0 < D(R_X || P_X) < +\infty}}{\frac{D(R_Y||P_Y)}{D(R_X||P_X)}} \\
\etaChi\!\left(P_{X},P_{Y|X}\right) & = \sup_{\substack{R_X:\\0 < \chi^2(R_X || P_X) < +\infty}}{\frac{\chi^2(R_Y||P_Y)}{\chi^2(R_X||P_X)}}
\label{Eq: Gaussian Chi Squared Contraction} 
\end{align}
where the suprema are over pdfs $R_X$ (which differ from $P_X$ on a set with non-zero Lebesgue measure),\footnote{When $P_X$ is a general probability measure and $P_{Y|X}$ is a Markov kernel between two measurable spaces, the contraction coefficients for KL and $\chi^2$-divergences are defined exactly as in \eqref{Eq:Contraction Coefficient for KL Divergence} and \eqref{Eq:Contraction Coefficient for Chi Squared Divergence} using the measure theoretic definitions of KL and $\chi^2$-divergences \cite[Section 2]{GraphSDPI}. In \eqref{Eq: Gaussian KL Contraction}, when we optimize over all probability measures $R_X$ on $\R$ (with its Borel $\sigma$-algebra), the constraint $D(R_X||P_X) < +\infty$ implies that $R_X$ must be absolutely continuous with respect to the Gaussian distribution $P_X$, cf. \cite[Section 1.6]{InfoTheoryNotes}. Hence, the supremum in \eqref{Eq: Gaussian KL Contraction} can be taken over all pdfs $R_X$ such that $0 < D(R_X||P_X) < +\infty$. A similar argument applies for \eqref{Eq: Gaussian Chi Squared Contraction}. (Note that KL and $\chi^2$-divergences for pdfs are defined just as in \eqref{Eq:KL Divergence} and \eqref{Eq:Chi-Squared Divergence} with Lebesgue integrals replacing summations.)} and $R_Y$ denotes the marginal pdf of $Y$ after passing $R_X$ through the AWGN channel $P_{Y|X}$. In particular, $R_Y = R_X \star \Gauss(0,\sigma_W^2)$, where $\star$ denotes the \textit{convolution} operation. Furthermore, we define the \textit{contraction coefficient for KL divergence with average power constraint} $p \geq \sigma_X^2$ as:
\begin{equation}
\label{Eq: Contraction Coefficient with Power Constraint}
\etaKLp\!\!\left(P_{X},P_{Y|X}\right) \triangleq \sup_{\substack{R_X: \, \E_{R_X}\!\left[X^2\right] \leq p\\0 < D(R_X||P_X) < +\infty}}{\frac{D(R_Y||P_Y)}{D(R_X||P_X)}} 
\end{equation}
where the supremum is over pdfs $R_X$ satisfying the average power constraint $\E\!\left[X^2\right] \leq p$. Note that setting $p = +\infty$ yields the standard contraction coefficient in \eqref{Eq: Gaussian KL Contraction}.

It is well-known in the literature that $\etaKL\!\left(P_{X},P_{Y|X}\right) = \etaChi\!\left(P_{X},P_{Y|X}\right)$ for the jointly Gaussian pdf $P_{X,Y}$ in \eqref{Eq: AWGN}. For example, \cite[Theorem 7]{ErkipCover} proves this result in the context of investment portfolio theory, \cite[p.2]{Nair2014} proves a generalization of it in the context of Gaussian hypercontractivity, and \cite[Section 5.2, part 5]{Kimetal2017} proves it in an effort to axiomatize $\etaKL$. While the proofs in \cite[Theorems 6 and 7]{ErkipCover} and \cite[Section 5.2, part 5]{Kimetal2017} use the mutual information characterization of $\etaKL$ in \eqref{Eq: MI Contraction} (cf. \cite[Theorem 4]{GraphSDPI}), we provide an alternative proof of this result in section \ref{Proof of Gaussian Contraction Coefficients} that directly uses the KL divergence definition of $\etaKL$ in \eqref{Eq: Gaussian KL Contraction}. Furthermore, our proof also establishes that $\etaKLp\!\!\left(P_{X},P_{Y|X}\right)$ equals $\etaChi\!\left(P_{X},P_{Y|X}\right)$ for every $p \in \left[\sigma_X^2,\infty\right]$. Although this latter result follows easily from our proof, it has not explicitly appeared in the literature to our knowledge. The ensuing theorem states these results formally.

\begin{theorem}[Gaussian Contraction Coefficients] 
\label{Thm: Gaussian Contraction Coefficients}
Given the jointly Gaussian pdf $P_{X,Y}$, defined via \eqref{Eq: AWGN} with source $P_X = \Gauss(0,\sigma_X^2)$ and channel $\{P_{Y|X = x} = \Gauss(x,\sigma_W^2) : x \in \R\}$ such that $\sigma_X^2,\sigma_W^2 > 0$, the following quantities are equivalent:
\begin{align*}
\etaKL\!\left(P_{X},P_{Y|X}\right) & = \etaKLp\!\!\left(P_{X},P_{Y|X}\right) \\
& = \etaChi\!\left(P_{X},P_{Y|X}\right) = \frac{\sigma_X^2}{\sigma_X^2 + \sigma_W^2} 
\end{align*}
where the average power constraint $p \geq \sigma_X^2$.
\end{theorem}

As mentioned earlier, we prove this in section \ref{Proof of Gaussian Contraction Coefficients}. In contrast to Theorem \ref{Thm: Gaussian Contraction Coefficients}, where $\etaKL\!\left(P_{X},P_{Y|X}\right)$ and $\etaKLp\!\!\left(P_{X},P_{Y|X}\right)$ can both be strictly less than $1$, we note that the contraction coefficients for KL divergence of channels (i.e. the setting of Definition \ref{Def:Contraction Coefficient 2}) are equal to $1$ regardless of whether we impose power constraints, cf. \cite[Section 1.2]{ContractionCoefficients} and \cite[Section 1]{CalmonPolyanskiyWu2018}. 

\subsection{Less Noisy Preorder and Operator Convexity}
\label{Less Noisy Preorder and Operator Convexity}

Our last main result presents an equivalent characterization of the less noisy preorder over channels that generalizes the result in Proposition \ref{Prop: Operator Convex f-Divergence Contraction}. We begin by defining the less noisy preorder. Within the finite alphabet setting of subsection \ref{f-Divergence}, consider an input random variable $X \in \X$, and two output random variables $Y \in \Y$ and $Z \in \Z$, where $\Z \triangleq \{1,\dots,|\Z|\}$ such that $2 \leq |\Z| < +\infty$. Let $P_{Y|X}$ and $P_{Z|X}$ be any two channels with the same input alphabet $\X$, and corresponding row stochastic transition probability matrices $W \in \Simplex_{\Y|\X}$ and $V \in \Simplex_{\Z|\X}$, respectively. We say that $P_{Y|X}$ is \textit{less noisy} than $P_{Z|X}$, denoted $P_{Y|X} \succeq_{\textsf{\tiny ln}} P_{Z|X}$, if and only if:
\begin{equation}
\label{Eq: Less Noisy Definition}
D(R_X W || P_X W) \geq D(R_X V || P_X V)
\end{equation}
for every pair of pmfs $R_X,P_X \in \Simplex_{\X}$ \cite{ChannelPartialOrders}. It is straightforward to verify that \eqref{Eq: Less Noisy Definition} defines a preorder over channel matrices. Moreover, this definition conveys that the pair of pmfs $R_X W$ and $P_X W$ is ``more distinguishable'' than the pair $R_X V$ and $P_X V$, which indeed intuitively corresponds to $W$ being ``less noisy'' than $V$. There are several other equivalent characterizations of $\succeq_{\textsf{\tiny ln}}$; for example, via channel coding \cite[Definition B, Proposition 2]{ChannelPartialOrders}, mutual information \cite[Proposition 2]{ChannelPartialOrders}, and the van Dijk functional \cite[Theorem 2]{vanDijk1997}. We refer readers to \cite[Sections I-B, I-D, II-A, IV]{SymmetricChannelDomination} and the references therein for further details on the less noisy preorder.

The authors of \cite[Section 6]{GraphSDPI} illustrate that less noisy domination of a given channel by an erasure channel is closely related to the contraction coefficient for KL divergence of the channel. Recall that $E_{1-\beta}$ denotes an $|\X|$-ary erasure channel with erasure probability $1-\beta \in [0,1]$, input alphabet $\X$, and output alphabet $\X \cup \{\textsf{\small e}\}$ (as defined towards the end of subsection \ref{Contraction Coefficients of Source-Channel Pairs}). It can be deduced from \cite[Proposition 15]{GraphSDPI} that for any channel $P_{Y|X}$:\footnote{Note that when $\beta = 1$, $E_{0}$ is the identity channel and $E_{0} \succeq_{\textsf{\tiny ln}} P_{Y|X}$.}
\begin{equation}
\label{Eq: Less Noisy Characterization}
\etaKL\!\left(P_{Y|X}\right) = \inf\!\left\{\beta \in [0,1] : E_{1-\beta} \succeq_{\textsf{\tiny ln}} P_{Y|X}\right\} .
\end{equation}
In \cite[Section IV-A]{SymmetricChannelDomination}, the authors note that while \eqref{Eq: Less Noisy Characterization} conveys that $\etaKL$ characterizes less noisy domination by erasure channels, \eqref{Eq: KL Contraction and Chi-Squared Contraction} portrays that $\etaChi$ also characterizes this domination. This begs the question: Does $\chi^2$-divergence characterize the less noisy preorder in general? To answer this question, \cite[Theorem 1]{SymmetricChannelDomination} (which we present later as Lemma \ref{Lemma: Chi-Squared Divergence Characterization of Less Noisy}) characterizes $\succeq_{\textsf{\tiny ln}}$ using $\chi^2$-divergence, thereby generalizing \eqref{Eq: KL Contraction and Chi-Squared Contraction}.

Inspired by these results, we consider Proposition \ref{Prop: Operator Convex f-Divergence Contraction}, which shows that $\etaKL\!\left(P_{Y|X}\right) = \eta_{f}\!\left(P_{Y|X}\right)$ for all non-linear operator convex functions $f$ (defined in subsection \ref{Operator Convex Functions}). The ensuing theorem generalizes both Proposition \ref{Prop: Operator Convex f-Divergence Contraction} and \cite[Theorem 1]{SymmetricChannelDomination}, and portrays that non-linear operator convex $f$-divergences also characterize the less noisy preorder. 

\begin{theorem}[Equivalent Characterizations of $\succeq_{\textsf{\tiny ln}}$]
\label{Thm: Equivalent Characterizations of Less Noisy Preorder}
Consider any non-linear operator convex function $f:(0,\infty) \rightarrow \R$ such that $f(1) = 0$. Then, for any two channels $P_{Y|X}$ and $P_{Z|X}$ on the same input alphabet $\X$ with row stochastic transition probability matrices $W \in \Simplex_{\Y|\X}$ and $V \in \Simplex_{\Z|\X}$ respectively, $P_{Y|X} \succeq_{\textsf{\tiny ln}} P_{Z|X}$ if and only if:
$$ D_{f}(R_X W||P_X W) \geq D_{f}(R_X V||P_X V) $$
for every pair of input pmfs $R_X,P_X \in \Simplex_{\X}$.
\end{theorem} 

Theorem \ref{Thm: Equivalent Characterizations of Less Noisy Preorder} is proved in subsection \ref{Proof of Theorem: Equivalent Characterizations of Less Noisy Preorder} using techniques from \cite{ChoiRuskaiSeneta1994}. It is well-known that $f(t) = t \log(t)$ and $f(t) = \frac{t^{\alpha} - 1}{\alpha - 1}$ for any $\alpha \in (0,1) \cup (1,2]$ are operator convex functions (see \cite[Theorems V.2.5 and V.2.10, Exercises V.2.11 and V.2.13]{Bhatia1997}, and apply the affine transformation property in subsection \ref{Operator Convex Functions} appropriately). Hence, one class of $f$-divergences that satisfy the conditions of the theorem are the Hellinger divergences of order $\alpha \in (0,2]$, where the cases $\alpha = 1$ and $\alpha = 2$ correspond to KL and $\chi^2$-divergences, respectively.

\section{Proofs of Linear Bounds between Contraction Coefficients}
\label{Proofs of Linear Bounds between Contraction Coefficients}

In this section, we will prove Theorems \ref{Thm:General Contraction Coefficient Bound} and \ref{Thm:Refined Contraction Coefficient Bound}. The central idea to establish these results entails upper and lower bounding the $f$-divergences in the numerator and denominator of Definition \ref{Def:Contraction Coefficient} respectively, using $\chi^2$-divergences. To this end, we will illustrate some simple bounds between $f$-divergences and $\chi^2$-divergence in the next subsection, and prove the main results in subsection \ref{Upper Bounds on Contraction Coefficients}.

\subsection{Bounds on $f$-Divergences using $\chi^2$-Divergence}
\label{Bounds on f-Divergences using Chi-Squared Divergence}

We first present bounds between KL divergence and $\chi^2$-divergence. To derive our lower bound on KL divergence, we will require the following ``distribution dependent refinement of Pinsker's inequality'' proved in \cite{TighterPinskerInequality}.

\begin{lemma}[Distribution Dependent Pinsker's Inequality {\cite[Theorem 2.1]{TighterPinskerInequality}}]
\label{Lemma: Distribution Dependent Pinsker's Inequality}
For any two pmfs $R_X,P_X \in \Simplex_{\X}$, we have:
$$ D(R_X||P_X) \geq \phi\!\left(\max_{A \subseteq \X}{\pi(A)}\right) \left\|R_X - P_X\right\|_{\textsf{\tiny TV}}^2 $$
where $\pi(A) = \min\!\left\{P_{X}(A),1-P_{X}(A)\right\}$ for any $A \subseteq \X$, and the function $\phi:\left[0,\frac{1}{2}\right] \rightarrow \R$ is defined as:
\begin{equation}
\label{Eq: phi Function}
\phi(p) \triangleq 
  \left\{ 
    \begin{array}{cl}
      \frac{1}{1-2p}\log\left(\frac{1-p}{p}\right) &, \enspace p \in \left[0,\frac{1}{2}\right) \\
      2 &, \enspace p = \frac{1}{2}
    \end{array}
	\right. .
\end{equation}
Moreover, this inequality uses the optimal distribution dependent constant in the sense that for any fixed $P_X \in \Simplex_{\X}$:
$$ \inf_{R_X \in \Simplex_{\X}\backslash\{P_X\}}{\frac{D(R_X||P_X)}{\left\|R_X - P_X\right\|_{\textsf{\tiny TV}}^2}} = \phi\!\left(\max_{A \subseteq \X}{\pi(A)}\right) . $$
\end{lemma}

Recall that Pinsker's inequality states that for any $R_X,P_X \in \Simplex_{\X}$ (see e.g. \cite[Lemma 11.6.1]{CoverThomas}):
\begin{equation}
\label{Eq:Pinsker's Inequality}
D(R_X||P_X) \geq 2 \left\|R_X - P_X\right\|_{\textsf{\tiny TV}}^2 .
\end{equation}
Hence, Lemma \ref{Lemma: Distribution Dependent Pinsker's Inequality} is tighter than Pinsker's inequality, because $0 \leq \max_{A \subseteq \X}{\pi(A)} \leq \frac{1}{2}$, and hence:
\begin{equation} 
\label{Eq: phi Bound}
\phi\!\left(\max_{A \subseteq \X}{\pi(A)}\right) \geq 2 
\end{equation}
with equality if and only if $\max_{A \subseteq \X}{\pi(A)} = \frac{1}{2}$, cf. \cite[Section III]{TighterPinskerInequality}. The ensuing lemma uses Lemma \ref{Lemma: Distribution Dependent Pinsker's Inequality} to lower bound KL divergence using $\chi^2$-divergence. 

\begin{lemma}[KL Divergence Lower Bound] 
\label{Lemma:Distribution Dependent KL Divergence Lower Bound}
Given any two pmfs $R_X,P_X \in \Simplex_{\X}$, we have:\footnote{Throughout this paper, when $\min_{x \in \X}{P_X(x)} = 0$ and $\chi^2(R_X||P_X) = +\infty$, we assume that $\min_{x \in \X}{P_X(x)} \, \chi^2(R_X||P_X) = 0$.}
$$ D(R_X||P_X) \geq \frac{\displaystyle{\phi\!\left(\max_{A \subseteq \X}{\pi(A)}\right) \min_{x \in \X}{P_X(x)}}}{2} \, \chi^2(R_X||P_X) $$
where $\pi(\cdot)$ and $\phi:\left[0,\frac{1}{2}\right] \rightarrow \R$ are defined in Lemma \ref{Lemma: Distribution Dependent Pinsker's Inequality}.
\end{lemma}

\begin{proof}
Observe that if $R_X = P_X$ or $\min_{x \in \X}{P_X(x)} = 0$, then the inequality is trivially satisfied. So, we assume without loss of generality that $R_X \neq P_X$ and $P_X \in \Simplex_{\X}^{\circ}$. 

Since $\chi^2$-divergence resembles a weighted $\ell^2$-norm, we first use Lemma \ref{Lemma: Distribution Dependent Pinsker's Inequality} to get the lower bound:
\begin{equation}
\label{Eq: Intermediate Pinsker}
D(R_X||P_X) \geq \phi\!\left(\max_{A \subseteq \X}{\pi(A)}\right) \frac{\left\|R_X - P_X\right\|_{1}^2}{4}
\end{equation}
where we use the $\ell^1$-norm characterization of TV distance given in \eqref{Eq: Total Variation Distance}. We next notice using \eqref{Eq:Chi-Squared Divergence} that:
\begin{align*}
\chi^2(R_X||P_X) & = \sum_{x \in \X}{\left|R_X(x) - P_X(x)\right| \left|\frac{R_X(x) - P_X(x)}{P_X(x)}\right|} \nonumber \\
& \leq \frac{\left\|R_X - P_X\right\|_{\infty}}{\displaystyle{\min_{x \in \X}{P_X(x)}}} \left\|R_X - P_X\right\|_{1} .
\end{align*}
This implies that:
\begin{align}
\frac{\left\|R_X - P_X\right\|_{1}^2}{\displaystyle{\min_{x \in \X}{P_X(x)}}} & \geq \chi^2(R_X||P_X) \frac{\left\|R_X - P_X\right\|_{1}}{\left\|R_X - P_X\right\|_{\infty}} \nonumber \\
& \geq \chi^2(R_X||P_X) \min_{\substack{S_X,Q_X \in \Simplex_{\X}\\S_X \neq Q_X}}{\frac{\left\|S_X - Q_X\right\|_{1}}{\left\|S_X - Q_X\right\|_{\infty}}} \nonumber \\
& = 2 \, \chi^2(R_X||P_X)
\label{Eq:Common Step} 
\end{align}
where we use the fact that:
\begin{equation}
\label{Eq: Helper min}
\min_{\substack{S_X,Q_X \in \Simplex_{\X}\\S_X \neq Q_X}}{\frac{\left\|S_X - Q_X\right\|_{1}}{\left\|S_X - Q_X\right\|_{\infty}}} = 2 \, . 
\end{equation}

To prove \eqref{Eq: Helper min}, note that for every $S_X,Q_X \in \Simplex_{\X}$ (see e.g. \cite[Lemma 1]{fDivergenceBoundsSason}):
$$ \left\|S_X - Q_X\right\|_{\infty} \leq \frac{1}{2} \left\|S_X - Q_X\right\|_{1} $$
because $(S_X - Q_X) \1 = 0$, and this inequality can in fact be tight. For example, choose any pmf $Q_X \in \Simplex_{\X}^{\circ}$ and let $x_0 = \argmin_{x \in \X}{Q_X(x)}$. Then, select $S_X \in \Simplex_{\X}$ such that $S_X(x_0) = Q_X(x_{0}) + \delta$ for some sufficiently small $\delta > 0$, $S_X(x_1) = Q_X(x_1) -\delta$ for some $x_1 \in \X\backslash\!\{x_0\}$, and $S_X(x) = Q_X(x)$ for every other $x \in \X\backslash\!\{x_0,x_1\}$. These choices of $S_X$ and $Q_X$ yield $\left\|S_X - Q_X\right\|_{\infty} = \delta = \frac{1}{2} \left\|S_X - Q_X\right\|_{1}$. 

Finally, combining \eqref{Eq: Intermediate Pinsker} and \eqref{Eq:Common Step}, we get:
$$ D(R_X||P_X) \geq \frac{\displaystyle{\phi\!\left(\max_{A \subseteq \X}{\pi(A)}\right) \min_{x \in \X}{P_X(x)}}}{2} \, \chi^2(R_X||P_X) $$
which completes the proof.
\end{proof}

We remark that if we apply \eqref{Eq: phi Bound} to Lemma \ref{Lemma:Distribution Dependent KL Divergence Lower Bound}, or equivalently, if we use the standard Pinsker's inequality \eqref{Eq:Pinsker's Inequality} instead of Lemma \ref{Lemma: Distribution Dependent Pinsker's Inequality} in the proof of Lemma \ref{Lemma:Distribution Dependent KL Divergence Lower Bound}, then we obtain the well-known weaker inequality (see e.g. \cite[Equation (338)]{fDivergenceBounds}):
\begin{equation}
\label{Eq:KL Divergence Lower Bound}
D\left(R_X||P_X\right) \geq \min_{x \in \X}{P_X(x)} \, \chi^2(R_X||P_X) 
\end{equation}
for every $R_X,P_X \in \Simplex_{\X}$.

It is worth mentioning that a systematic method of deriving optimal bounds between any pair of $f$-divergences is given by the \textit{Harremo\"{e}s-Vajda joint range} \cite{HarremoesVajda2011}. However, we cannot use this technique to derive lower bounds on KL divergence using $\chi^2$-divergence since no such general lower bound exists (when both input distributions vary) \cite[Section 7.3]{InfoTheoryNotes}. On the other hand, distribution dependent bounds can be easily found using ad hoc techniques.\footnote{A ``distribution dependent'' bound between two $f$-divergences is a bound that contains terms that are not either of the $f$-divergences, but depend on the input distributions.} Our proof of Lemma \ref{Lemma:Distribution Dependent KL Divergence Lower Bound} demonstrates one such ad hoc approach based on Pinsker's inequality.

It is tempting to try and improve Lemma \ref{Lemma:Distribution Dependent KL Divergence Lower Bound} by using better lower bounds on KL divergence in terms of TV distance. For example, the best possible lower bound on KL divergence via TV distance is the lower boundary of their \textit{Harremo\"{e}s-Vajda joint range}, cf. \cite[Figure 1]{HarremoesVajda2011}. This lower boundary, known as \textit{Vajda's tight lower bound}, gives the minimum possible KL divergence for each value of TV distance, and is completely characterized using a parametric formula in \cite[Theorem 1]{FedotovHarremoesTopsoe2003} (also see \cite[Section 7.2.2]{InfoTheoryNotes}). Although Vajda's tight lower bound yields a non-linear lower bound on KL divergence using $\chi^2$-divergence, this lower bound is difficult to apply in conjunction with Lemma \ref{Lemma:KL Divergence Upper Bound} (shown below) to obtain a non-linear upper bound on a ratio of KL divergences using a ratio of $\chi^2$-divergences (see the proof of Theorem \ref{Thm:Refined Contraction Coefficient Bound} in subsection \ref{Upper Bounds on Contraction Coefficients}). For this reason, we resort to using simple linear bounds between KL and $\chi^2$-divergence, which yields a linear bound in Theorem \ref{Thm:Refined Contraction Coefficient Bound}.

Another subtler reason for proving a linear lower bound on KL divergence using $\chi^2$-divergence is to exploit Lemma \ref{Lemma: Distribution Dependent Pinsker's Inequality}. Although Pinsker's inequality is the best lower bound on KL divergence using squared TV distance over all pairs of input pmfs (see e.g. \cite[Equation (9)]{FedotovHarremoesTopsoe2003}), the contraction coefficients in subsection \ref{Linear Bounds between Contraction Coefficients} have a fixed source pmf $P_X$. Therefore, we can use the distribution dependent improvement of Pinsker's inequality in Lemma \ref{Lemma: Distribution Dependent Pinsker's Inequality} to obtain a tighter bound than \eqref{Eq:KL Divergence Lower Bound}. 

We next present an upper bound on KL divergence using $\chi^2$-divergence which trivially follows from Jensen's inequality.
This bound was derived in the context of studying ergodicity of Markov chains in \cite{KLUpperBoundViaChiSquared}, and has been re-derived in the study of inequalities related to $f$-divergences, cf. \cite{KLChiSquaredInequalities,fDivergence}.

\begin{lemma}[KL Divergence Upper Bound {\cite{KLUpperBoundViaChiSquared}}] 
\label{Lemma:KL Divergence Upper Bound}
Given any two pmfs $P_X,R_X \in \Simplex_{\X}$, we have:
$$ D(R_X||P_X) \leq \log\!\left(1 + \chi^2(R_X||P_X)\right) \leq \chi^2(R_X||P_X) \, . $$
\end{lemma}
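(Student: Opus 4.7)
The plan is to prove the two inequalities separately, with the first one being the substantive step (via Jensen's inequality) and the second being an elementary logarithmic estimate. Both inequalities concern pmfs $P_X, R_X \in \Simplex_{\X}$, and I may assume without loss of generality that $R_X$ is absolutely continuous with respect to $P_X$ (otherwise $\chi^2(R_X \| P_X) = +\infty$ and both inequalities become trivial).

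For the first inequality, I would rewrite KL divergence as the expectation under $R_X$ of the log-likelihood ratio, namely
\[
D(R_X \| P_X) = \sum_{x \in \X} R_X(x) \log\!\left(\frac{R_X(x)}{P_X(x)}\right) = \E_{R_X}\!\left[\log\!\left(\frac{R_X(X)}{P_X(X)}\right)\right],
\]
and then invoke Jensen's inequality using the concavity of the logarithm to pull the expectation inside: $\E_{R_X}[\log(R_X(X)/P_X(X))] \leq \log \E_{R_X}[R_X(X)/P_X(X)]$. The key algebraic step is then recognizing that
\[
\E_{R_X}\!\left[\frac{R_X(X)}{P_X(X)}\right] = \sum_{x \in \X} \frac{R_X(x)^2}{P_X(x)} = \sum_{x \in \X} \frac{\bigl(R_X(x) - P_X(x)\bigr)^2}{P_X(x)} + 2\sum_{x \in \X}\bigl(R_X(x) - P_X(x)\bigr) + 1,
\]
which simplifies to $1 + \chi^2(R_X \| P_X)$ since the middle sum vanishes. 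Combining these gives $D(R_X \| P_X) \leq \log\bigl(1 + \chi^2(R_X \| P_X)\bigr)$, as desired.

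For the second inequality, I would invoke the standard elementary estimate $\log(1 + t) \leq t$ for all $t \geq -1$, applied at $t = \chi^2(R_X \| P_X) \geq 0$. This yields $\log\bigl(1 + \chi^2(R_X \| P_X)\bigr) \leq \chi^2(R_X \| P_X)$ immediately.

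There is essentially no obstacle here: the only nontrivial move is the application of Jensen's inequality in the right form, and the rest is bookkeeping. The proof is short and self-contained, and matches the sketch attributed to \cite{KLUpperBoundViaChiSquared}.
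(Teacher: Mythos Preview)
Your proof is correct and follows essentially the same approach as the paper: Jensen's inequality applied to the concave logarithm (with expectation taken under $R_X$), the identity $\sum_x R_X(x)^2/P_X(x) = 1 + \chi^2(R_X\|P_X)$, and the elementary bound $\log(1+t)\leq t$. The handling of the degenerate case via absolute continuity matches the paper's reduction to $P_X \in \Simplex_{\X}^{\circ}$.
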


\begin{proof}
We provide a proof for completeness, cf. \cite{KLChiSquaredInequalities}. Assume without loss of generality that there does not exist $x \in \X$ such that $R_X(x) > P_X(x) = 0$. (If this is not the case, then $\chi^2(R_X||P_X) = +\infty$ and the inequalities are trivially true.) So, restricting $\X$ to be the support of $P_X$, we assume that $P_X \in \Simplex_{\X}^{\circ}$ (which ensures that none of the ensuing quantities are infinity). Since $x \mapsto \log(x)$ is a concave function, using Jensen's inequality, we have:
\begin{align*}
D(R_X||P_X) & = \sum_{x \in \X}{R_X(x)\log\!\left(\frac{R_X(x)}{P_X(x)}\right)} \\
& \leq \log\!\left(\sum_{x \in \X}{\frac{R_X(x)^2}{P_X(x)}}\right) \\
& = \log\!\left(1 + \chi^2(R_X||P_X)\right) \\
& \leq \chi^2(R_X||P_X)
\end{align*} 
where the third equality follows from \eqref{Eq:Chi-Squared Divergence} after some algebra, and the final inequality follows from the well-known inequality: $\log(1+x) \leq x$ for all $x>-1$. 
\end{proof}

We remark that the first non-linear bound in Lemma \ref{Lemma:KL Divergence Upper Bound} turns out to capture the Harremo\"{e}s-Vajda joint range \cite[Section 7.3]{InfoTheoryNotes}. Although it is tighter than the second linear bound, we will use the latter to prove Theorem \ref{Thm:Refined Contraction Coefficient Bound} (as explained earlier). 

We now present bounds between general $f$-divergences and $\chi^2$-divergence. To derive our lower bound on $f$-divergences, we first state a generalization of Pinsker's inequality for $f$-divergences that is proved in \cite{fDivergencePinsker}.

\begin{lemma}[Generalized Pinsker's Inequality for $f$-Divergence {\cite[Theorem 3]{fDivergencePinsker}}]
\label{Lemma:Pinsker's Inequality for f-Divergence} 
Suppose we are given a convex function $f:(0,\infty) \rightarrow \R$ that is thrice differentiable at unity with $f(1) = 0$ and $f^{\prime \prime}(1) > 0$, and satisfies:
\begin{equation}
\label{Eq:General Pinsker Condition}
\left(f(t) - f^{\prime}(1) (t-1)\right)\!\!\left(1 - \frac{f^{\prime \prime \prime}(1)}{3 f^{\prime \prime}(1)}(t-1)\right) \geq \frac{f^{\prime \prime}(1)}{2}(t-1)^2 
\end{equation}
for every $t \in (0,\infty)$. Then, we have for every $R_X,P_X \in \Simplex_{\X}$:
$$ D_{f}(R_X||P_X) \geq 2 \, f^{\prime \prime}(1) \left\|R_X - P_X\right\|_{\textsf{\tiny TV}}^2 \, . $$
Moreover, this inequality uses the optimal constant in the sense that:
$$ \inf_{\substack{R_X,P_X \in \Simplex_{\X}:\\R_X \neq P_X}}{\frac{D_{f}(R_X||P_X)}{\left\|R_X - P_X\right\|_{\textsf{\tiny TV}}^2}} = 2 \, f^{\prime \prime}(1) \, . $$
\end{lemma}

We remark that $f(t) = t \log(t)$ satisfies the conditions of Lemma \ref{Lemma:Pinsker's Inequality for f-Divergence} with $f^{\prime \prime}(1) = 1$ as shown in Appendix \ref{App: Proof of Contraction Coefficient Bound from General Contraction Coefficient Bound}; this yields the standard Pinsker's inequality presented in \eqref{Eq:Pinsker's Inequality}. Since \eqref{Eq:General Pinsker Condition} can be difficult to check for other $f$-divergences, the author of \cite{fDivergencePinsker} provides sufficient conditions for \eqref{Eq:General Pinsker Condition} in \cite[Corollary 4]{fDivergencePinsker}. (These conditions can be verified to yield a variant of Pinsker's inequality for \textit{R\'{e}nyi divergences} of order $\alpha \in (0,1)$ \cite[Corollary 6]{fDivergencePinsker}.) The ensuing lemma uses Lemma \ref{Lemma:Pinsker's Inequality for f-Divergence} to establish a lower bound on certain $f$-divergences using $\chi^2$-divergence which parallels Lemma \ref{Lemma:Distribution Dependent KL Divergence Lower Bound} (or more precisely, \eqref{Eq:KL Divergence Lower Bound}, since it follows from the standard Pinsker's inequality). 

\begin{lemma}[$f$-Divergence Lower Bound] 
\label{Lemma:f-Divergence Lower Bound}
Suppose we are given a convex function $f:(0,\infty) \rightarrow \R$ that is thrice differentiable at unity with $f(1) = 0$ and $f^{\prime \prime}(1) > 0$, and satisfies \eqref{Eq:General Pinsker Condition} for every $t \in (0,\infty)$. Then, for any two pmfs $R_X \in \Simplex_{\X}$ and $P_X \in \Simplex_{\X}^{\circ}$, we have:
$$ D_{f}(R_X||P_X) \geq f^{\prime \prime}(1) \min_{x \in \X}{P_X(x)} \, \chi^2(R_X||P_X) \, . $$
\end{lemma}

\begin{proof}
We follow the proof of Lemma \ref{Lemma:Distribution Dependent KL Divergence Lower Bound} mutatis mutandis. Assume without loss of generality that $R_X \neq P_X$. The generalized Pinsker's inequality for $f$-divergences in Lemma \ref{Lemma:Pinsker's Inequality for f-Divergence} yields:
$$ D_{f}(R_X||P_X) \geq \frac{f^{\prime \prime}(1)}{2} \left\|R_X - P_X\right\|_1^2 $$
using the $\ell^1$-norm characterization of TV distance in \eqref{Eq: Total Variation Distance}. Applying \eqref{Eq:Common Step} to this inequality produces the desired result. 
\end{proof}

Note that setting $f(t) = t \log(t)$ in Lemma \ref{Lemma:f-Divergence Lower Bound} gives \eqref{Eq:KL Divergence Lower Bound}.

Finally, we present an upper bound on certain $f$-divergences using $\chi^2$-divergence which is analogous to Lemma \ref{Lemma:KL Divergence Upper Bound}. This upper bound was proved in \cite[Lemma A.2]{SDPIandSobolevInequalities} with the assumption that $f$ is differentiable, but we only need to check differentiability at unity as seen below. (It is instructive for readers to revisit the proof of Lemma \ref{Lemma:KL Divergence Upper Bound} to see how the ensuing proof generalizes it for $f$-divergences.)

\begin{lemma}[$f$-Divergence Upper Bound {\cite[Lemma A.2]{SDPIandSobolevInequalities}}] 
\label{Lemma:f-Divergence Upper Bound}
Suppose we are given a continuous convex function $f:[0,\infty) \rightarrow \R$ that is differentiable at unity with $f(1) = 0$ such that the difference quotient $g:(0,\infty) \rightarrow \R$, defined as $g(x) = \frac{f(x) - f(0)}{x}$, is concave.\footnote{Since $f$ is convex, it is clearly continuous on $(0,\infty)$. So, the continuity assumption on $f$ only asserts that $f(0) = \lim_{t \rightarrow 0^+}{f(t)}$ (see Definition \ref{Def: f-Divergence}).} Then, for any two pmfs $R_X,P_X \in \Simplex_{\X}$, we have:
$$ D_{f}(R_X||P_X) \leq \left(f^{\prime}(1) + f(0)\right) \chi^2(R_X||P_X) \, . $$
\end{lemma}

\begin{proof}
We provide the proof in \cite{SDPIandSobolevInequalities} for completeness. As in the proof of Lemma \ref{Lemma:KL Divergence Upper Bound}, we may assume without loss of generality that $P_X \in \Simplex_{\X}^{\circ}$ so that none of the ensuing quantities are infinity. We then have the following sequence of equalities and inequalities:
\begin{align}
D_{f}(R_X||P_X) & = \sum_{x \in \X}{P_X(x) \, f\!\left(\frac{R_X(x)}{P_X(x)}\right)} \nonumber \\
& = f(0) + \sum_{x \in \X}{R_X(x) \, g\!\left(\frac{R_X(x)}{P_X(x)}\right)} \nonumber \\
& \leq f(0) + g\!\left( \sum_{x \in \X}{\frac{R_X(x)^2}{P_X(x)}}\right) \nonumber \\
\label{Eq: Non-linear f-Divergence Upper Bound}
& = f(0) + g\!\left( 1 + \chi^2(R_X||P_X)\right) \\
& \leq f(0) + g(1) + g^{\prime}(1) \chi^2(R_X||P_X) \nonumber \\
& = \left(f^{\prime}(1) + f(0)\right) \chi^2(R_X||P_X) \nonumber
\end{align}
where the second equality uses the convention $0\,g(0) = 0$, the first inequality follows from Jensen's inequality since $g:(0,\infty) \rightarrow \R$ is concave, the second inequality is also a consequence of the concavity of $g:(0,\infty) \rightarrow \R$ as shown in \cite[Section 3.1.3]{ConvexOptimization}, and the final equality holds because $g(1) = -f(0)$ (as $f(1) = 0$) and:
\begin{align*}
g^{\prime}(1) & = \lim_{\delta \rightarrow 0}{\frac{g(1 + \delta) + f(0)}{\delta}} \\
& = \lim_{\delta \rightarrow 0}{\frac{f(1 + \delta) + \delta f(0)}{\delta(1 + \delta)}} \\
& = \left(\lim_{\delta \rightarrow 0}{\frac{1}{1+\delta}}\right) \!\! \left(f(0) + \lim_{\delta \rightarrow 0}{\frac{f(1 + \delta)}{\delta}}\right) \\
& = f^{\prime}(1) + f(0) \, . 
\end{align*}
This completes the proof. 
\end{proof}

We note that \eqref{Eq: Non-linear f-Divergence Upper Bound} is the analog of the tighter (non-linear) bound in Lemma \ref{Lemma:KL Divergence Upper Bound}. Furthermore, we remark that $g(x) = \frac{f(x)}{x}$ (when it is assumed to be concave) is a valid definition for the function in Lemma \ref{Lemma:f-Divergence Upper Bound} instead of the difference quotient. The proof carries through with a constant of $f^{\prime}(1)$ instead of $f^{\prime}(1) + f(0)$. However, we choose the difference quotient to prove Lemma \ref{Lemma:f-Divergence Upper Bound} in view of the affine invariance property of $f$-divergences (cf. subsection \ref{f-Divergence}). It is easy to verify that the quantity $f^{\prime}(1) + f(0)$ is invariant to appropriate affine shifts, but the quantity $f^{\prime}(1)$ is not. We also remark that the constant $f^{\prime \prime}(1)$ in Lemma \ref{Lemma:f-Divergence Lower Bound} is invariant to appropriate affine shifts. 

\subsection{Proofs of Theorems \ref{Thm:General Contraction Coefficient Bound} and \ref{Thm:Refined Contraction Coefficient Bound}}
\label{Upper Bounds on Contraction Coefficients}

Recall from the outset of subsection \ref{Local Approximation of Contraction Coefficients} that we are given a joint pmf $P_{X,Y}$ such that $P_X \in \Simplex_{\X}^{\circ}$ and $P_Y \in \Simplex_{\Y}^{\circ}$. Moreover, we let $W \in \Simplex_{\Y|\X}$ denote the row stochastic transition probability matrix of the channel $P_{Y|X}$. Using Lemmata \ref{Lemma:Distribution Dependent KL Divergence Lower Bound} and \ref{Lemma:KL Divergence Upper Bound} from subsection \ref{Bounds on f-Divergences using Chi-Squared Divergence}, we can now prove Theorem \ref{Thm:Refined Contraction Coefficient Bound}.

\renewcommand{\proofname}{\bfseries \emph{Proof of Theorem \ref{Thm:Refined Contraction Coefficient Bound}}}

\begin{proof}
For every pmf $R_X \in \Simplex_{\X}$ such that $R_X \neq P_X$, we have:
$$ \frac{D(R_X W||P_Y)}{D(R_X||P_X)} \leq \frac{2 \, \chi^2(R_X W||P_Y)}{\displaystyle{\phi\!\left(\max_{A \subseteq \X}{\pi(A)}\right) \min_{x \in \X}{P_X(x)} \, \chi^2(R_X||P_X)}} $$
using Lemmata \ref{Lemma:Distribution Dependent KL Divergence Lower Bound} and \ref{Lemma:KL Divergence Upper Bound}, where $P_Y = P_X W$. Taking the supremum over all $R_X \neq P_X$ on both sides produces:
$$ \etaKL\!\left(P_{X},P_{Y|X}\right) \leq \frac{2 \, \etaChi\!\left(P_{X},P_{Y|X}\right)}{\displaystyle{\phi\!\left(\max_{A \subseteq \X}{\pi(A)}\right) \min_{x \in \X}{P_X(x)}}}  $$
using \eqref{Eq:Contraction Coefficient for KL Divergence} and \eqref{Eq:Contraction Coefficient for Chi Squared Divergence}. This completes the proof.
\end{proof}

We now make a few pertinent remarks. Firstly, applying \eqref{Eq:KL Divergence Lower Bound} instead of Lemma \ref{Lemma:Distribution Dependent KL Divergence Lower Bound} in the preceding proof yields Corollary \ref{Cor:Contraction Coefficient Bound}. 

Secondly, while the conference version of this paper proves Corollary \ref{Cor:Contraction Coefficient Bound} (see \cite[Theorem 10]{BoundsbetweenContractionCoefficients}), it also proves the following weaker upper bound on $\etaKL\!\left(P_X,P_{Y|X}\right)$ \cite[Theorem 9]{BoundsbetweenContractionCoefficients}:
\begin{equation}
\label{Eq:Looser Contraction Coefficient Bound}
\etaKL\!\left(P_{X},P_{Y|X}\right) \leq \frac{2}{\displaystyle{\min_{x \in \X}{P_X(x)}}} \, \etaChi\!\left(P_{X},P_{Y|X}\right)
\end{equation}
which is independently derived in \cite[Equation III.19]{SDPIandSobolevInequalities}. Our proof of \eqref{Eq:Looser Contraction Coefficient Bound} in \cite[Theorem 9]{BoundsbetweenContractionCoefficients} uses the ensuing variant of \eqref{Eq:KL Divergence Lower Bound} that is looser by a factor of $2$, cf. \cite[Lemma 6]{BoundsbetweenContractionCoefficients}:
\begin{equation}
\label{Eq:Looser Lower Bound on KL Divergence}
D(S_X||Q_X) \geq \frac{\displaystyle{\min_{x \in \X}{Q_X(x)}}}{2} \, \chi^2(S_X||Q_X)
\end{equation}
for all $S_X,Q_X \in \Simplex_{\X}$. This follows from executing the proof of Lemma \ref{Lemma:Distribution Dependent KL Divergence Lower Bound} using the bound $\left\|S_X - Q_X\right\|_{\infty} \leq \left\|S_X - Q_X\right\|_{1}$ (which neglects the information that $(S_X - Q_X) \1 = 0$) instead of \eqref{Eq: Helper min}, and then applying \eqref{Eq: phi Bound} to the resulting lower bound on KL divergence. Alternatively, we provide a proof of \eqref{Eq:Looser Lower Bound on KL Divergence} via \textit{Bregman divergences} in Appendix \ref{App: Proof of Looser Lower Bound on KL Divergence} for completeness, cf. \cite[Lemma 6]{BoundsbetweenContractionCoefficients}. The improvement by a factor of $2$ from \eqref{Eq:Looser Lower Bound on KL Divergence} to \eqref{Eq:KL Divergence Lower Bound} is also observed in \cite[Remark 33]{fDivergenceBounds}, where the authors mention that our result \cite[Theorem 9]{BoundsbetweenContractionCoefficients} (see \eqref{Eq:Looser Contraction Coefficient Bound}) in the conference version of this paper can be improved by a factor of $2$ by using \eqref{Eq:KL Divergence Lower Bound} instead \eqref{Eq:Looser Lower Bound on KL Divergence}. We believe the authors of \cite{fDivergenceBounds} may have missed our result \cite[Theorem 10]{BoundsbetweenContractionCoefficients} (see Corollary \ref{Cor:Contraction Coefficient Bound}) in the conference paper, which presents precisely this improvement by a factor of $2$.

Lastly, we remark that \cite[Section III-D]{SDPIandSobolevInequalities} also presents upper bounds on $\etaKL\!\left(P_{X},P_{Y|X}\right)$ that use the function $\phi:\left[0,\frac{1}{2}\right] \rightarrow \R$, which stems from the refined Pinsker's inequality in \cite{TighterPinskerInequality}. However, these upper bounds are not in terms of $\etaChi\!\left(P_{X},P_{Y|X}\right)$. 

We next prove Theorem \ref{Thm:General Contraction Coefficient Bound} by combining Lemmata \ref{Lemma:f-Divergence Lower Bound} and \ref{Lemma:f-Divergence Upper Bound} from subsection \ref{Bounds on f-Divergences using Chi-Squared Divergence}.

\renewcommand{\proofname}{\bfseries \emph{Proof of Theorem \ref{Thm:General Contraction Coefficient Bound}}}

\begin{proof}
The conditions of Theorem \ref{Thm:General Contraction Coefficient Bound} encapsulate all the conditions of Lemmata \ref{Lemma:f-Divergence Lower Bound} and \ref{Lemma:f-Divergence Upper Bound}. Hence, using Lemmata \ref{Lemma:f-Divergence Lower Bound} and \ref{Lemma:f-Divergence Upper Bound}, for every pmf $R_X \in \Simplex_{\X}$ such that $R_X \neq P_X$, we have:
$$ \frac{D_{f}(R_X W||P_Y)}{D_{f}(R_X||P_X)} \leq \frac{\left(f^{\prime}(1) + f(0)\right) \chi^2(R_X W||P_Y)}{\displaystyle{f^{\prime \prime}(1) \min_{x \in \X}{P_X(x)} \, \chi^2(R_X||P_X)}} $$
where $P_Y = P_X W$. Taking the supremum over all $R_X \neq P_X$ on both sides produces: 
$$ \eta_{f}\!\left(P_{X},P_{Y|X}\right) \leq \frac{f^{\prime}(1) + f(0)}{\displaystyle{f^{\prime \prime}(1) \min_{x \in \X}{P_X(x)}}} \, \etaChi\!\left(P_{X},P_{Y|X}\right) $$
where we use Definition \ref{Def:Contraction Coefficient} and \eqref{Eq:Contraction Coefficient for Chi Squared Divergence}. This completes the proof.
\end{proof}

\renewcommand{\proofname}{\bfseries \emph{Proof}}

We remark that \cite[Theorem III.4]{SDPIandSobolevInequalities} presents an alternative linear upper bound on $\eta_{f}\!\left(P_{X},P_{Y|X}\right)$ using $\etaChi\!\left(P_{X},P_{Y|X}\right)$. Suppose $f:[0,\infty) \rightarrow \R$ is a twice differentiable convex function that has $f(1) = 0$, is strictly convex at unity, and has non-increasing second derivative. If we further assume that the difference quotient $x \mapsto \frac{f(x) - f(0)}{x}$ is concave, then the following bound holds \cite[Theorem III.4]{SDPIandSobolevInequalities}:
\begin{equation}
\label{Eq:Alternative Bound}
\eta_{f}\!\left(P_{X},P_{Y|X}\right) \leq \frac{2 \left(f^{\prime}(1) + f(0)\right)}{f^{\prime \prime}\!\left(1/p_{\star}\right)} \, \etaChi\!\left(P_{X},P_{Y|X}\right)
\end{equation}
where $p_{\star} = \min_{x \in \X}{P_X(x)}$. Hence, when $f$ is additionally thrice differentiable at unity, has $f^{\prime \prime}(1) > 0$, and satisfies \eqref{Eq:General Pinsker Condition} for every $t \in (0,\infty)$, we can improve the upper bound in Theorem \ref{Thm:General Contraction Coefficient Bound} to:
\begin{equation}
\label{Eq:Best f-Divergence Bound}
\eta_{f} \leq \min\!\left\{\frac{f^{\prime}(1) + f(0)}{f^{\prime \prime}(1) \, p_{\star}},\frac{2 \left(f^{\prime}(1) + f(0)\right)}{f^{\prime \prime}\!\left(1/p_{\star}\right)}\right\} \etaChi
\end{equation}
where $\eta_{f} = \eta_{f}\!\left(P_{X},P_{Y|X}\right)$ and $\etaChi = \etaChi\!\left(P_{X},P_{Y|X}\right)$. 

Observe that our bound in Theorem \ref{Thm:General Contraction Coefficient Bound} is tighter than that in \eqref{Eq:Alternative Bound} if and only if:
\begin{align}
\frac{2 \left(f^{\prime}(1) + f(0)\right)}{f^{\prime \prime}\!\left(1/p_{\star}\right)} & \geq \frac{f^{\prime}(1) + f(0)}{f^{\prime \prime}(1) \, p_{\star}} \\
\Leftrightarrow \quad 2 \, f^{\prime \prime}(1) \, p_{\star} & \geq f^{\prime \prime}\!\left(1/p_{\star}\right) .
\label{Eq:Tightness Condition}
\end{align}
One function that satisfies the conditions of Theorem \ref{Thm:General Contraction Coefficient Bound} and \eqref{Eq:Alternative Bound} as well as \eqref{Eq:Tightness Condition} is $f(t) = t \log(t)$. This engenders the improvement that Corollary \ref{Cor:Contraction Coefficient Bound} (which can be recovered from Theorem \ref{Thm:General Contraction Coefficient Bound}) offers over \eqref{Eq:Looser Contraction Coefficient Bound} (which can be recovered from \cite[Theorem III.4]{SDPIandSobolevInequalities}). 

As another example, consider the function $f(t) = \frac{t^{\alpha} - 1}{\alpha-1}$ for $\alpha \in (0,2]\backslash\!\{1\}$, which defines the Hellinger divergence of order $\alpha$ (see subsection \ref{f-Divergence}). It is straightforward to verify that this function satisfies the conditions of Theorem \ref{Thm:General Contraction Coefficient Bound} and \eqref{Eq:Alternative Bound}, cf. \cite[Corollary 6]{fDivergencePinsker}, \cite[Section III-B, p.3362]{SDPIandSobolevInequalities}. In this case, our bound in Theorem \ref{Thm:General Contraction Coefficient Bound} is tighter than \eqref{Eq:Alternative Bound} for all Hellinger divergences of order $\alpha$ satisfying \eqref{Eq:Tightness Condition}, i.e. $2 f^{\prime \prime}(1) \, p_{\star} = 2 \alpha p_{\star} \geq \alpha (1/p_{\star})^{\alpha-2} = f^{\prime \prime}(1/p_{\star}) \, \Leftrightarrow \, p_{\star}^{\alpha - 1} \geq \frac{1}{2}$, or equivalently, $0 < \alpha \leq 1 + (\log(2)/\!\log(1/p_{\star}))$ (where $\alpha = 1$ corresponds to KL divergence\textemdash{}see subsection \ref{f-Divergence}).

\subsection{Ergodicity of Markov Chains}
\label{Markov Convergence}

In this subsection, we derive a corollary of Corollary \ref{Cor:Contraction Coefficient Bound} that illustrates one use of upper bounds on contraction coefficients of source-channel pairs via $\etaChi\!\left(P_X,P_{Y|X}\right)$. Consider a Markov kernel $W \in \Simplex_{\X|\X}$ on a state space $\X$ that defines an \textit{irreducible} and \textit{aperiodic} (time homogeneous) discrete-time Markov chain with unique stationary pmf $P_X \in \Simplex_{\X}^{\circ}$ such that $P_X W = P_X$, cf. \cite[Section 1.3]{MarkovMixing}.\footnote{The matrix $W$ is \textit{primitive} since the chain is irreducible and aperiodic.} For simplicity, suppose further that $W$ is \textit{reversible} (i.e. the detailed balance equations, $P_X(x) [W]_{x,y} = P_X(y) [W]_{y,x}$ for all $x,y \in \X$, hold \cite[Section 1.6]{MarkovMixing}). This means that $W$ is self-adjoint with respect to the weighted inner product defined by $P_X$, and has all real eigenvalues $1 = \lambda_1(W) > \lambda_2(W) \geq \dots \geq \lambda_{|\X|}(W) > -1$. Let $\mu(W) \triangleq \max\{|\lambda_2(W)|,|\lambda_{|\X|}(W)|\} \in [0,1)$ denote the SLEM of $W$ (see subsection \ref{Coefficients of Ergodicity}). 

Since this Markov chain is \textit{ergodic}, $\lim_{n \rightarrow \infty}{R_X W^n} = P_X$ for all $R_X \in \Simplex_{\X}$ \cite[Theorem 4.9]{MarkovMixing}. This implies that $\lim_{n \rightarrow \infty}{D(R_X W^n||P_X)} = 0$ by the continuity of KL divergence \cite[Proposition 3.1]{InfoTheoryNotes}. Let us estimate the rate at which this ``distance to stationarity'' (measured by KL divergence) vanishes. A na\"{i}ve approach is to apply the SDPI \eqref{Eq:SDPI} for KL divergence recursively to get:
\begin{equation}
D(R_X W^n||P_X) \leq \etaKL\!\left(P_X,W\right)^n D(R_X||P_X)
\end{equation}
for all $R_X \in \Simplex_{\X}$. Using \eqref{Eq:Contraction Coefficient for KL Divergence}, this implies that:
\begin{equation}
\lim_{n \rightarrow \infty}{\etaKL\!\left(P_X,W^n\right)^{\frac{1}{n}}} \leq \etaKL\!\left(P_X,W\right)
\end{equation}
which turns out to be a loose bound on the rate in general. 

When $n$ is large, since $R_X W^n$ is close to $P_X$, we intuitively expect $D(R_X W^n||P_X)$ to resemble a $\chi^2$-divergence (see \eqref{Def:Local f-Divergence} in subsection \ref{f-Divergence}), which suggests that $\etaKL\!\left(P_X,W^n\right)$ should scale like $\etaChi\!\left(P_X,W\right)^n$. This intuition is rigorously executed in \cite[Section 6]{CIR93}. Indeed, when $\mu(W)$ is strictly greater than the third largest eigenvalue modulus of $W$, a consequence of \cite[Corollary 6.2]{CIR93} is:
\begin{equation}
\lim_{n \rightarrow \infty}{\frac{D(R_X W^{n}||P_X)}{D(R_X W^{n-1}||P_X)}} \leq \mu(W)^2
\end{equation}
for all $R_X \in \Simplex_{\X}$ such that the denominator is always positive. (This limit is either $0$ or $\mu(W)^2$.) Hence, after employing a Ces\`{a}ro convergence argument and telescoping, we get:
\begin{equation}
\lim_{n \rightarrow \infty}{\frac{D(R_X W^{n}||P_X)}{D(R_X||P_X)}} \leq \mu(W)^2 
\end{equation} 
which suggests that $\lim_{n \rightarrow \infty}{\etaKL\!\left(P_X,W^n\right)^{\frac{1}{n}}} \leq \mu(W)^2$. The next result proves that this inequality is in fact tight. 

\begin{corollary}[Rate of Convergence]
\label{Cor: Rate of Convergence}
For every irreducible, aperiodic, and reversible Markov chain with transition kernel $W \in \Simplex_{\X|\X}$ and stationary pmf $P_X \in \Simplex_{\X}^{\circ}$, we have:
$$ \lim_{n \rightarrow \infty}{\etaKL\!\left(P_X,W^n\right)^{\frac{1}{n}}} = \etaChi\!\left(P_X,W\right) = \mu(W)^2 \, . $$
\end{corollary} 

\begin{proof}
Since $W$ is reversible and $P_X$ is its stationary pmf, the DTM $B = \textsf{\small diag}\!\left(\sqrt{P_X}\right) W \textsf{\small diag}\!\left(\sqrt{P_X}\right)^{-1}$ is symmetric and similar to $W$ (see definition \eqref{Eq: DTM}). Hence, $W$ and $B$ share the same eigenvalues, and $\mu(W)$ is the second largest singular value of $B$. Using Proposition \ref{Prop:Singular Value Characterization of Maximal Correlation} and \eqref{Eq: Maximal Correlation as Contraction Coefficient}, we have $\etaChi\!\left(P_X,W\right) = \mu(W)^2$, which proves the second equality.  

Likewise, $\etaChi\!\left(P_X,W^n\right) = \mu(W^n)^2$ since $W^n$ is reversible for any $n \geq 1$. This yields:
\begin{equation}
\label{Eq:Reversible SLEM}
\etaChi\!\left(P_X,W^n\right) = \mu(W^n)^2 = \mu(W)^{2n} = \etaChi\!\left(P_X,W\right)^n 
\end{equation} 
where the second equality holds because eigenvalues of $W^n$ are $n$th powers of eigenvalues of $W$. Using \eqref{Eq:Reversible SLEM}, part 6 of Proposition \ref{Prop:Contraction Coefficient Properties}, and Corollary \ref{Cor:Contraction Coefficient Bound}, we get:
$$ \etaChi\!\left(P_X,W\right)^n \leq \etaKL\!\left(P_X,W^n\right) \leq \frac{\etaChi\!\left(P_X,W\right)^n}{\displaystyle{\min_{x \in \X}{P_X(x)}}} \, . $$
Taking $n$th roots and letting $n \rightarrow \infty$ yields the desired result.
\end{proof}

Corollary \ref{Cor: Rate of Convergence} portrays the well-understood phenomenon that $D(R_X W^n||P_X)$ vanishes with rate determined by $\mu(W)^2 = \etaChi\!\left(P_X,W\right)$. 
More generally, it illustrates that the bounds in Corollary \ref{Cor:Contraction Coefficient Bound} and Theorems \ref{Thm:General Contraction Coefficient Bound} and \ref{Thm:Refined Contraction Coefficient Bound} are useful in the regime where the random variables $X$ and $Y$ are weakly dependent (e.g. $X$ is the initial state of an ergodic reversible Markov chain, and $Y$ is the state after a large number of time steps). In this regime, these bounds are quite tight, and beat the $\etaTV$ bound in part 6 of Proposition \ref{Prop:Contraction Coefficient Properties 2}.

\subsection{Tensorization of Bounds between Contraction Coefficients}
\label{Tensorization of Bounds between Contraction Coefficients}

In the absence of weak dependence, the upper bounds in Corollary \ref{Cor:Contraction Coefficient Bound} and Theorems \ref{Thm:General Contraction Coefficient Bound} and \ref{Thm:Refined Contraction Coefficient Bound} can be loose. In fact, they can be rendered arbitrarily loose because the constants in these bounds do not tensorize, while contraction coefficients do (as shown in part 4 of Proposition \ref{Prop:Contraction Coefficient Properties}). For instance, if we are given $P_{X,Y}$ with $X \sim \textsf{\small Bernoulli}\!\left(\frac{1}{2}\right)$, then the constant in the upper bound of Corollary \ref{Cor:Contraction Coefficient Bound} is $1/\min_{x \in \{0,1\}}{P_X(x)} = 2$. If we instead consider a sequence of pairs $\left(X_1,Y_1\right),\dots,\left(X_n,Y_n\right)$ that are independent and identically distributed (i.i.d.) according to $P_{X,Y}$, then the constant in the upper bound of Corollary \ref{Cor:Contraction Coefficient Bound} is $1/\min_{x_1^n \in \{0,1\}^n}{P_{X_1^n}(x_1^n)} = 2^{n}$. However, since $\etaKL\!\left(P_{X_1^n},P_{Y_1^n|X_1^n}\right) = \etaKL\!\left(P_{X},P_{Y|X}\right)$ and $\etaChi\!\left(P_{X_1^n},P_{Y_1^n|X_1^n}\right) = \etaChi\!\left(P_{X},P_{Y|X}\right)$ by the tensorization property in part 4 of Proposition \ref{Prop:Contraction Coefficient Properties}, the constant $2^n$ becomes arbitrarily loose as $n$ grows. The next corollary presents a partial remedy for this i.i.d. slackening attack for Corollary \ref{Cor:Contraction Coefficient Bound}.

\begin{corollary}[Tensorized KL Contraction Coefficient Bound] 
\label{Cor:Tensorized Contraction Coefficient Bound}
If $(X_1,Y_1),\dots,(X_n,Y_n)$ are i.i.d. with joint pmf $P_{X,Y}$ such that $P_X \in \Simplex_{\X}^{\circ}$ and $P_Y \in \Simplex_{\Y}^{\circ}$, then:
$$ \etaKL\!\left(P_{X_1^n},P_{Y_1^n|X_1^n}\right) \leq \frac{\etaChi\!\left(P_{X_1^n},P_{Y_1^n|X_1^n}\right)}{\displaystyle{\min_{x \in \X}{P_X(x)}}} \, . $$
\end{corollary}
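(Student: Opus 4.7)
The plan is to prove the corollary by reducing the multi-letter statement to the single-letter Theorem \ref{Thm:Contraction Coefficient Bound} via the tensorization property in part 4 of Proposition \ref{Prop:Contraction Coefficient Properties}. The lower bound is immediate from the maximal correlation bound (part 7 of Proposition \ref{Prop:Contraction Coefficient Properties}) applied at level $n$, so only the upper bound requires work.

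First, I would verify that both KL divergence and $\chi^2$-divergence satisfy the hypotheses of the tensorization property: the corresponding $f$-entropies (Shannon entropy for $f(t) = t\log t$, and the variance-like functional for $f(t) = t^2 - 1$) are known to be sub-additive and homogeneous. Given that $P_{X_1^n, Y_1^n} = \prod_{i=1}^n P_{X_i, Y_i}$ with all factors equal to $P_{X,Y}$, tensorization yields
\begin{equation*}
\etaKL\!\left(P_{X_1^n},P_{Y_1^n|X_1^n}\right) = \max_{1 \leq i \leq n}{\etaKL\!\left(P_{X_i},P_{Y_i|X_i}\right)} = \etaKL\!\left(P_{X},P_{Y|X}\right)
\end{equation*}
and similarly $\etaChi\!\left(P_{X_1^n},P_{Y_1^n|X_1^n}\right) = \etaChi\!\left(P_{X},P_{Y|X}\right)$.

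Second, I would apply the single-letter upper bound in Theorem \ref{Thm:Contraction Coefficient Bound} to the pair $(P_X, P_{Y|X})$ to obtain
\begin{equation*}
\etaKL\!\left(P_{X},P_{Y|X}\right) \leq \frac{\etaChi\!\left(P_{X},P_{Y|X}\right)}{\displaystyle{\min_{x \in \X}{P_X(x)}}} \, .
\end{equation*}
Combining this with the two tensorization identities from the previous step substitutes the $n$-letter contraction coefficients on both sides while keeping the constant $1/\min_{x \in \X} P_X(x)$ as a single-letter quantity, which is precisely the sought improvement over the naive application of Theorem \ref{Thm:Contraction Coefficient Bound} directly to $P_{X_1^n}$ (that naive application would produce the much larger constant $1/(\min_{x} P_X(x))^n$).

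I do not expect any real obstacle here: the only subtlety is checking the hypothesis of part 4 of Proposition \ref{Prop:Contraction Coefficient Properties} for the two specific $f$'s, which amounts to citing standard tensorization results for hypercontractivity (Ahlswede–G\'{a}cs) in the KL case and for maximal correlation (Witsenhausen) in the $\chi^2$ case; both facts are already invoked elsewhere in the paper via Proposition \ref{Prop:Contraction Coefficient Properties}. The whole argument is essentially a one-line chaining of a known equality with Theorem \ref{Thm:Contraction Coefficient Bound}.
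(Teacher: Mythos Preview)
Your proposal is correct and matches the paper's approach exactly: the paper states that all three corollaries ``trivially follow from their corresponding theorems and the tensorization property in part 4 of Proposition \ref{Prop:Contraction Coefficient Properties}.'' Your write-up simply spells out this one-line argument in more detail, including the check that both KL and $\chi^2$ satisfy the tensorization hypotheses, which the paper leaves implicit.
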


\begin{proof}
This follows trivially from Corollary \ref{Cor:Contraction Coefficient Bound} and the tensorization property in part 4 of Proposition \ref{Prop:Contraction Coefficient Properties}.
\end{proof}

In the product distribution context, this corollary permits us to use the tighter factor $1/\!\min_{x \in \X}{P_X(x)}$ in the upper bound of Corollary \ref{Cor:Contraction Coefficient Bound} instead of $1/\!\min_{x_1^n \in \X^n}{P_{X_1^n}\left(x_1^n\right)} = \left(1/\!\min_{x \in \X}{P_X(x)}\right)^n$. Similar adjustments can be made for the constants in Theorems \ref{Thm:General Contraction Coefficient Bound} and \ref{Thm:Refined Contraction Coefficient Bound} in this context as well. Thus, tensorization can improve the upper bounds in Corollary \ref{Cor:Contraction Coefficient Bound} and Theorems \ref{Thm:General Contraction Coefficient Bound} and \ref{Thm:Refined Contraction Coefficient Bound}.

\section{Proof of Equivalence between Gaussian Contraction Coefficients}
\label{Proof of Gaussian Contraction Coefficients}

We prove Theorem \ref{Thm: Gaussian Contraction Coefficients} in this section. Recall from subsection \ref{Contraction Coefficients of Gaussian Random Variables} that we are given the jointly Gaussian pdf $P_{X,Y}$ defined via \eqref{Eq: AWGN}, with source pdf $P_X = \Gauss(0,\sigma_X^2)$ and channel conditional pdfs $\{P_{Y|X = x} = \Gauss(x,\sigma_W^2) : x \in \R\}$ such that $\sigma_X^2,\sigma_W^2 > 0$. Let $\mathcal{T}$ be the set of all pdfs with bounded support. Thus, a pdf $R_X \in \mathcal{T}$ if and only if there exists $C \in \R_{+}$ such that $R_X = R_X \I_{[-C,C]}$ almost everywhere with respect to the Lebesgue measure.\footnote{Note that $\I_{[-C,C]}:\R \rightarrow \{0,1\}$ denotes the indicator function on $[-C,C]$, i.e. $\I_{[C,C]}(x) = 1$ if $x \in [C,C]$, and $\I_{[C,C]}(x) = 0$ otherwise.} We first derive the following useful lemma.

\begin{lemma}[Bounded Support Characterization of $\etaKL$]
\label{Lemma: Bounded Support Characterization}
The supremum in \eqref{Eq: Gaussian KL Contraction} can be restricted to pdfs in $\mathcal{T}$:
$$ \etaKL\!\left(P_{X},P_{Y|X}\right) = \sup_{\substack{R_X \in \mathcal{T}:\\D(R_X||P_X) < +\infty}}{\frac{D(R_Y||P_Y)}{D(R_X||P_X)}}$$
where $R_Y = R_X \star P_W$ for each $R_X$, and $P_W = \Gauss(0,\sigma_W^2)$.
\end{lemma} 

\begin{proof}
Consider any pdf $R_X$ such that $0 < D(R_X||P_X) < +\infty$, and define a corresponding the sequence of pdfs $R_X^{(n)} = R_X \I_{[-n,n]}/C_n \in \mathcal{T}$, where $C_n = \E_{R_X}[\I_{[-n,n]}(X)]$, the indices $n \in \N$ are sufficiently large so that $C_n > 0$, and $\lim_{n \rightarrow \infty}{C_n} = 1$. Observe that:
\begin{align*}
D(R_X^{(n)}||P_X) & = \frac{1}{C_n}\E_{R_X}\!\left[\I_{[-n,n]}(X) \log\!\left(\frac{R_X(X)}{P_X(X)}\right)\right] \\ 
& \quad \, - \log(C_n) \, .
\end{align*}
Clearly, $\lim_{n \rightarrow\infty}{\I_{[-n,n]} \log(R_X/P_X)} = \log(R_X/P_X)$ pointwise $R_X-a.s.$ and $|\I_{[-n,n]} \log(R_X/P_X)| \leq |\log(R_X/P_X)|$ pointwise $R_X-a.s.$ such that $\E_{R_X}\!\left[|\log(R_X(X)/P_X(X))|\right] < +\infty$ (where the finiteness follows from $D(R_X||P_X) < +\infty$). Hence, the dominated convergence theorem (DCT) yields:
\begin{equation}
\label{Eq: Denominator convergence}
\lim_{n \rightarrow \infty}{D(R_X^{(n)}||P_X)} = D(R_X||P_X) \, . 
\end{equation}

Furthermore, let $R_Y^{(n)} = R_X^{(n)} \star P_W$ so that for every $y \in \R$:
$$ R_Y(y) - C_n R_Y^{(n)}(y) = \E_{R_X}\!\left[\I_{\R\backslash[-n,n]}(X) P_W(y - X)\right] . $$
Since for all $x,y \in \R$, $\lim_{n \rightarrow\infty}{\I_{\R\backslash[-n,n]}(x) P_W(y-x)} = 0$ and $0 \leq \I_{\R\backslash[-n,n]}(x) P_W(y-x) \leq P_W(y-x)$ such that $\E_{R_X}\!\left[P_W(y-X)\right] = R_Y(y) < +\infty$, applying the DCT shows the pointwise convergence of the pdfs $\{R_Y^{(n)}\}$:
$$ \forall y \in \R, \enspace \lim_{n \rightarrow \infty}{C_n R_Y^{(n)}(y)} = \lim_{n \rightarrow \infty}{R_Y^{(n)}(y)} = R_Y(y) \, . $$
This implies that $R_Y^{(n)}$ converges weakly to $R_Y$ as $n \rightarrow \infty$ by Scheff\'{e}'s lemma. Hence, by the weak lower semi-continuity of KL divergence \cite[Theorem 3.6, Section 3.5]{InfoTheoryNotes}:
\begin{equation}
\label{Eq: Numerator convergence}
\liminf_{n \rightarrow \infty}{D(R_Y^{(n)}||P_Y)} \geq D(R_Y||P_Y) \, . 
\end{equation} 
Combining \eqref{Eq: Denominator convergence} and \eqref{Eq: Numerator convergence}, we get: 
\begin{equation}
\label{Eq: Bounded support approximation}
\liminf_{n \rightarrow \infty}{\frac{D(R_Y^{(n)}||P_Y)}{D(R_X^{(n)}||P_X)}} \geq \frac{D(R_Y||P_Y)}{D(R_X||P_X)} \, . 
\end{equation}

To complete the proof, we use a ``diagonalization argument.'' Suppose $\{R_{X,m} : m \in \N\}$ is a sequence of pdfs that satisfies $0 < D(R_{X,m}||P_X) < +\infty$ for all $m \in \N$ and achieves the supremum in \eqref{Eq: Gaussian KL Contraction}:
$$ \lim_{m \rightarrow \infty}{\frac{D(R_{Y,m}||P_Y)}{D(R_{X,m}||P_X)}} = \etaKL\!\left(P_X,P_{Y|X}\right) $$
where $R_{Y,m} = R_{X,m} \star P_W$. Then, since \eqref{Eq: Bounded support approximation} is true, we can construct a sequence $\{R_{X,m}^{(n(m))} \in \mathcal{T}: m \in \N\}$, where each $n(m)$ is chosen such that for every $m \in \N$:
$$ \frac{D(R_{Y,m}^{(n(m))}||P_Y)}{D(R_{X,m}^{(n(m))}||P_X)} \geq \frac{D(R_{Y,m}||P_Y)}{D(R_{X,m}||P_X)} - \frac{1}{2^m} $$ 
where $R_{Y,m}^{(n(m))} = R_{X,m}^{(n(m))} \star P_W$. Letting $m \rightarrow \infty$, we have:
$$ \liminf_{m \rightarrow \infty}{\frac{D(R_{Y,m}^{(n(m))}||P_Y)}{D(R_{X,m}^{(n(m))}||P_X)}} \geq \etaKL\!\left(P_X,P_{Y|X}\right) \, . $$
Since the supremum in \eqref{Eq: Gaussian KL Contraction} is over all pdfs (which certainly includes all pdfs in $\mathcal{T}$), this inequality is actually an equality. This completes the proof. (Also note that for any $R_X \in \mathcal{T}$, the constraint $D(R_X||P_X) > 0$ is automatically true since $P_X = \Gauss(0,\sigma_X^2)$. So, the supremum in the lemma statement does not include this constraint.)
\end{proof}

We next prove Theorem \ref{Thm: Gaussian Contraction Coefficients} using Lemma \ref{Lemma: Bounded Support Characterization}, which ensures that all differential entropy terms in the ensuing argument are well-defined and finite. 

\renewcommand{\proofname}{\bfseries \emph{Proof of Theorem \ref{Thm: Gaussian Contraction Coefficients}}}

\begin{proof} 
First note that:
\begin{align*}
\etaChi\!\left(P_{X},P_{Y|X}\right) & = \rho^2(X;Y) \\
& = \frac{\mathbb{C}\mathbb{O}\mathbb{V}(X,Y)^2}{\VAR(X)\VAR(Y)} = \frac{\sigma_X^2}{\sigma_X^2 + \sigma_W^2} 
\end{align*}
where the first equality is precisely \eqref{Eq: Maximal Correlation as Contraction Coefficient} (which holds for general random variables \cite{ChiSquaredContractionMaximalCorrelation}), the second equality follows from to R\'{e}nyi's seventh axiom that $\rho(X;Y)$ is the absolute value of the Pearson correlation coefficient of $X$ and $Y$ \cite{RenyiCorrelation}, and the final equality follows from direct computation. 

We next prove that for any $p \geq \sigma_X^2$:
$$ \etaKL\!\left(P_{X},P_{Y|X}\right) \geq \etaKLp\!\!\left(P_{X},P_{Y|X}\right) \geq \etaChi\!\left(P_{X},P_{Y|X}\right) . $$
The first inequality is obvious from \eqref{Eq: Gaussian KL Contraction} and \eqref{Eq: Contraction Coefficient with Power Constraint}. For the second inequality, let $R_X = \Gauss(\sqrt{\delta},\sigma_X^2 - \delta)$ and $R_Y = R_X \star P_W = \Gauss(\sqrt{\delta},\sigma_X^2 + \sigma_W^2 - \delta)$ for any $\delta > 0$. Then, we get:
$$ \lim_{\delta \rightarrow 0^+}{\frac{D(R_Y||P_Y)}{D(R_X||P_X)}} = \lim_{\delta \rightarrow 0^+}{\frac{\log\!\left(\frac{\sigma_X^2 + \sigma_W^2}{\sigma_X^2 + \sigma_W^2 - \delta}\right)}{\log\!\left(\frac{\sigma_X^2}{\sigma_X^2 - \delta}\right)}} = \frac{\sigma_X^2}{\sigma_X^2 + \sigma_W^2} $$
where the second equality follows from l'H\^{o}pital's rule. Since $\E_{R_X}\!\left[X^2\right] = \sigma_X^2$ for every $\delta > 0$, we have $\etaKLp\!\!\left(P_{X},P_{Y|X}\right) \geq \sigma_X^2/(\sigma_X^2 + \sigma_W^2)$ for any $p \geq \sigma_X^2$.

Therefore, it suffices to prove that $\etaKL\!\left(P_{X},P_{Y|X}\right) \leq \sigma_X^2/$ $(\sigma_X^2 + \sigma_W^2)$. Using Lemma \ref{Lemma: Bounded Support Characterization}, we can equivalently show that:
\begin{equation}
\label{Eq: Sufficient Condition 1}
\frac{D(R_Y||P_Y)}{D(R_X||P_X)} \leq \frac{\sigma_X^2}{\sigma_X^2 + \sigma_W^2} 
\end{equation}
for every pdf $R_X \in \mathcal{T}$ with $D(R_X||P_X) < +\infty$. 

For any pdf $R_X$, we define the \textit{differential entropy} of $R_X$ as $h(R_X) \triangleq -\E_{R_X}\!\left[\log\!\left(R_X(X)\right)\right]$. To check that such differential entropy terms are well-defined and finite for $R_X \in \mathcal{T}$, we employ the argument in \cite[Lemma 8.3.1, Theorem 8.3.3]{InfoTheoryAsh}. Observe that for all $x \in \esssupp(R_X)$:\footnote{For any Borel measurable function $f:\R \rightarrow \R$, $\esssupp(f)$ denotes the \textit{essential support} of $f$ (with respect to the Lebesgue measure).}
$$ \log(R_X(x)) = \log\!\left(\frac{R_X(x)}{P_X(x)}\right) - \frac{1}{2}\log\!\left(2\pi\sigma_X^2\right) - \frac{x^2}{2 \sigma_X^2} \, . $$
Since $D(R_X||P_X)$ must be finite in \eqref{Eq: Gaussian KL Contraction} and $X^2 \geq 0$, we can take expectations with respect to $R_X$ to get:
\begin{equation}
\label{Eq:KL Divergence with Gaussian}
-h(R_X) = D(R_X||P_X) - \frac{1}{2}\log\!\left(2\pi \sigma_X^2\right) - \frac{\E_{R_X}\!\left[X^2\right]}{2\sigma_X^2} 
\end{equation}
which shows that $h(R_X)$ always exists, $h(R_X)$ is finite when $\E_{R_X}\!\left[X^2\right] < +\infty$, and $h(R_X) = +\infty$ when $\E_{R_X}\!\left[X^2\right] = +\infty$. Furthermore, if the pdf $R_X \in \mathcal{T}$ has bounded support, $\E_{R_X}\!\left[X^2\right] < +\infty$ and $h(R_X)$ is well-defined and finite.

Let $R_X \in \mathcal{T}$ and $R_Y = R_X \star P_W$ have second moments $\E_{R_X}\!\left[X^2\right] = \sigma_X^2 + q > 0$ and $\E_{R_Y}\!\left[Y^2\right] = \sigma_X^2 + \sigma_W^2 + q > 0$ for some $q > -\sigma_X^2$. Using \eqref{Eq:KL Divergence with Gaussian}, we have:
\begin{align*}
D(R_X||P_X) & = \frac{1}{2}\log\!\left(2\pi \sigma_X^2\right) + \frac{\sigma_X^2 + q}{2\sigma_X^2} - h(R_X) \\
& = h(P_X) - h(R_X) + \frac{q}{2\sigma_X^2} , \\
D(R_Y||P_Y) & = h(P_Y) - h(R_Y) + \frac{q}{2\left(\sigma_X^2 + \sigma_W^2\right)} 
\end{align*}
where $h(R_Y)$ exists and is finite because $\E_{R_Y}\!\left[Y^2\right]$ is finite (as argued earlier using \eqref{Eq:KL Divergence with Gaussian}). Hence, it suffices to prove that:
\begin{equation}
\label{Eq: Sufficient Condition 2}
h(P_Y) - h(R_Y) \leq \frac{\sigma_X^2}{\sigma_X^2 + \sigma_W^2} \left(h(P_X) - h(R_X)\right) 
\end{equation}
which is equivalent to \eqref{Eq: Sufficient Condition 1}. We can recast \eqref{Eq: Sufficient Condition 2} as:
\begin{align*}
\left(e^{2h(P_Y) - 2h(R_Y)}\right)^{\sigma_X^2 + \sigma_W^2} & \leq \left(e^{2h(P_X) - 2h(R_X)}\right)^{\sigma_X^2} \\
\left(\frac{\frac{1}{2\pi e}e^{2h(P_Y)}}{\frac{1}{2\pi e}e^{2h(R_Y)}}\!\right)^{\sigma_X^2 + \sigma_W^2} & \leq \left(\frac{\frac{1}{2\pi e}e^{2h(P_X)}}{\frac{1}{2\pi e}e^{2h(R_X)}}\!\right)^{\sigma_X^2} \\
\left(\frac{N(P_Y)}{N(R_Y)}\right)^{\sigma_X^2 + \sigma_W^2} & \leq \left(\frac{N(P_X)}{N(R_X)}\right)^{\sigma_X^2} 
\end{align*}
where for any pdf $Q_X$ such that $h(Q_X)$ exists, we define the \textit{entropy power} of $Q_X$ as $N(Q_X) \triangleq e^{2h(Q_X)}\!/(2\pi e)$ \cite[Section III-A]{DemboCoverThomas1991}. For $P_X = \Gauss(0,\sigma_X^2)$, $P_W = \Gauss(0,\sigma_W^2)$, and $P_Y = P_X \star P_W = \Gauss(0,\sigma_X^2 + \sigma_W^2)$, the entropy powers are $N(P_X) = \sigma_X^2$, $N(P_W) = \sigma_W^2$, and $N(P_Y) = \sigma_X^2 + \sigma_W^2$, respectively. Applying the \textit{entropy power inequality} to $R_X$, $P_W$, and $R_Y = R_X \star P_W$ \cite[Theorem 4]{DemboCoverThomas1991}, we have: 
$$ N(R_Y) \geq N(R_X) + N(P_W) = N(R_X) + \sigma_W^2 \, . $$
Hence, it is sufficient to prove that:
$$ \left(\frac{\sigma_X^2 + \sigma_W^2}{N(R_X) + \sigma_W^2}\right)^{\sigma_X^2 + \sigma_W^2} \leq \left(\frac{\sigma_X^2}{N(R_X)}\right)^{\sigma_X^2} . $$
Let $a = \sigma_X^2 + \sigma_W^2$, $b = \sigma_X^2 - N(R_X)$, and $c = \sigma_X^2$. Then, we have $a > c > 0$ and $c > b$ (which follows from the finiteness of $h(R_X)$), and it is sufficient to prove that:
$$ \left(\frac{a}{a-b}\right)^a \leq \left(\frac{c}{c-b}\right)^c $$
which is equivalent to proving:
$$ a > c > 0 \text{  and  } c > b \enspace \Rightarrow \enspace \left(1-\frac{b}{c}\right)^c \leq \left(1-\frac{b}{a}\right)^a . $$
This statement is a variant of Bernoulli's inequality proved in \cite[Theorem 3.1, parts $\left(r^{\prime}_7\right)$ and $\left(r^{\prime\prime}_7\right)$]{BernoulliInequality}. This completes the proof.
\end{proof}

\renewcommand{\proofname}{\bfseries \emph{Proof}}

\section{Proof of Equivalent Characterizations of the Less Noisy Preorder}
\label{Proof of Equivalent Characterizations of the Less Noisy Preorder}

We finally turn to deriving the equivalent characterizations of $\succeq_{\textsf{\tiny ln}}$ using operator convexity. The next subsection presents some preliminaries on operator convex functions, and subsection \ref{Proof of Theorem: Equivalent Characterizations of Less Noisy Preorder} proves Theorem \ref{Thm: Equivalent Characterizations of Less Noisy Preorder}.

\subsection{Operator Convex Functions} 
\label{Operator Convex Functions}

For any interval $I \subseteq \R$,\footnote{The interval $I$ can be finite or infinite, and closed or open.} let $\C^{n \times n}_{\textsf{\tiny Herm}}(I)$ denote the set of all $n \times n$ (complex) Hermitian matrices with all eigenvalues in $I$, where $\C^{n \times n}_{\textsf{\tiny Herm}} = \C^{n \times n}_{\textsf{\tiny Herm}}(\R)$ is the set of all Hermitian matrices. Given a function $f:I \rightarrow \R$, we can extend it to a function $f : \C^{n \times n}_{\textsf{\tiny Herm}}(I) \rightarrow \C^{n \times n}_{\textsf{\tiny Herm}}$ as follows \cite[Chapter V.1]{Bhatia1997}:
\begin{equation}
\forall A \in \C^{n \times n}_{\textsf{\tiny Herm}}(I), \enspace f(A) \triangleq U \, \textsf{\small diag}(f(\lambda_1),\dots,f(\lambda_n)) \, U^H
\end{equation} 
where $A = U \, \textsf{\small diag}(\lambda_1,\dots,\lambda_n) \, U^H$ is the spectral decomposition of $A$ with real eigenvalues $\lambda_1,\dots,\lambda_n \in I$, $U \in \C^{n \times n}$ is a unitary matrix, and $U^H$ is its Hermitian transpose. We say that $f$ is \textit{operator convex} if for every $n \geq 1$, every pair of matrices $A,B \in \C^{n \times n}_{\textsf{\tiny Herm}}(I)$, and every $\lambda \in [0,1]$:\footnote{It is straightforward to verify that $A,B \in \C^{n \times n}_{\textsf{\tiny Herm}}(I)$ implies that $\lambda A + (1-\lambda) B \in \C^{n \times n}_{\textsf{\tiny Herm}}(I)$ for all $\lambda \in [0,1]$.}
\begin{equation}
\lambda f(A) + (1-\lambda) f(B) \succeq_{\textsf{\tiny PSD}} f(\lambda A + (1-\lambda) B)
\end{equation}
where $\succeq_{\textsf{\tiny PSD}}$ denotes the \textit{L\"{o}wner partial order} (i.e. for any two matrices $A,B \in \C^{n \times n}_{\textsf{\tiny Herm}}$, $A \succeq_{\textsf{\tiny PSD}} B$ if and only if $A - B$ is positive semidefinite) \cite[Chapter V.1]{Bhatia1997}. Note that an operator convex function $f:I \rightarrow \R$ is clearly convex, and its translated affine transformations $g:c + I \rightarrow \R$, $g(t) = a f(t - c) + b$ are also operator convex for every $a \geq 0$, $b \in \R$, and $c \in \R$.\footnote{Note that $c + I = \{c + x : x \in I\}$ is the interval $I$ translated by $c$.}

A more striking property of operator convex functions is that they are characterized by certain \textit{integral representations}\textemdash{}see \textit{L\"{o}wner's theorems} in \cite[Chapter V.4, Problem V.5.5]{Bhatia1997}. In particular, for every operator convex function $f:(0,\infty) \rightarrow \R$ with $f(1) = 0$, there exist constants $a \in \R$ and $b \geq 0$, and a finite positive measure $\mu$ on $(1,\infty)$ (with its Borel $\sigma$-algebra) such that:
\begin{equation}
\label{Eq: Integral Representation}
f(t) = a(t-1) + b(t-1)^2 + \int_{(1,\infty)}{\!\frac{(t-1)(\omega t - \omega - 1)}{t + \omega - 1} \, d\mu(\omega)}
\end{equation}
which follows from \cite[Equation (7)]{ChoiRuskaiSeneta1994} and the associated references (note that our $f$ is related to $g$ in \cite{ChoiRuskaiSeneta1994} by $f(t) = g(t-1)$). As noted in \cite{ChoiRuskaiSeneta1994}, the converse also holds, i.e. functions of the form \eqref{Eq: Integral Representation} are operator convex. The ensuing lemma is a direct consequence of \eqref{Eq: Integral Representation}, which we distill from \cite{ChoiRuskaiSeneta1994} and present in a more transparent form.

\begin{lemma}[Integral Representation {\cite[p.33]{ChoiRuskaiSeneta1994}}]
\label{Lemma: Integral Representation}
Consider any $f$-divergence such that $f:(0,\infty) \rightarrow \R$ is operator convex and satisfies $f(1) = 0$. Then, there exists a constant $b \geq 0$ and a positive measure $\tau$ on $(0,1)$ (with its Borel $\sigma$-algebra) such that for every $R_X,P_X \in \Simplex_{\X}$:
\begin{align*}
D_{f}(R_X||P_X) & = b \, \chi^2(R_X||P_X) \\
& \quad \, + \int_{(0,1)}{\frac{1+\lambda^2}{\lambda(1-\lambda)} \textsf{\small LC}_{\lambda}(R_X||P_X) \, d\tau(\lambda)} \, .
\end{align*}
\end{lemma}
  
\begin{proof}
Fix any two pmfs $R_X,P_X \in \Simplex_{\X}$, and suppose the random variable $X$ has pmf $P_X$. Then, since $f$ exhibits the integral representation \eqref{Eq: Integral Representation}, we substitute $t = R_X(X)/P_X(X)$ into \eqref{Eq: Integral Representation} and take expectations to get:
\begin{align*}
\E\!\left[f\!\left(\frac{R_X(X)}{P_X(X)}\right)\right] & = b \, \E\!\left[\left(\frac{R_X(X)}{P_X(X)}-1\right)^2\right] \\
+ \bigintss_{\!(1,\infty)} \!\!\! \E & \!\left[\frac{\left(\frac{R_X(X)}{P_X(X)}-1\right)\!\!\left(\omega \frac{R_X(X)}{P_X(X)} - \omega - 1\right)}{\frac{R_X(X)}{P_X(X)} + \omega - 1}\right] \! d\mu(\omega) 
\end{align*}
where the first term on the right hand side of \eqref{Eq: Integral Representation} vanishes after taking expectations (see the affine invariance property in subsection \ref{f-Divergence}). This implies that:
\begin{align*}
D_{f}(R_X||P_X) & = b \, \chi^2(R_X||P_X) \\
& \quad \, + \bigintss_{\!(1,\infty)}{\!\!\!\!\E\!\left[\frac{\left(1+\omega^2\right) \! \left(\frac{R_X(X)}{P_X(X)}
-1\right)^2}{\omega \left(\frac{R_X(X)}{P_X(X)} + \omega - 1\right)}\right] \! d\mu(\omega)}
\end{align*}
where the left hand side follows from Definition \ref{Def: f-Divergence}, the $\chi^2$-divergence term follows from the definition in subsection \ref{f-Divergence}, and the last term follows from the affine invariance property in subsection \ref{f-Divergence} and the relation:
$$ \frac{(t-1)(\omega t - \omega - 1)}{t + \omega - 1} = \frac{(1+\omega^2) (t-1)^2}{\omega (t + \omega - 1)} - \frac{t-1}{\omega} $$
for any $t > 0$ and $\omega > 0$. Next, observe that the change of variables $\omega = \frac{1}{\lambda}$ yields:
\begin{align*}
D_{f}(R_X||P_X) & = b \, \chi^2(R_X||P_X) \\
& \quad \, + \bigintss_{\!(0,1)}{\!\!\!\E\!\left[\frac{\left(1 + \lambda^2\right) \! \left(\frac{R_X(X)}{P_X(X)}
-1\right)^2}{\left(\lambda \frac{R_X(X)}{P_X(X)} + 1 - \lambda\right)}\right] \! d\tau(\lambda)}
\end{align*}
for some positive measure $\tau$ on $(0,1)$. Finally, recognizing that the integrand on the right hand side is a scaled Vincze-Le Cam divergence (see subsection \ref{f-Divergence}), some straightforward algebra produces the desired integral representation.
\end{proof}

Lemma \ref{Lemma: Integral Representation} is used in \cite[p.33]{ChoiRuskaiSeneta1994} (in a slightly different form) to prove Proposition \ref{Prop: Operator Convex f-Divergence Contraction}, cf. \cite[Theorem 1]{ChoiRuskaiSeneta1994}. Furthermore, \cite[p.3363]{SDPIandSobolevInequalities} also distills the key idea in \cite[p.33]{ChoiRuskaiSeneta1994} and presents an alternative integral representation (in terms of Vincze-Le Cam and $\chi^2$-divergences) analogous to Lemma \ref{Lemma: Integral Representation}. However, the representation in \cite[p.3363]{SDPIandSobolevInequalities} only holds for operator convex functions $f$ where $f(0)$ is finite, while Lemma \ref{Lemma: Integral Representation} holds for infinite $f(0)$ as well.

\subsection{Proof of Theorem \ref{Thm: Equivalent Characterizations of Less Noisy Preorder}}
\label{Proof of Theorem: Equivalent Characterizations of Less Noisy Preorder}

We first recall the result in \cite[Theorem 1]{SymmetricChannelDomination}.

\begin{lemma}[$\chi^2$-Divergence Characterization of $\succeq_{\textsf{\tiny ln}}$ {\cite[Theorem 1]{SymmetricChannelDomination}}]
\label{Lemma: Chi-Squared Divergence Characterization of Less Noisy}
For any two channels $P_{Y|X}$ and $P_{Z|X}$ on the same input alphabet $\X$ with row stochastic transition probability matrices $W \in \Simplex_{\Y|\X}$ and $V \in \Simplex_{\Z|\X}$ respectively, $P_{Y|X} \succeq_{\textsf{\tiny ln}} P_{Z|X}$ if and only if:
$$ \chi^2(R_X W||P_X W) \geq \chi^2(R_X V||P_X V) $$
for every pair of input pmfs $R_X,P_X \in \Simplex_{\X}$. 
\end{lemma}

This lemma is proved in \cite[Section IV-A]{SymmetricChannelDomination}. We next use it along with Lemma \ref{Lemma: Integral Representation} to prove Theorem \ref{Thm: Equivalent Characterizations of Less Noisy Preorder}. 

\renewcommand{\proofname}{\bfseries \emph{Proof of Theorem \ref{Thm: Equivalent Characterizations of Less Noisy Preorder}}}

\begin{proof}
Fix any non-linear operator convex function $f:(0,\infty) \rightarrow \R$ such that $f(1) = 0$, where the non-linearity ensures that the corresponding $f$-divergence is not identically zero (see the affine invariance property in subsection \ref{f-Divergence}). We follow the proof outline in \cite[Section IV-A]{SymmetricChannelDomination} (also see \cite{ChoiRuskaiSeneta1994} and \cite[Section III-C]{SDPIandSobolevInequalities}). 

To prove the forward direction, we first use Lemma \ref{Lemma: Integral Representation} and the equivalent form of Vincze-Le Cam divergences in \eqref{Eq: Chi-Squared Characterization of Le Cam Divergence} to obtain the following integral representation of our $f$-divergence in terms of $\chi^2$-divergence (cf. \cite[p.33]{ChoiRuskaiSeneta1994}):
\begin{equation}
\label{Eq: Chi-Squared Integral Representation}
\begin{aligned}
D_{f}(R_X||P_X) & = b \, \chi^2(R_X||P_X) \\
& + \int_{(0,1)}{\frac{1+\lambda^2}{(1-\lambda)^{2}} \chi^2(R_X||\lambda R_X + \bar{\lambda} P_X) \, d\tau(\lambda)}
\end{aligned}
\end{equation}
for all $R_X,P_X \in \Simplex_{\X}$, where $\bar{\lambda} = 1 - \lambda$. Since $P_{Y|X} \succeq_{\textsf{\tiny ln}} P_{Z|X}$, Lemma \ref{Lemma: Chi-Squared Divergence Characterization of Less Noisy} yields $\chi^2(R_X W||P_X W) \geq \chi^2(R_X V||P_X V)$ and $\chi^2(R_X W||(\lambda R_X + \bar{\lambda} P_X) W) \geq \chi^2(R_X V||(\lambda R_X + \bar{\lambda} P_X) V)$ for every $R_X,P_X \in \Simplex_{\X}$ and every $\lambda \in (0,1)$. Using these inequalities and the integral representation in \eqref{Eq: Chi-Squared Integral Representation}, we get:
$$ D_{f}(R_X W || P_X W) \geq D_{f}(R_X V || P_X V) $$
for every $R_X,P_X \in \Simplex_{\X}$, as desired.

To prove the converse direction, observe that the integral representation in \eqref{Eq: Integral Representation} ensures that $f$ is infinitely differentiable and $f^{\prime \prime}(1) > 0$. For any $R_X,P_X \in \Simplex_{\X}$ and any $\lambda \in (0,1)$, we are given that:
$$ D_{f}((\bar{\lambda} P_X + \lambda R_X) W || P_X W) \geq D_{f}((\bar{\lambda} P_X + \lambda R_X) V || P_X V) . $$
So, when $P_X \in \Simplex_{\X}^{\circ}$, we can scale both sides by $\frac{2}{f^{\prime \prime}(1) \lambda^2} > 0$ and let $\lambda \rightarrow 0$ to obtain:
$$ \chi^2(R_X W || P_X W) \geq \chi^2(R_X V || P_X V) $$
using the local approximation of $f$-divergences in \eqref{Def:Local f-Divergence}. Although our version of \eqref{Def:Local f-Divergence} requires the $P_X \in \Simplex_{\X}^{\circ}$ assumption, the above inequality also holds for $P_X \in \Simplex_{\X}\backslash\Simplex_{\X}^{\circ}$ due to the continuity of $\chi^2$-divergence in its second argument with fixed first argument. Hence, Lemma \ref{Lemma: Chi-Squared Divergence Characterization of Less Noisy} yields that $P_{Y|X} \succeq_{\textsf{\tiny ln}} P_{Z|X}$. This completes the proof.
\end{proof}

We remark that even without using Lemma \ref{Lemma: Chi-Squared Divergence Characterization of Less Noisy}, this proof illustrates that all channel preorders that are defined using non-linear operator convex $f$-divergences (in a manner analogous to $\succeq_{\textsf{\tiny ln}}$) are equivalent. Indeed, the proof shows that they are all characterized by $\chi^2$-divergence. Since $\succeq_{\textsf{\tiny ln}}$ (corresponding to KL divergence) is one of these preorders, we can conclude that all the preorders are equivalent to $\succeq_{\textsf{\tiny ln}}$. 

Finally, we derive Proposition \ref{Prop: Operator Convex f-Divergence Contraction} to illustrate that it is a straightforward corollary of Theorem \ref{Thm: Equivalent Characterizations of Less Noisy Preorder}.

\renewcommand{\proofname}{\bfseries \emph{Proof of Proposition \ref{Prop: Operator Convex f-Divergence Contraction}}}

\begin{proof}
Fix any non-linear operator convex function $f$ with $f(1) = 0$, and any channel $P_{Y|X}$ with row stochastic transition probability matrix $W \in \Simplex_{\Y|\X}$. Using Theorem \ref{Thm: Equivalent Characterizations of Less Noisy Preorder}, the $|\X|$-ary erasure channel $E_{1-\beta} \succeq_{\textsf{\tiny ln}} P_{Y|X}$ if and only if for every pair of input pmfs $R_X,P_X \in \Simplex_{\X}$:
\begin{align*}
D_{f}(R_X W||P_X W) & \leq D_{f}(R_X E_{1-\beta}||P_X E_{1-\beta}) \\
& = \beta \, D_{f}(R_X||P_X)
\end{align*}
where the equality is shown near the end of subsection \ref{Contraction Coefficients of Source-Channel Pairs}. This equivalence yields the following generalization of \eqref{Eq: Less Noisy Characterization}:
\begin{equation}
\eta_{f}\!\left(P_{Y|X}\right) = \inf\!\left\{\beta \in [0,1] : E_{1-\beta} \succeq_{\textsf{\tiny ln}} P_{Y|X}\right\} .
\end{equation}
Therefore, the contraction coefficients $\eta_{f}\!\left(P_{Y|X}\right)$ for all non-linear operator convex $f$ are equal, and in particular, they all equal $\etaChi\!\left(P_{Y|X}\right)$ (since $f(t) = t^2 - 1$ is operator convex).
\end{proof}

\renewcommand{\proofname}{\bfseries \emph{Proof}}

\section{Conclusion}
\label{Conclusion}

In closing, we briefly recapitulate our main contributions and then propose some directions for future research. We first illustrated in Theorem \ref{Thm:Local Approximation of Contraction Coefficients} that if the optimization problem defining $\eta_{f}\!\left(P_X,P_{Y|X}\right)$ is subjected to an additional ``local approximation'' constraint that forces the input $f$-divergence to vanish, then the resulting optimum value is $\etaChi\!\left(P_X,P_{Y|X}\right)$. This transparently captures the intuition behind the maximal correlation lower bound in part 6 of Proposition \ref{Prop:Contraction Coefficient Properties}. We then derived a linear upper bound on $\eta_{f}\!\left(P_X,P_{Y|X}\right)$ in terms of $\etaChi\!\left(P_X,P_{Y|X}\right)$ for a class of $f$-divergences in Theorem \ref{Thm:General Contraction Coefficient Bound}, and improved this bound for the salient special case of $\etaKL\!\left(P_X,P_{Y|X}\right)$ in Theorem \ref{Thm:Refined Contraction Coefficient Bound}. Such bounds are useful in weak dependence regimes such as in the analysis of ergodicity of Markov chains (as shown in Corollary \ref{Cor: Rate of Convergence}). In the spirit of comparing contraction coefficients of source-channel pairs, we also gave an alternative proof of the equivalence, $\etaKL\!\left(P_X,P_{Y|X}\right) = \etaChi\!\left(P_X,P_{Y|X}\right)$, for jointly Gaussian distributions $P_{X,Y}$ defined by AWGN channels in Theorem \ref{Thm: Gaussian Contraction Coefficients} and section \ref{Proof of Gaussian Contraction Coefficients}. This proof showed that adding a large enough power constraint to the extremization in $\etaKL\!\left(P_X,P_{Y|X}\right)$ does not change its value. Finally, in the realm of contraction coefficients of channels, we generalized Proposition \ref{Prop: Operator Convex f-Divergence Contraction} in Theorem \ref{Thm: Equivalent Characterizations of Less Noisy Preorder}, and established that the less noisy preorder over channels is completely characterized by any non-linear operator convex $f$-divergence.   

As discussed in subsection \ref{Tensorization of Bounds between Contraction Coefficients}, the constants in the linear bounds in Theorems \ref{Thm:General Contraction Coefficient Bound} and \ref{Thm:Refined Contraction Coefficient Bound} vary ``blindly'' with the dimension of a product distribution. While results like Corollary \ref{Cor:Tensorized Contraction Coefficient Bound} partially remedy this tensorization issue, one compelling direction of future work is to discover linear bounds whose constants gracefully tensorize. Another, perhaps more concrete, avenue of future work is to derive the optimal distribution dependent refinement of Lemma \ref{Lemma:Pinsker's Inequality for f-Divergence} (as suggested in \cite[Remark, p.5380]{fDivergencePinsker}). Such a refinement could be used to tighten Theorem \ref{Thm:General Contraction Coefficient Bound} so that it specializes to Theorem \ref{Thm:Refined Contraction Coefficient Bound} instead of Corollary \ref{Cor:Contraction Coefficient Bound}. However, such a refinement cannot circumvent the more critical tensorization issue that ails these bounds.

\appendices

\section{Proof of Proposition \ref{Prop:Singular Value Characterization of Maximal Correlation}}
\label{App: Proof of Singular Value Characterization of Maximal Correlation}

\begin{proof}
This proof is outlined in \cite{RenyiCorrHypercontractivity}, and presented in \cite[Theorem 3.2.4]{MastersThesis} for the $P_X \in \Simplex_{\X}^{\circ}$ and $P_Y \in \Simplex_{\Y}^{\circ}$ case. We provide it here for completeness. 

Suppose the marginal pmfs of $X$ and $Y$ satisfy $P_X \in \Simplex_{\X}^{\circ}$ and $P_Y \in \Simplex_{\Y}^{\circ}$. We first show that the largest singular value of the DTM $B$ is unity. Consider the matrix:
\begin{align*}
M & = \textsf{\small diag}\!\left(\sqrt{P_Y}\right)^{-1}B^{T}B\,\textsf{\small diag}\!\left(\sqrt{P_Y}\right) \\
& = \textsf{\small diag}\!\left(P_Y\right)^{-1} W^T \textsf{\small diag}\!\left(P_X\right) W \\
& = V W
\end{align*}
where $V = \textsf{\small diag}\!\left(P_Y\right)^{-1} W^T \textsf{\small diag}\!\left(P_X\right) \in \Simplex_{\X|\Y}$ is the row stochastic reverse transition probability matrix of conditional pmfs $P_{X|Y}$. Observe that $M$ has the same set of eigenvalues as the Gramian of the DTM $B^T B$, because we are simply using a similarity transformation to define it. As $B^T B$ is positive semidefinite, the eigenvalues of $M$ and $B^T B$ are non-negative real numbers by the \textit{spectral theorem} (see \cite[Section 2.5]{BasicMatrixAnalysis}). Moreover, since $V$ and $W$ are both row stochastic, their product $M = VW$ is also row stochastic. Hence, the largest eigenvalue of $M$ and $B^T B$ is unity by the \textit{Perron-Frobenius theorem} (see \cite[Chapter 8]{BasicMatrixAnalysis}). It follows that the largest singular value of $B$ is also unity. Notice further that $\sqrt{P_X}$ and $\sqrt{P_Y}$ are the left and right singular vectors of $B$, respectively, corresponding to the singular value of unity. Indeed, we have:
\begin{align*}
\sqrt{P_X} B & = \sqrt{P_X} \, \textsf{diag}\!\left(\sqrt{P_X}\right) \! W \textsf{diag}\!\left(\sqrt{P_Y}\right)^{-1} = \sqrt{P_Y} , \\
B \sqrt{P_Y}^T & = \textsf{diag}\!\left(\sqrt{P_X}\right) \! W \textsf{diag}\!\left(\sqrt{P_Y}\right)^{-1} \sqrt{P_Y} = \sqrt{P_X}^T \! . 
\end{align*}

Next, starting from Definition \ref{Def:Maximal Correlation}, let $f \in \R^{|\X|}$ and $g \in \R^{|\Y|}$ be the column vectors representing the range of the functions $f:\X \rightarrow \R$ and $g:\Y \rightarrow \R$, respectively. Note that we can express the expectations in Definition \ref{Def:Maximal Correlation} in terms of $B$, $P_X$, $P_Y$, $f$, and $g$:
\begin{align*}
 \E\left[f(X)g(Y)\right] & = \left(\textsf{\small diag}\!\left(\sqrt{P_X}\right)f\right)^T B \left(\textsf{\small diag}\!\left(\sqrt{P_Y}\right)g\right) , \\
 \E\left[f(X)\right] & = \sqrt{P_X} \left(\textsf{\small diag}\!\left(\sqrt{P_X}\right)f\right) , \\
 \E\left[g(Y)\right] & = \sqrt{P_Y} \left(\textsf{\small diag}\!\left(\sqrt{P_Y}\right)g\right) , \\
 \E\left[f^2(X)\right] & = \left\|\textsf{\small diag}\!\left(\sqrt{P_X}\right)f\right\|_2^2 , \\
 \E\left[g^2(Y)\right] & = \left\|\textsf{\small diag}\!\left(\sqrt{P_Y}\right)g\right\|_2^2 .
\end{align*}
Letting $a = \textsf{\small diag}\!\left(\sqrt{P_X}\right)f$ and $b = \textsf{\small diag}\!\left(\sqrt{P_Y}\right)g$, we have from Definition \ref{Def:Maximal Correlation}:
$$ \rho(X;Y) = \max_{\substack{{a \in \R^{|\X|}, \, b \in \R^{|\Y|}\,:} \\ {\sqrt{P_X} a = \sqrt{P_Y} b = 0} \\ {\|a\|_2^2 = \|b\|_2^2 = 1}}}{a^T B b} $$
where the optimization is over all $a \in \R^{|\X|}$ and $b \in \R^{|\Y|}$ because $P_X \in \Simplex_{\X}^{\circ}$ and $P_Y \in \Simplex_{\Y}^{\circ}$. Since $a$ and $b$ are orthogonal to the left and right singular vectors corresponding to the maximum singular value of unity of $B$, respectively, this maximization produces the second largest singular value of $B$ using an alternative version (see \cite[Lemma 2]{SubtleVariationalCharacterizationofSingularValue}) of the \textit{Courant-Fischer min-max theorem} (see \cite[Theorems 4.2.6 and 7.3.8]{BasicMatrixAnalysis}). This proves that $\rho(X;Y)$ is the second largest singular value of the DTM when $P_X \in \Simplex_{\X}^{\circ}$ and $P_Y \in \Simplex_{\Y}^{\circ}$. 

We finally argue that one can assume $P_X \in \Simplex_{\X}^{\circ}$ and $P_Y \in \Simplex_{\Y}^{\circ}$ without loss of generality. When $P_X$ or $P_Y$ have zero entries, $X$ and $Y$ only take values in the support sets $\textsf{\small supp}\!\left(P_X\right) = \left\{x \in \X:P_X(x) > 0\right\} \subseteq \X$ and $\textsf{\small supp}\!\left(P_Y\right) = \left\{y \in \Y:P_Y(y) > 0\right\} \subseteq \Y$ respectively, which means that $P_X \in \Simplex_{\textsf{\scriptsize supp}(P_X)}^{\circ}$ and $P_Y \in \Simplex_{\textsf{\scriptsize supp}(P_Y)}^{\circ}$. Let $B$ denote the ``true'' DTM of dimension $|\X| \times |\Y|$ corresponding to the pmf $P_{X,Y}$ on $\X \times \Y$, and $B_{\textsf{\scriptsize supp}}$ denote the ``support'' DTM of dimension $|\textsf{\small supp}\!\left(P_X\right)\!| \times |\textsf{\small supp}\!\left(P_Y\right)\!|$ corresponding to the pmf $P_{X,Y}$ on $\textsf{\small supp}\!\left(P_X\right) \times \textsf{\small supp}\!\left(P_Y\right)$. Clearly, $B$ can be constructed from $B_{\textsf{\scriptsize supp}}$ by inserting zero vectors into the rows and columns associated with the zero probability letters in $\X$ and $\Y$, respectively. Hence, $B$ and $B_{\textsf{\scriptsize supp}}$ have the same non-zero singular values (counting multiplicity), which implies that they have the same second largest singular value. This completes the proof. 
\end{proof}

\section{Proof of Proposition \ref{Prop:Contraction Coefficient Properties}}
\label{App: Proof of Properties of Contraction Coefficients}

\begin{proof} ~\newline
\textbf{Part 1:} The normalization of contraction coefficients is evident from the non-negativity of $f$-divergences and their DPIs \eqref{Eq:DPI for f-Divergence}. We remark that in the case of $\etaChi\!\left(P_X,P_{Y|X}\right) = \rho^2(X;Y)$ (where we use \eqref{Eq: Maximal Correlation as Contraction Coefficient}), $0 \leq \rho(X;Y) \leq 1$ is R\'{e}nyi's third axiom in defining maximal correlation \cite{RenyiCorrelation}. \\
\textbf{Part 2:} We provide a simple proof of this well-known property. Assume without loss of generality that $P_X \in \Simplex_{\X}^{\circ}$ by ignoring any zero probability letters of $\X$. If the resulting $|\X| = 1$, then $X$ is a constant $a.s.$, and the result follows trivially. So, we may also assume that $|\X| \geq 2$. Let $W \in \Simplex_{\Y|\X}$ denote the row stochastic transition probability matrix of the channel $P_{Y|X}$. Since $W$ is unit rank (with all its columns equal to $P_Y$) if and only if $X$ and $Y$ are independent (which means $P_{Y|X = x} = P_Y$ for every $x \in \X$), it suffices to show that $W$ is unit rank if and only if $\eta_{f}\!\left(P_X,P_{Y|X}\right) = 0$. 

To prove the forward direction, note that if $W$ is unit rank, all its rows are equal to $P_Y$ and we have $R_X W = P_Y$ for all $R_X \in \Simplex_{\X}$. Hence, $\eta_{f}\!\left(P_X,P_{Y|X}\right) = 0$ using Definition \ref{Def:Contraction Coefficient}, because $D_{f}(R_X W || P_X W) = 0$ for all input pmfs $R_X \in \Simplex_{\X}$. 

To prove the converse direction, we employ the argument in \cite{MastersThesis} that was used to prove the $\etaKL\!\left(P_{X},P_{Y|X}\right)$ case. Let $R_X = \delta_x$ for any $x \in \X$, where $\delta_x$ is the Kronecker delta pmf such that $\delta_x(x) = 1$ and $\delta_x(i) = 0$ for $i \in \X\backslash\!\{x\}$. (Note that such $R_X \neq P_X$ as $P_X \in \Simplex_{\X}^{\circ}$.) Then, for every $x \in \X$, the $x$th row of $W$ is $P_{Y|X = x} = \delta_x W = P_X W = P_Y$, where $\delta_x W = P_X W$ because $D_{f}(\delta_x W||P_X W) = 0$ and $f$ is strictly convex at unity. Hence, $W$ has unit rank.

We also note that in the $\etaChi\!\left(P_X,P_{Y|X}\right)$ case, this property of maximal correlation is R\'{e}nyi's fourth axiom in \cite{RenyiCorrelation}.
\\
\textbf{Part 3:} This is proven in \cite[Proposition III.3]{SDPIandSobolevInequalities}. \\
\textbf{Part 4:} This is proven in \cite[Theorem III.9]{SDPIandSobolevInequalities}. We also note that two proofs of the tensorization property of $\etaKL$ can be found in \cite{RenyiCorrHypercontractivity}, and a proof of the tensorization property of $\etaChi$ can be found in \cite{TensorizationRenyiCorr}. \\
\textbf{Part 5:} To prove the first part, let $P_{U,X,Y}$ denote the joint pmf of $(U,X,Y)$, and $S \in \Simplex_{\X|\U}$ and $W \in \Simplex_{\Y|\X}$ denote the row stochastic transition probability matrices corresponding to the channels $P_{X|U}$ and $P_{Y|X}$, respectively. Then, $SW \in \Simplex_{\Y|\U}$ is the row stochastic transition probability matrix corresponding to the channel $P_{Y|U}$ using the Markov property. Observe that for every pmf $R_U \in \Simplex_{\U}\backslash\!\left\{P_U\right\}$:
\begin{align*}
D_{f}\!\left(R_U S W ||P_U S W \right) \leq \, & \,\eta_{f}\!\left(P_X,P_{Y|X}\right) \eta_{f}\!\left(P_U,P_{X|U}\right) \cdot \\
& \, D_{f}\!\left(R_U||P_U\right)
\end{align*}
where $P_Y = P_U S W$, $P_X = P_U S$, and we use the SDPI \eqref{Eq:SDPI} twice. Hence, we have:
$$ \eta_{f}\!\left(P_U,P_{Y|U}\right) \leq \eta_{f}\!\left(P_U,P_{X|U}\right) \eta_{f}\!\left(P_X,P_{Y|X}\right) $$ 
using Definition \ref{Def:Contraction Coefficient}. 

The $\etaChi$ specialization of this result corresponds to the sub-multiplicativity property of the second largest singular value of the DTM. Such a sub-multiplicativity property also holds for the $i$th largest singular value of the DTM, cf. \cite[Theorem 2]{KangUlukus2011}, and is useful for distributed source and channel coding applications \cite{KangUlukus2011}. Moreover, the result in \cite[Theorem 2]{KangUlukus2011} is also proved in \cite[Theorem 3]{Calmonetal2017}, where the relation to principal inertia components and maximal correlation is expounded.

To prove the second part, observe that for fixed $P_{X,Y}$, and every $P_{U|X}$ such that $U \rightarrow X \rightarrow Y$ form a Markov chain and $\eta_{f}\!\left(P_U,P_{X|U}\right) > 0$ (which requires that $X$ is not a constant $a.s.$), we have:
\begin{equation}
\label{Eq:Sub-Multiplicativity 2}
\frac{\eta_{f}\!\left(P_U,P_{Y|U}\right)}{\eta_{f}\!\left(P_U,P_{X|U}\right)} \leq \eta_{f}\!\left(P_X,P_{Y|X}\right) 
\end{equation}
using the sub-multiplicativity property established above. Let $U = X \, a.s.$ so that $P_{U|X} \in \Simplex_{\X|\X}$ is the identity matrix. Then, $\eta_{f}\!\left(P_U,P_{X|U}\right) = 1$ and $\eta_{f}\!\left(P_U,P_{Y|U}\right) = \eta_{f}\!\left(P_X,P_{Y|X}\right)$ using Definition \ref{Def:Contraction Coefficient}. Therefore, equality can be achieved in \eqref{Eq:Sub-Multiplicativity 2}, and the proof is complete.

We remark that the $\etaChi$ case of this result is presented in \cite[Lemma 4]{Asoodehetal2016}, where the authors also prove that the optimal channel $P_{U|X}$ can be taken as $P_{Y|X}$ (so that $U$ is a copy of $Y$) instead of the identity matrix (where $U = X \, a.s.$). \\
\textbf{Part 6:} Following the remark after \cite[Theorem 5]{BoundsbetweenContractionCoefficients} in the conference version of this paper, we prove this result via the technique used to prove the $\etaKL$ case in \cite[Theorem 5]{BoundsbetweenContractionCoefficients}.

Let $W \in \Simplex_{\Y|\X}$ denote the row stochastic matrix of the channel $P_{Y|X}$, and $B$ denote the DTM of the joint pmf $P_{X,Y}$. Let us define a trajectory of spherically perturbed pmfs of the form \eqref{Eq: Spherically perturbed pmf}:
$$ R_X^{(\epsilon)} = P_X + \epsilon \, K_X \, \textsf{\small diag}\!\left(\sqrt{P_X}\right) $$
where $K_X \in \mathcal{S} \triangleq \{x \in \left(\R^{|\X|}\right)^{\! *}: \sqrt{P_X} x^T = 0, \, \|x\|_2 = 1\}$ is a spherical perturbation vector. When these pmfs pass through the channel $W$, we get the output trajectory:
\begin{equation}
\label{Eq:Output Trajectory}
R_X^{(\epsilon)} W = P_Y + \epsilon \, K_X B \, \textsf{\small diag}\!\left(\sqrt{P_Y}\right)
\end{equation}
where $B$ maps input spherical perturbations to output spherical perturbations \cite{InfoCouplingConf}. Now, starting from Definition \ref{Def:Contraction Coefficient}, we have:
\begin{align*}
\eta_{f}\!\left(P_X, P_{Y|X}\right) & = \sup_{\substack{R_X \in \Simplex_{\X}:\\0 < D_{f}\left(R_X || P_X\right) < +\infty}}{\frac{D_{f}(R_X W||P_X W)}{D_{f}(R_X||P_X)}} \\
& \geq \liminf_{\epsilon \rightarrow 0}{\sup_{K_X \in \mathcal{S}}{\frac{\left\|K_X B\right\|_2^2 + o(1)}{\left\|K_X\right\|_2^2 + o(1)}}} \\
& \geq \sup_{K_X \in \mathcal{S}}{\liminf_{\epsilon \rightarrow 0}{\frac{\left\|K_X B\right\|_2^2 + o(1)}{1 + o(1)}}} \\
& = \etaChi\!\left(P_X, P_{Y|X}\right) = \rho^2(X;Y)
\end{align*}
where the second inequality follows from \eqref{Def:Spherical Local f-Divergence} after restricting the supremum over all pmfs of the form \eqref{Eq: Spherically perturbed pmf} (where $\epsilon \neq 0$ is some sufficiently small fixed value) and then letting $\epsilon \rightarrow 0$, the third inequality follows from the minimax inequality, and the final equalities follow from \eqref{Eq: Contraction Coefficient as Singular Value} and \eqref{Eq: Maximal Correlation as Contraction Coefficient}, respectively. This completes the proof. 

We remark that the $P_X \in \Simplex_{\X}^{\circ}$ and $P_Y \in \Simplex_{\Y}^{\circ}$ assumptions, while useful for defining the aforementioned trajectories of pmfs, are not essential for this result. For the special case of $\etaKL$, this result was first proved in \cite{AhlswedeGacsHypercontraction}, and then again in \cite{NoninteractiveSimulations} and \cite[Theorem 5]{BoundsbetweenContractionCoefficients}\textemdash{}the latter two proofs both use perturbation arguments with different flavors.
\end{proof}

\section{Proof of Theorem \ref{Thm:Local Approximation of Contraction Coefficients}}
\label{App: Proof of Local Approximation of Contraction Coefficients}

\begin{proof}
We begin by defining the function $\tau:(0,\infty) \rightarrow [0,1]$:
$$ \tau(\delta) \triangleq \sup_{\substack{{R_X \in \Simplex_{\X}:}\\{0 < D_{f}(R_X||P_X) \leq \delta}}}{\frac{D_{f}(W R_X||W P_X)}{D_{f}(R_X||P_X)}} $$
so that what we seek to prove is:
$$ \lim_{n \rightarrow \infty}{\tau(\delta_n)} = \etaChi\!\left(P_X,P_{Y|X}\right) $$
for any decreasing sequence $\{\delta_n > 0:n \in \N\}$ such that $\lim_{n \rightarrow \infty}{\delta_n} = 0$. Note that the limit on the left hand side exists because as $\delta_n \rightarrow 0$, the supremum in $\tau(\delta_n)$ is non-increasing and bounded below by $0$.

We first prove that $\lim_{n \rightarrow \infty}{\tau(\delta_n)} \geq \etaChi\!\left(P_X,P_{Y|X}\right)$. To this end, consider a trajectory of spherically perturbed pmfs of the form \eqref{Eq: Spherically perturbed pmf}:
$$ R_X^{(n)} = P_X + \epsilon_n \, K_X \, \textsf{\small diag}\!\left(\sqrt{P_X}\right) $$
where $K_X \in \mathcal{S} = \{x \in \left(\R^{|\X|}\right)^{\! *}: \sqrt{P_X} x^T = 0, \, \|x\|_2 = 1\}$ is a spherical perturbation vector. The associated trajectory of output pmfs after passing through $W$ is given by \eqref{Eq:Output Trajectory}:
$$ R_X^{(n)} W = P_Y + \epsilon_n \, K_X B \, \textsf{\small diag}\!\left(\sqrt{P_Y}\right) $$
where $B$ denotes the DTM corresponding to $P_{X,Y}$. We ensure that the scalars $\{\epsilon_n \neq 0 : n \in \N\}$ that define our trajectory satisfy $\lim_{n \rightarrow \infty}{\epsilon_n} = 0$ and are sufficiently small such that:
$$ D_{f}(R_X^{(n)}||P_X) = \frac{f^{\prime \prime}(1)}{2}\epsilon_n^2 \left\|K_X\right\|_2^2 + o\!\left(\epsilon_n^2\right) \leq \delta_n $$
where we use \eqref{Def:Spherical Local f-Divergence} (and the fact that $f^{\prime \prime}(1)$ exists and is strictly positive), and the Bachmann-Landau asymptotic little-$o$ notation.\footnote{Given two functions $g(n)$ and $h(n)$ such that $h(n)$ is non-zero, we write $g(n) = o(h(n))$ if and only if $\lim_{n \rightarrow \infty}{g(n)/h(n)} = 0$.} By definition of $\tau$, we have:
\begin{align*}
\sup_{K_X \in \mathcal{S}}{\frac{D_{f}(R_X^{(n)} W||P_X W)}{D_{f}(R_X^{(n)}||P_X)}} & \leq \tau\!\left(\delta_n\right) \\
\lim_{n \rightarrow \infty}{\sup_{K_X \in \mathcal{S}}{\frac{\frac{f^{\prime \prime}(1)}{2}\epsilon_n^2 \left\|K_X B\right\|_2^2 + o\!\left(\epsilon_n^2\right)}{\frac{f^{\prime \prime}(1)}{2}\epsilon_n^2 \left\|K_X\right\|_2^2 + o\!\left(\epsilon_n^2\right)}}} & \leq \lim_{n \rightarrow \infty}{\tau\!\left(\delta_n\right)} \\
\lim_{n \rightarrow \infty}{\sup_{K_X \in \mathcal{S}}{\frac{\left\|K_X B\right\|_2^2 + o(1)}{1 + o(1)}}} & \leq \lim_{n \rightarrow \infty}{\tau\!\left(\delta_n\right)} \\
\etaChi\!\left(P_X,P_{Y|X}\right) & \leq \lim_{n \rightarrow \infty}{\tau\!\left(\delta_n\right)}
\end{align*}
where the second inequality uses \eqref{Def:Spherical Local f-Divergence} for both the numerator and denominator, and the final inequality uses the singular value characterization of $\etaChi\!\left(P_X,P_{Y|X}\right)$ in \eqref{Eq: Contraction Coefficient as Singular Value}.

We next prove that $\lim_{n \rightarrow \infty}{\tau(\delta_n)} \leq \etaChi\!\left(P_X,P_{Y|X}\right)$. Observe that for each $n \in \N$, there exists a pmf $R_X^{(n)} \in \Simplex_{\X}$ satisfying two properties:
\begin{enumerate}
\item $0 < D_{f}(R_X^{(n)}||P_X) \leq \delta_n$
\item $\displaystyle{0 \leq \tau(\delta_n) - \frac{D_{f}(R_X^{(n)} W||P_X W)}{D_{f}(R_X^{(n)}||P_X)} \leq \frac{1}{2^n}}$
\end{enumerate}
where the first property holds because $R_X \mapsto D_{f}(R_X||P_X)$ is a continuous map for fixed $P_X \in \Simplex_{\X}^{\circ}$ (which follows from the convexity of $f$), and the second property holds because $\tau(\delta_n)$ is defined as a supremum. Since $\tau(\delta_n)$ converges as $n \rightarrow \infty$, we have:\footnote{We use the fact that if two sequences $\{a_n \in \R:n \in \N\}$ and $\{b_n \in \R:n \in \N\}$ satisfy $\lim_{n \rightarrow \infty}{|a_n - b_n|} = 0$ and $\lim_{n \rightarrow \infty}{b_n} = b \in \R$, then $\lim_{n \rightarrow \infty}{a_n} = b$.} 
\begin{equation}
\label{Eq: Same limit relation}
\lim_{n \rightarrow \infty}{\frac{D_{f}(R_X^{(n)} W||P_X W)}{D_{f}(R_X^{(n)}||P_X)}} = \lim_{n \rightarrow \infty}{\tau\left(\delta_n\right)} \, .
\end{equation}
Using the sequential compactness of $\Simplex_{\X}$, we can assume that $R_X^{(n)}$ converges as $n \rightarrow \infty$ (in the $\ell^{2}$-norm sense) by passing to a subsequence if necessary. Since $\lim_{n \rightarrow \infty}{D_{f}(R_X^{(n)}||P_X)} = 0$, we have that $\lim_{n \rightarrow \infty}{R_X^{(n)}} = P_X$ due to the continuity of $R_X \mapsto D_{f}(R_X||P_X)$ for fixed $P_X \in \Simplex_{\X}^{\circ}$ and the fact that an $f$-divergence (where $f$ is strictly convex at unity) is zero if and only if its input pmfs are equal. Let us define the spherical perturbation vectors $\{K_X^{(n)} \in \mathcal{S}: n \in \N\}$ using the relation:  
$$ R_X^{(n)} = P_X + \epsilon_n \,K_X^{(n)} \textsf{\small diag}\!\left(\sqrt{P_X}\right) $$
where $\{\epsilon_n \neq 0 : n \in \N\}$ provide the appropriate scalings, and $\lim_{n \rightarrow \infty}{\epsilon_n} = 0$ (since $\lim_{n \rightarrow \infty}{R_X^{(n)}} = P_X$). The corresponding output pmfs are of the form \eqref{Eq:Output Trajectory} mutatis mutandis, and we can approximate the ratio between output and input $f$-divergences as before using \eqref{Def:Spherical Local f-Divergence}:
\begin{align*}
\frac{D_{f}(R_X^{(n)} W||P_X W)}{D_{f}(R_X^{(n)}||P_X)} & = \frac{\frac{f^{\prime \prime}(1)}{2}\epsilon_n^2 \left\|K_X^{(n)} B\right\|_2^2 + o\!\left(\epsilon_n^2\right)}{\frac{f^{\prime \prime}(1)}{2}\epsilon_n^2 \left\|K_X^{(n)}\right\|_2^2 + o\!\left(\epsilon_n^2\right)} \\
& = \frac{\left\|K_X^{(n)} B\right\|_2^2 + o(1)}{1 + o(1)} \, .
\end{align*}
Using the sequential compactness of $\mathcal{S}$, we may assume that $\lim_{n \rightarrow \infty}{K_X^{(n)}} = K_X^{\star} \in \mathcal{S}$ by passing to a subsequence if necessary. Hence, letting $n \rightarrow \infty$, we get:
$$ \lim_{n \rightarrow \infty}{\tau(\delta_n)} = \left\|K_X^{\star} B\right\|_2^2 \leq \etaChi\!\left(P_X,P_{Y|X}\right) $$
where the equality follows from \eqref{Eq: Same limit relation} and the continuity of the map $\left(\R^{|\X|}\right)^{\! *} \ni x \mapsto \left\|x B\right\|_2^2$, and the inequality follows from \eqref{Eq: Contraction Coefficient as Singular Value}. This completes the proof.
\end{proof}

\section{Proof of Corollary \ref{Cor:Contraction Coefficient Bound}}
\label{App: Proof of Contraction Coefficient Bound from General Contraction Coefficient Bound}

\begin{proof}
The convex function $f:(0,\infty) \rightarrow \R, \, f(t) = t \log(t)$ is clearly strictly convex and thrice differentiable at unity with $f(1) = 0$, $f^{\prime}(1) = 1$, $f^{\prime \prime}(1) = 1 > 0$, and $f^{\prime \prime \prime}(1) = -1$. Moreover, the function $g:(0,\infty) \rightarrow \R, \, g(x) = \frac{f(x) - f(0)}{x} = \log(x)$ is clearly concave (where $f(0) = \lim_{t \rightarrow 0^{+}}{f(t)} = 0$). So, to prove Corollary \ref{Cor:Contraction Coefficient Bound} using Theorem \ref{Thm:General Contraction Coefficient Bound}, it suffices to show that $f$ satisfies \eqref{Eq:General Pinsker Condition} for every $t \in (0,\infty)$ (cf. \cite{fDivergencePinsker}):
$$ \left(f(t) - f^{\prime}(1) (t-1)\right)\!\!\left(1 - \frac{f^{\prime \prime \prime}(1)}{3 f^{\prime \prime}(1)}(t-1)\right) \geq \frac{f^{\prime \prime}(1)}{2}(t-1)^2 $$
which simplifies to:
$$ 2t (t + 2) \log(t) - (5 t + 1)(t - 1) \geq 0 \, . $$
Define $h : (0,\infty) \rightarrow \R, \, h(t) = 2t (t + 2) \log(t) - (5 t + 1)(t - 1)$ and observe that:
\begin{align*}
h^{\prime}(t) & = 4(t+1)\log(t) - 8(t-1) \\
h^{\prime \prime}(t) & = 4 \log(t) + \frac{4}{t} - 4 \geq 0
\end{align*}
where the non-negativity of the second derivative follows from the well-known inequality:
$$ \forall x > 0, \enspace x \log(x) \geq x - 1 \, . $$
Since $h$ is convex (as its second derivative is non-negative) and $h(1) = h^{\prime}(1) = 0$, $t = 1$ is a global minimizer of $h$ and $h(t) \geq 0$ for every $t \in (0,\infty)$ as required. 

Finally, we can verify that the constant in Corollary \ref{Cor:Contraction Coefficient Bound} is:
$$ \frac{f^{\prime}(1) + f(0)}{\displaystyle{f^{\prime \prime}(1) \min_{x \in \X}{P_X(x)}}} = \frac{1}{\displaystyle{\min_{x \in \X}{P_X(x)}}} $$
which completes the proof.
\end{proof}

\section{Proof of \eqref{Eq:Looser Lower Bound on KL Divergence}}
\label{App: Proof of Looser Lower Bound on KL Divergence}

\begin{proof} 
Two proofs for \eqref{Eq:Looser Lower Bound on KL Divergence} are provided in the conference version of this paper \cite[Lemma 6]{BoundsbetweenContractionCoefficients}. We present the one with a convex analysis flavor. It involves recognizing that KL divergence is a Bregman divergence associated with the negative Shannon entropy function, and then exploiting the strong convexity of the negative Shannon entropy function to bound KL divergence. Let $H_{\textsf{\tiny neg}}:\Simplex_{\X} \rightarrow \R$ be the negative \textit{Shannon entropy} function, which is defined as:
$$ \forall Q_X \in \Simplex_{\X}, \enspace H_{\textsf{\tiny neg}}(Q_X) \triangleq \sum_{x \in \X}{Q_X(x) \log\!\left(Q_X(x)\right)} \, . $$
Since the \textit{Bregman divergence} corresponding to $H_{\textsf{\tiny neg}}$ is the KL divergence, cf. \cite{BregmanDivergence}, we have for all $S_X \in \Simplex_{\X}$ and $Q_X \in \Simplex_{\X}^{\circ}$:
$$ D(S_X||Q_X) = H_{\textsf{\tiny neg}}(S_X) - H_{\textsf{\tiny neg}}(Q_X) - J_X \nabla H_{\textsf{\tiny neg}}(Q_X) $$
where $J_X = S_X - Q_X$ is an additive perturbation vector, and $\nabla H_{\textsf{\tiny neg}}:\Simplex_{\X}^{\circ} \rightarrow \R^{|\X|}$ is the gradient of $H_{\textsf{\tiny neg}}$. Moreover, as $H_{\textsf{\tiny neg}}$ is twice continuously differentiable, we have: 
$$ \forall Q_X \in \Simplex_{\X}^{\circ}, \enspace \nabla^2 H_{\textsf{\tiny neg}}(Q_X) = \textsf{\small diag}\!\left(Q_X\right)^{-1} \succeq_{\textsf{\tiny PSD}} I $$
where $\nabla^2 H_{\textsf{\tiny neg}}:\Simplex_{\X}^{\circ} \rightarrow \R^{|\X| \times |\X|}$ denotes the Hessian matrix of $H_{\textsf{\tiny neg}}$, and $I \in \R^{|\X| \times |\X|}$ denotes the identity matrix. (Note that $\textsf{\small diag}\!\left(Q_X\right)^{-1} - I$ is positive semidefinite because it is a diagonal matrix with non-negative diagonal entries.) Recall from \cite[Chapter 9]{ConvexOptimization} that a twice continuously differentiable convex function $f : S \rightarrow R$ with open domain $S \subseteq \R^n$ is called \textit{strongly convex} if there exists $m > 0$ such that for all $x \in S$, $\nabla^2 f(x) \succeq m I$. This means that $H_{\textsf{\tiny neg}}$ is strongly convex on $\Simplex_{\X}^{\circ}$. A consequence of this strong convexity is the following quadratic lower bound \cite[Chapter 9]{ConvexOptimization}: 
\begin{align*}
H_{\textsf{\tiny neg}}(S_X) & \geq H_{\textsf{\tiny neg}}(Q_X) + J_X \nabla H_{\textsf{\tiny neg}}(Q_X) + \frac{1}{2}\left\|J_X\right\|_2^2 \\
\Leftrightarrow \enspace D(S_X||Q_X) & \geq \frac{1}{2}\left\|J_X\right\|_2^2
\end{align*}
for every $S_X \in \Simplex_{\X}$ and $Q_X \in \Simplex_{\X}^{\circ}$, where we allow $S_X \in \Simplex_{X}\backslash\Simplex_{X}^{\circ}$ due to the continuity of $H_{\textsf{\tiny neg}}$. This is precisely what we get if we loosen \eqref{Eq: Intermediate Pinsker} in the proof of Lemma \ref{Lemma:Distribution Dependent KL Divergence Lower Bound} using $\left\|J_X\right\|_1 \geq \left\|J_X\right\|_2$ and \eqref{Eq: phi Bound}. Finally, we have for every $S_X \in \Simplex_{\X}$ and $Q_X \in \Simplex_{\X}^{\circ}$:
$$ D(S_X||Q_X) \geq \frac{1}{2} \left\|J_X\right\|_2^2 \geq \frac{\displaystyle{\min_{x \in \X}{Q_X(x)} }}{2} \, \chi^2(S_X||Q_X) $$
where the second inequality follows from \eqref{Eq:Chi-Squared Divergence}. This trivially holds for all $Q_X \in \Simplex_{\X}\backslash\Simplex_{\X}^{\circ}$ as well.
\end{proof}

\section*{Acknowledgment}

A. Makur would like to thank Prof. Yury Polyanskiy for stimulating discussions related to Theorem \ref{Thm: Equivalent Characterizations of Less Noisy Preorder}, and more generally, for discussions regarding contraction coefficients.

\bibliographystyle{IEEEtran}
\bibliography{contractionrefs3}

\end{document}